\documentclass[12pt,letterpaper]{article}

\usepackage[letterpaper,margin=1.25in]{geometry}

\usepackage{setspace}
\usepackage{float}

\usepackage{amsmath, amssymb, amsthm}
\usepackage{graphicx}
\usepackage{booktabs}
\usepackage[unicode=true,
 bookmarks=false,
 breaklinks=false,pdfborder={0 0 0},pdfborderstyle={},backref=false,colorlinks=true]
 {hyperref}
\hypersetup{citecolor=black, linkcolor=black, urlcolor=black}
\usepackage{bbm}
\usepackage{framed}

\usepackage{tikz}
\usepackage{caption, subcaption}
\usetikzlibrary{patterns}
\usepackage{natbib}

\theoremstyle{definition}

\newtheorem{lemma}{Lemma}

\newtheorem{corollary}{Corollary}

\theoremstyle{definition}
\newtheorem{definition}{Definition}
\newtheorem*{definition*}{Definition} 

\newtheorem{remark}{Remark}

\newtheorem*{lemma*}{Lemma}

\newtheorem{proposition}{Proposition}
\newtheorem{Theorem}{Theorem}

\usepackage{enumitem}

\theoremstyle{definition}

\usepackage{etoolbox}
\newtheorem{example}{Example}
\AtEndEnvironment{example}{\hfill$\square$}

\newcommand{\definitionname}{Definition}

\newtheoremstyle{nopunct}
  {3pt}{3pt}
  {\itshape}
  {}
  {\bfseries}
  {}
  {0.5em}
  {\thmname{#1}\thmnote{ #3}}
\theoremstyle{nopunct}
\newtheorem*{defcontx}{} 

\title{Robustness of Persuasion to Receiver Preferences\thanks{Gradwohl gratefully acknowledges support from the Israel Science Foundation, Grant 2039/24. Hu and Smorodinsky gratefully acknowledge support from the German Research Foundation, Grant 514505843 and the Israel Science Foundation, Grant 2646/22. Hu additionally acknowledges the generous financial support from Israel Planning and Budgeting Committee (VATAT). We thank Itai Arieli, Yakov Babichenko and Doron Ravid for helpful comments.}}

\author{
Ronen Gradwohl\thanks{Department of Economics, University of Haifa, Israel. \texttt{rgradwohl@econ.haifa.ac.il  }}
\and
Fengming Hu\thanks{Faculty of Data and Decision Sciences, Technion, Israel. \texttt{fengming.hu@campus.technion.ac.il}}
\and
Rann Smorodinsky\thanks{Faculty of Data and Decision Sciences, Technion, Israel. \texttt{rann@technion.ac.il}}
}

\date{} 

\begin{document}


\maketitle

\begin{abstract}
We study the robustness of Bayesian persuasion to (Knightian) uncertainty about the receiver’s preferences. We consider two properties:  {\em continuity}, in which only the modeler lacks precise knowledge while the model's predictions are nonetheless accurate; and {\em robustness}, in which the sender also lacks precise knowledge, but where the outcome is insensitive to this ignorance.
For the latter, we model the sender’s behavior as either maximizing worst-case utility or minimizing the regret.
We show that the notions are equivalent. In addition, they are equivalent to robustness to tie-breaking. Finally, we show that robustness holds generically.
\end{abstract}

\section{Introduction}
Bayesian persuasion, introduced by \citet*{Kamenica2011BayesianPersuasion}, is one of the most widely studied models in microeconomics. This is likely due to its simplicity and its applicability to various domains in economics, law, political science and more \citep*{Kamenica2019BayesianDesign}. In its basic form, there are two agents: one agent, the `sender', has access to information about a state of the world, but takes no action. Instead, the sender shares her information, partially and strategically, with a second agent, the `receiver'. The receiver then takes an action that determines the utilities of both agents. A main assumption in Bayesian persuasion is that the sender can devise and commit to her information-sharing procedure before she observes the realized state. The central observation, dating back to work by \citet*{aumann1995repeated} on repeated zero-sum games with incomplete information on one side, is that the maximal utility of the sender is equal to the value of the concavification of her indirect utility function---the implied utility function given that the receiver best-replies to his updated belief. 

How robust is this observation, and to what extent does it depend on strong modeling assumptions? In the so-called Wilson Doctrine, \citet*{wilson1987rgame}
argues that 
{\it ``Game theory has a great advantage in explicitly analyzing the consequences of trading rules that presumably are really common knowledge; it is deficient to the extent it assumes other features to be common knowledge, such as one agent's probability assessment about another's preferences or information.''} 
However, the Bayesian persuasion model does assume that the receiver's preferences are common knowledge.
In this paper, we ask whether this assumption is crucial for the model's main observation to hold, or whether the model is robust.

More specifically, we consider two conceptually distinct notions of robustness. The first notion, which we call {\em continuity}, supposes that the sender and receiver do have complete knowledge of all features of their interaction, but that the analyst or modeler lacks some knowledge. The model is {\em continuous} if, despite the lack of complete knowledge, the modeler can nonetheless accurately predict the outcome of the interaction \citep*{kajii1997robustness}.

The second notion, which we henceforth refer to as {\em robustness}, is more along the lines of the Wilson Doctrine, and supposes that the sender also does not have complete knowledge. The model is {\em robust} if, despite the sender's lack of complete knowledge, the outcome of the interaction is nonetheless close to the benchmark in which all features are common knowledge.

Formally, we ask whether the concavification result is continuous and robust when the precise preferences of the receiver are no longer common knowledge. We assume that the lack of common knowledge manifests as infinitesimal ignorance. Specifically, for every action and state, the receiver's utility is only known to be within some small interval, without any distributional assumptions (e.g., ambiguity or Knightian uncertainty). The Bayesian persuasion model is continuous if the maximal utility attained by the sender---the concavification value---can be (approximately) predicted by the modeler even if the modeler only knows the receiver's utilities within such small intervals.

To study robustness, we first elaborate on the sender's strategy in light of her ignorance.
To model this strategy, we invoke two paradigms from decision theory \citep*{savage1951theory}: max-min utility and regret minimization. A max-min sender chooses a strategy that maximizes her worst-case utility. A regret-minimizing sender chooses a strategy that minimizes her worst-case regret, where regret is the difference in utility between the actual decision made and what would have been the optimal decision in hindsight. Under these two paradigms, the Bayesian persuasion model is then robust if the maximal utility attained by the sender---the concavification value---is also (close to) what she obtains when she only knows the receiver's utilities within small intervals, and when she chooses her strategy either to maximize worst-case utility or to minimize regret.

We provide an example that motivates our model and results. 

\begin{example}

\label{sec:warm_examples}
A candidate applying for a position that requires substantial expertise (e.g., an academic position) is either qualified or unqualified;  the state space is $\Omega=\{0,1\}$, and the prior is $\mu_0=0.5$.
The candidate (the sender, she) controls her reputation by suggesting whom to approach for letters of reference (Bayesian persuasion).
The recruiter (the receiver, he) has three options: reject the candidate ($a_1$), offer a short-term contract ($a_2$), or offer a long-term contract ($a_3$).

If the candidate is unqualified, the receiver prefers to reject her; of the two remaining options, he prefers to offer a short-term contract. If the candidate is qualified, rejection is the receiver’s worst option, but he is almost indifferent between a short-term and a long-term contract.
Receiver and sender utilities, denoted by $u$ and $v$, respectively, are specified in Figure~\ref{table:uvalue_example1_intro}. In particular, in state $1$, the receiver's utility for action $a_3$ lies within an interval of length $\delta\ll 1$ around his utility for action $a_2$.

\begin{figure}[h!]
  \centering
  \begin{subfigure}[b]{0.5\linewidth}
    \begin{tabular}{c|c|c|c}
\hline
$u(~,~)$ & \textbf{$a_1$} & { \textbf{$a_2$}}& \textbf{$a_3$}\\ \hline
$\omega=0$ & 0.84 & $ 0.48 $& 0.05   \\ \hline
${\omega}=1$ & 0.72 & 0.84 & $0.84\pm \tfrac{\delta}{2}$   \\ \hline
\multicolumn{1}{p{2cm}}{}\\ \hline 
$v(~,~)$ & {$a_1$} & { \textbf{$a_2$}}& \textbf{$a_3$}\\ \hline
$\omega=0,1$ & 0 & 0.1 & 1  \\ \hline
\multicolumn{1}{p{2cm}}{}\\
\multicolumn{1}{p{2cm}}{}
\end{tabular}
    \caption{receiver and sender utilities}
    \label{table:uvalue_example1_intro}
  \end{subfigure}\hfill
 \begin{subfigure}[b]{0.5\textwidth}
      \tikzset{every picture/.style={inner sep=0, outer sep=0}}
    \begin{tikzpicture}[ scale = 0.35]
         \draw[white] (-2,0)--(-0.5,0);
  \node (a11) at (0,8.4)[label={[left=1pt]\textcolor{black}{$a_1$}}]{};
\node (a21) at (0,4.8) [label={[left=1pt]\textcolor{blue}{$a_2$}}]{};
\node (a31) at (0,0.5) [label={[left=1pt]\textcolor{red}{$a_3$}}]{};

\node (zero) at (0,0) [label={[below=4pt]$0$}]{};
\node (zero) at (12,0) [label={[below=4pt]$0.75$}]{};
\node (zero) at (8,0) [label={[below=4pt]$\mu_0$}]{};
\node (zero) at (16,0) [label={[below=4pt]$1$}]{};

  \node (a12) at (16,7.2){};
\node (a22) at (16,8.4) {};
\node (a32) at (16,8.4) {};

    \draw[black]  (a11) -- (a12);
     \draw[black]  (a21) -- (a22);
     \draw[black]  (a31) -- (a32);

  \draw[->,black]  (0,0) -- (0,11);
     \draw[->,black]  (0,0) -- (18,0);

  \draw[<->,dashed, thick,black]  (0,10.5) -- (12,10.5);
  
  \draw[<->,dashed, thick, blue]  (12,10.5) -- (16,10.5);

\filldraw[red] (16,8.5) circle (6pt);

   \draw[dashed,gray]  (12,0) -- (12,11);
     \draw[dashed,black]  (16,0) -- (16,11);             
\node[] at (6,-2) {receiver's belief};      
   
    \end{tikzpicture}
    \caption{best-reply correspondences}
    \label{fig:vvalue_example1_intro}
    \end{subfigure}
  \caption{Example 1.}
  \label{fig:uvalue_example_full_page2}
\end{figure}

Can we neglect $\delta$ when it is very small? In other words, how accurate are our predictions if we simply set  $\delta$ to zero? 
Note that when $\delta$ is zero, $\mu=1$ is the unique belief at which $a_3$ is a best reply for the receiver (see Figure~\ref{fig:vvalue_example1_intro}). 
Under $\delta=0$, full revelation yields an expected utility for the sender of $\mu_0 \cdot 1 + (1-\mu_0)\cdot 0=0.5$ under sender-preferred tie-breaking.

We consider the following three cases for the sender:
\begin{enumerate}
    \item She knows the receiver's utility and maximizes her expected payoff.
    \item She does not know the receiver's utility, and she is \emph{max-min}: she chooses a signal policy that maximizes her worst-case expected utility.
    \item She does not know the receiver's utility, and she is \emph{regret-minimizing}: she chooses a signal policy that minimizes her worst-case regret. 
\end{enumerate}

How good are our predictions for these three situations? The first corresponds to continuity, and the latter two correspond to the two forms of robustness. 

This example is neither continuous nor robust under either behavioral criterion. 
For arbitrary $\delta>0$, we can find a receiver utility that satisfies $u(a_3,1)<0.84$.
However, if $u(a_3,1)<0.84$, the sender's most preferred action $a_3$ is never a best reply for the receiver, regardless of the latter's posterior belief. 
In this case, the corresponding optimal signal policy yields an expected utility below $v(a_2)=0.1$. This value is bounded away from the expected utility at $\delta=0$, so continuity fails.

We now consider robustness. 
In this paper, we do not characterize the max-min and regret-minimizing policies explicitly.\footnote{See \citet{Babichenko2022Regret-minimizingPersuasion,sapiro2024persuasion,parakhonyak2026persuasion} for further details on max-min and regret-minimizing policies in binary-action settings.}
Instead, we explore the connections among these notions.
Note that neither the max-min policy nor the regret-minimizing policy can give the sender a higher payoff than she could obtain by implementing the optimal policy for each receiver.
In this example, consider a receiver whose utility satisfies $u(a_3,1)<0.84$.
Even the optimal policy for this receiver gives the sender an expected utility below $v(a_2)=0.1$, so her expected utility under either behavioral criterion is also below $0.1$.
This is far from the prediction of $0.5$ for the case $\delta=0$. Hence, robustness fails under both behavioral criteria.
The implication illustrated here---that failure of continuity implies failure of robustness---is stated formally in Lemma~\ref{prop:continuous_robust_nece}.

\begin{figure}[h]
    \centering
      \begin{subfigure}[b]{0.35\textwidth}
      \tikzset{every picture/.style={inner sep=0, outer sep=0}}
    \begin{tikzpicture}[baseline = 3cm, scale = 0.3]
    
  \node (va2) at (16,10){};
\node (va41) at (12,2) {} ;
\node (va42) at (16,2) {} ;

\node (zero) at (0,0) [label={[below=4pt]$0$}]{};
\node (zero) at (12,0) [label={[below=4pt]$0.75$}]{};
\node (zero) at (16,0) [label={[below=4pt]$1$}]{};
\node (zero) at (8,0) [label={[below=4pt]$\mu_0$}]{};

  \draw[->,black]  (0,0) -- (0,11);
     \draw[->,black]  (0,0) -- (18,0);

    \filldraw[red] (16,10) circle (3pt);
   \draw[thick,red]  (16,0) -- (16,10);
   
  \draw[thick,blue]  (va41) -- (va42);
     \draw[dashed,gray]  (12,0) -- (va41);
      \draw[dashed,gray]  (16,0) -- (16,5);
      
          \draw[very thick,black]  (va2)  --  (0,0);
         \draw[dashed,gray]  (0,0)  --  (va41);
          \draw[<->,very thick,blue]  (8,1)  --  (8,5);
     
\node[] at (6,-2) {receiver's belief};      
        
    \end{tikzpicture}
    \caption{$\delta=0$}
    \label{fig:inconsistent_case_regret_1}

    \end{subfigure}
    \hspace{3cm}
       \begin{subfigure}[b]{0.35\textwidth}
      \tikzset{every picture/.style={inner sep=0, outer sep=0}}
    \begin{tikzpicture}[baseline = 3cm, scale = 0.3]
   
  \node (va2) at (16,10){};
\node (va41) at (12,2) {} ;
\node (va42) at (16,2) {} ;

\node (zero) at (0,0) [label={[below=4pt]$0$}]{};
\node (zero) at (16,0) [label={[below=4pt]$1$}]{};
\node (zero) at (12,0) [label={[below=4pt]$0.75$}]{};
\node (zero) at (8,0) [label={[below=4pt]$\mu_0$}]{};

  \draw[->,black]  (0,0) -- (0,11);
     \draw[->,black]  (0,0) -- (18,0);
  \draw[thick,blue]  (va41) -- (va42);
     \draw[dashed,gray]  (12,0) -- (va41);
      \draw[dashed,gray]  (16,0) -- (16,5);
      
          \draw[dashed,gray]  (16,2)  --  (0,0);
         \draw[very thick,black]  (0,0)  --  (va41);
          \draw[<->,very thick,red]  (8,0.75)  --  (8,1.75);
     
\node[] at (6,-2) {receiver's belief};      
        
    \end{tikzpicture}
    \caption{$u(a_3,1)<0.84$}
    \label{fig:inconsistent_case_regret_2}

    \end{subfigure}
    \caption{Sender's indirect utilities. The vertical arrows represent regrets.}
    \label{fig:inconsistent_case}
\end{figure}

We further use this example to show the distinction between two behavioral criteria.
The max-min sender is concerned that $u(a_3,1)<0.84$, in which case full revelation reduces her chance of receiving the short-term contract (shown by the dashed line in Figure~\ref{fig:inconsistent_case_regret_2}). 
It is straightforward  that the max-min signal policy splits the prior between the posterior beliefs $0$ and $0.75$.
This policy performs poorly under the regret-minimization criterion: its maximal regret (the blue arrow in Figure~\ref{fig:inconsistent_case_regret_1}), attained when $a_3$ is viable, is at least $0.4$. 
By contrast, whenever $a_3$ is viable, the full-revelation signal policy, which need not itself be regret-minimizing, induces the receiver to choose $a_3$ with probability $0.5$. The maximal regret of full revelation is attained when $a_3$ is not viable and is less than $0.1$ (the red arrow in Figure~\ref{fig:inconsistent_case_regret_2}).
Thus, the max-min policy is not regret-minimizing.
\end{example}

The preceding example illustrates that with respect to receiver preferences, continuity and robustness need not hold.
In addition, observe that the prediction of 0.5 for $\delta=0$ relies on sender-optimal tie-breaking for the receiver, and that a different tie-breaking rule would lead to a different prediction.
Hence, the example is also not robust to tie-breaking. Although these notions are conceptually distinct, their simultaneous failure suggests that they are related.

The main result of this paper is that all four notions are equivalent in every Bayesian persuasion model with finite states and finite actions: continuity with respect to receiver preferences, max-min robustness and regret-minimizing robustness to uncertainty about those preferences, and robustness to tie-breaking. Our primary focus is on the first three. The equivalence follows because all four notions have the same necessary and sufficient conditions. 

In Section~\ref{sec:model}, we model uncertainty about receiver preferences as infinitesimally small, non-probabilistic (Knightian) uncertainty. Proposition~\ref{lemma:exis_max_regret} shows that max-min and regret-minimizing policies always exist in Bayesian persuasion models with finite state and action spaces and uncertain receiver utilities. We then define continuity and the two forms of robustness in Definitions~\ref{def:def_continuous_BP}, \ref{def:def_robust_BP_1}, and~\ref{def:def_robust_BP_2}.

Section~\ref{sec:main_equivelant} establishes the equivalence among the first three notions (Theorem~\ref{theorem:bp_equivalent}). All three hold if and only if there exists an optimal policy supported on non-critical posterior beliefs (Definition~\ref{def:critical_mu}). 
The sufficiency and necessity directions are proved in Sections~\ref{sec:sufficient}~and~\ref{sec:nece_example} (Propositions~\ref{prop:suffi_all}~and~\ref{prop:necessary_all}), respectively.
Proposition~\ref{lemma:set_upper_continuous} establishes an unconditional upper-semicontinuity result, which is used in both proofs and also yields a weaker prediction that always holds.

In Section~\ref{sec:tie_eq}, we define robustness to tie-breaking (Definition~\ref{defi:robust_tie_breaking}) and establish its equivalence with the first three notions (Theorem~\ref{thy:equive_tie_breaking}). 
The proof shows that robustness to tie-breaking has the same necessary and sufficient conditions as the first three notions (Proposition~\ref{prop:nece_suf_tie_breaking}). 
We then provide a numerical necessary and sufficient condition for all four notions (Proposition~\ref{prop:num_nece_suff}). 
This condition corresponds to a special tie-breaking rule (Definition~\ref{def:special_tie}) and is closely related to the techniques used in the necessity proofs in Section~\ref{sec:nece_example}. Combining this equivalence with the genericity result in \cite{lipnowski2024perfect}, we show that continuity and robustness hold generically (Corollary~\ref{thy:generic_robust}). Finally, Section~\ref{sec:final_discuss} provides further examples of non-robustness.

\subsection{Related literature}
\label{sec:lit}

The motivation for this work can be traced back to the literature on the robustness of equilibria to incomplete information \citep*{kajii1997robustness,ui2001robust,morris2005generalized}. 
In these papers, an incomplete-information game is close to a benchmark complete-information game when its payoffs coincide with the benchmark payoffs with high probability. They ask whether a given benchmark equilibrium  (or a given set of equilibria) can be approximated by an equilibrium of every sufficiently close incomplete-information game.
Related work asks whether predictions of the original game remain valid across unknown private-information structures or information-acquisition technologies \citep{bergemann2013robust,bergemann2017belief,denti2023robust}.
Although our work is not as general, it diverges from the above literature since these papers maintain a Bayesian framework, whereas we study robustness to non-probabilistic uncertainty.

Our focus is on Bayesian persuasion \citep*{aumann1995repeated,Kamenica2011BayesianPersuasion,Kamenica2019BayesianDesign}, where we study continuity and robustness. 
Within the large literature on Bayesian persuasion, two lines of research are particularly relevant to our work. One studies optimal signal policies in environments that depart from the benchmark model. The other asks whether the predictions of the benchmark Bayesian persuasion model remain valid under perturbations or uncertainties. Our work contributes to the second line.

Numerous papers focus on the first. 
\citet*{Dworczak2022PreparingPersuasion} characterize the optimal policy when the sender is uncertain about the exogenous sources of information the receiver may learn from and about strategy selection. 
In settings with binary receiver actions, several papers study optimal signal policies under various forms of uncertainty using different criteria  \citep*{Kosterina2022PersuasionBeliefs,Babichenko2022Regret-minimizingPersuasion, Feng2024Rationality-RobustResponse,sapiro2024persuasion,parakhonyak2026persuasion}.
This line of research naturally extends to algorithmic questions. For example, \citet*{Zu2021LearningIgnorance,Castiglioni2023RegretModels} study online  persuasion algorithms under uncertainty.
Beyond the distinctions in research direction, the policies developed in these papers do not apply to our results, because they either address substantially different environments or restrict the receiver to binary actions.

Several papers contribute to both lines of research. \citet{Hu2021RobustReceiver} and \citet{MATYSKOVA2023costly} study settings where the receiver has private information. Both papers characterize optimal policies and show that the sender's optimal expected utility approaches the benchmark Bayesian-persuasion value as the private information becomes uninformative. 
\citet*{lipnowski2022persuasion} characterize equilibria and optimal policies when the sender has imperfect commitment. They also show that, generically, as the sender's credibility approaches one, she can secure expected utility arbitrarily close to the full-commitment benchmark.
Unlike these papers, we study a different deviation from the benchmark, namely, uncertainty about receiver preferences. 

Within the second line of research, \citet*{lipnowski2024perfect} study the robustness of Bayesian persuasion to the receiver's tie-breaking rule. 
Their central result is that Bayesian persuasion is generically robust to tie-breaking. 
Our central result instead establishes the equivalence of four distinct notions with a focus on robustness to receiver preferences.
Underlying their genericity result is a condition that holds generically and is sufficient, but not necessary. By contrast, underlying our equivalence results is an interpretable necessary and sufficient structural characterization that is not provided in their work. Our characterization also complements their analysis of tie-breaking.

More broadly, \citet{von2010leadership} study robustness with respect to tie-breaking in Stackelberg games, while \citet{gan2025robust, lin2025generalized} study robustness to approximately optimal follower responses in Stackelberg generalized principal-agent games.
The latter two rely on positive inducibility gaps to obtain their robustness results.
These results do not shed light on our model as the inducibility gap is not positive.

Finally, a parallel line of literature studies robustness to model assumptions in cheap talk games \citep*{diehl2021non,dilme2022robust,arieli2023informationally,steg2023robust}. 

\section{Model and Preliminaries}
\label{sec:model}
A Bayesian persuasion model (BP model) is a tuple $G=\{\Omega, A, \mu_0, u, v\}$.
$\Omega$ is a finite, non-singleton set of states, and the state is realized according to a full-support prior $\mu_0 \in \Delta (\Omega)$.
$A$ is a finite, non-singleton set of actions available to the receiver.
$u: A\times \Omega \rightarrow \mathbb{R}$ and $v: A\times \Omega \rightarrow \mathbb{R}$  denote the utility functions of the receiver and sender, respectively.
We assume that the sender's utilities are bounded and nonnegative, and normalize them so that $v: A\times \Omega \rightarrow [0,1]$.\footnote{Since our analysis focuses on changes in the receiver’s utilities, we impose a uniform bound on the sender’s utilities for simplicity, without loss of generality.}

The game proceeds as follows. The sender has a message space and devises a signal policy mapping each state to a (possibly random) message.
Then the state is realized and a message is publicly announced according to that signal policy, after which the receiver takes an action and both utilities are realized. 

It follows from the celebrated Aumann-Maschler splitting lemma \citep*{aumann1995repeated} that a signal policy is equivalent to a probability measure over posterior beliefs whose average is equal to the prior.
We denote the set of all signal policies by
\begin{equation}
  \Pi :=\left\{\pi\in \Delta(\Delta(\Omega)): \int_{\Delta(\Omega)} \mu \pi(d\mu)=\mu_0 \right\} ~.
  \label{eq:signal_policy}
\end{equation}
A posterior belief $\mu$  belongs to the support of $\pi$, $\mu\in\operatorname{supp}(\pi)$, if every open neighborhood $Q$ of $\mu$ satisfies $\pi(Q)>0$.
We endow $\Delta(\Delta(\Omega))$ with the weak topology.
Since $\Omega$ is finite, $\Delta(\Omega)$ is a compact metric space and, in particular, is complete and separable.
Consequently, $\Delta(\Delta(\Omega))$ is compact under the weak topology by Prokhorov's theorem.\footnote{A sequence of probability measures $\{\pi_n\}_{n\in \mathbb{N}^+}$ weakly converges to $\pi$ if for every bounded and continuous function $f$, $\mathbb{E}_{\pi_n}[f]\rightarrow \mathbb{E}_{\pi}[f]$. Since $\Delta(\Omega)$ is a compact metric space, the weak topology on $\Delta(\Delta(\Omega))$  is metrizable. Therefore, we use ``sequences'' instead of ``nets'' \citep{billingsley2013convergence}.  } 
Appendix~\ref{app:mode_proofs} proves the following observation.
\begin{lemma}
\label{lemma:Pi_compact}
    $\Pi$ is compact under the weak topology.
\end{lemma}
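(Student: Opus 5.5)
The plan is to show that $\Pi$ is a closed subset of $\Delta(\Delta(\Omega))$. Since $\Delta(\Delta(\Omega))$ is compact under the weak topology (by Prokhorov's theorem, as noted above), and a closed subset of a compact space is compact, this gives the lemma. The topology is metrizable, so it suffices to check sequential closedness.

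Take a sequence $\{\pi_n\}\subset \Pi$ that converges weakly to some $\pi\in\Delta(\Delta(\Omega))$; we need $\int \mu\,\pi(d\mu)=\mu_0$. The barycenter condition is a vector equation, one coordinate for each $\omega\in\Omega$. For each $\omega\in\Omega$, consider the coordinate map $f_\omega:\Delta(\Omega)\to\mathbb{R}$, $f_\omega(\mu)=\mu(\omega)$. This map is continuous, since $\Delta(\Omega)$ carries the Euclidean topology of the simplex, and it is bounded, with values in $[0,1]$. By the definition of weak convergence,
\[
\int f_\omega\, d\pi_n \to \int f_\omega\, d\pi .
\]
Each left-hand side equals $\mu_0(\omega)$ because $\pi_n\in\Pi$. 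Hence $\int \mu(\omega)\,\pi(d\mu)=\mu_0(\omega)$ for every $\omega$, that is, $\pi\in\Pi$. Since $\Omega$ is finite, these finitely many coordinate identities are exactly the vector barycenter condition in \eqref{eq:signal_policy}, so $\Pi$ is closed.

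There is no real obstacle here. The only point needing care is to note explicitly that the barycenter constraint is a finite family of integrals of bounded continuous functions, so that weak convergence applies directly. One might also remark that $\Pi$ is nonempty (it contains $\delta_{\mu_0}$), although nonemptiness is not needed for compactness.
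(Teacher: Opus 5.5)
Your proposal is correct and follows essentially the same approach as the paper: the paper also shows that $\Pi$ is the preimage of $\{\mu_0\}$ under the continuous barycenter map $F(\pi)=\int_{\Delta(\Omega)}\mu\,\pi(d\mu)$, hence closed in the compact space $\Delta(\Delta(\Omega))$. Your coordinate-by-coordinate sequential argument simply makes the continuity of $F$ explicit, which the paper asserts without proof.
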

We now extend the standard BP model to allow the receiver’s payoff function to vary across payoff types.
A BP model with a receiver of type $\theta$ is denoted by $G(\theta)=\{\Omega, A, \mu_0, u_{\theta}, v\}$, where the receiver's utility profile is uniquely determined by the type $\theta:=(u_{\theta}(a,\omega))_{(a,\omega)\in A\times \Omega}\in \mathbb{R}^{|\Omega|\times |A|}$. 
We endow the type space with the sup norm. 
We say that a sequence of BP models
$\{G(\theta_n)\}_{n\in\mathbb{N}^+}$ converges to $G(\theta_0)$ in receiver utilities if $\displaystyle\lim_{n\rightarrow\infty}\theta_n=\theta_0$.  
For any $\delta>0$, we denote the closed $\delta$-neighborhood of a type $\theta$ by $\operatorname{Ball}_{\delta}(\theta)$.

We denote the receiver's best-reply correspondence by $A^*:\mathbb{R}^{|\Omega|\times |A|}\times\Delta(\Omega)\rightrightarrows A$,
where for each $(\theta,\mu)$
\begin{equation}
A^*(\theta,\mu):=\{a\in A: \sum_{\omega\in \Omega} u_{\theta}(a,\omega) \mu(\omega)  \geq \sum_{\omega\in \Omega} u_{\theta}(b,\omega) \mu(\omega) \quad \forall b\in A  \}~.
\end{equation}
We denote the receiver's best-reply region by $B:\mathbb{R}^{|\Omega|\times |A|}\times A\rightrightarrows\Delta(\Omega)$, where for each $(\theta,a)$,
\begin{equation}
B(\theta, a):= \{\mu \in \Delta(\Omega): \  a \in A^*(\theta,\mu)\}~.  
\end{equation}
For each $(\theta,a)$, $B(\theta,a)$ is a compact and convex subset of $\Delta(\Omega)$, possibly empty. 

In the lion's share of the literature, the focus is on receivers who break ties in favor of the sender (e.g., sender-preferred tie-breaking). Define the sender's indirect payoff function $V:\mathbb{R}^{|\Omega|\times |A|}\times\Delta(\Omega)\to [0,1]$ as follows: for each $(\theta,\mu)$,
\begin{equation}
V(\theta, \mu)=\max_{a\in A^*(\theta,\,\mu)}\sum_{\omega\in\Omega}\mu(\omega)v(a,\omega)~.
\end{equation}
Throughout,  $V$ denotes the indirect utility function under the sender-preferred tie-breaking.
At posterior belief $\mu$ and receiver type $\theta$, we call action $a$ an 
\emph{induced action} under sender-preferred tie-breaking if
\begin{equation*}
    a\in  A^*(\theta,\mu)\quad \text{and}\quad \sum_{\omega\in\Omega}\mu(\omega)v(a,\omega)=V(\theta, \mu)~.
\end{equation*}

\begin{lemma}
    \label{lemma:V_upper_semi}
$V:\mathbb{R}^{|\Omega|\times |A|}\times\Delta(\Omega)\to[0,1]$ is upper semicontinuous.
\end{lemma}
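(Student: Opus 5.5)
The plan is to prove upper semicontinuity sequentially: take any sequence $(\theta_n,\mu_n)\to(\theta_0,\mu_0')$ in $\mathbb{R}^{|\Omega|\times|A|}\times\Delta(\Omega)$ and show $\limsup_n V(\theta_n,\mu_n)\le V(\theta_0,\mu_0')$. Since the joint space is metrizable, this suffices. The argument rests on two facts. First, the best-reply correspondence $A^*$ has a closed graph in $(\theta,\mu)$. Second, $A$ is finite, so we can pass to subsequences along which the induced action is constant.

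The first step is to pass to a subsequence realizing the $\limsup$. For each $n$, pick an induced action $a_n\in A^*(\theta_n,\mu_n)$ with $\sum_\omega \mu_n(\omega)v(a_n,\omega)=V(\theta_n,\mu_n)$; this maximum exists because $A^*(\theta_n,\mu_n)$ is a nonempty subset of a finite set. Because $A$ is finite, there is a further subsequence along which $a_n\equiv a$ for a fixed action $a$. Along this subsequence, $V(\theta_n,\mu_n)$ still converges to the $\limsup$.

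The second step shows $a\in A^*(\theta_0,\mu_0')$. For every $b\in A$ and every $n$ in the subsequence, $\sum_\omega u_{\theta_n}(a,\omega)\mu_n(\omega)\ge \sum_\omega u_{\theta_n}(b,\omega)\mu_n(\omega)$. Both sides are finite sums of products of coordinates of $\theta_n$ and $\mu_n$, so they are jointly continuous in $(\theta,\mu)$. Passing to the limit preserves the weak inequality, which gives $a\in A^*(\theta_0,\mu_0')$.

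The final step concludes with the sender's payoff. The map $\mu\mapsto\sum_\omega\mu(\omega)v(a,\omega)$ is continuous, so
\[
\limsup_n V(\theta_n,\mu_n)=\lim_n \sum_\omega \mu_n(\omega)v(a,\omega)=\sum_\omega \mu_0'(\omega)v(a,\omega)\le V(\theta_0,\mu_0'),
\]
where the inequality holds because $a$ is feasible in the maximization defining $V(\theta_0,\mu_0')$.

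I do not expect a real obstacle. The only point requiring care is the subsequence extraction that fixes the action, which is where finiteness of $A$ is used. One should also note that the range of $V$ lies in $[0,1]$ because $v$ takes values in $[0,1]$ and posteriors are probability vectors, so $V$ is well defined.
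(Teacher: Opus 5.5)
Your proof is correct and is essentially the paper's argument. Both rest on upper hemicontinuity of the finite-valued best-reply correspondence together with continuity of the linear sender payoff. The difference is cosmetic: the paper shows directly that $A^*(\theta_n,\mu_n)\subseteq A^*(\theta,\mu)$ for all large $n$ (non-best replies are strictly worse, and strict inequalities persist nearby), whereas you extract a subsequence with a constant induced action and pass the weak inequalities to the limit.
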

The proof is provided in Appendix~\ref{app:mode_proofs}.
Define the sender's expected utility $W:\mathbb{R}^{|\Omega|\times |A|}\times \Pi\to[0,1]$ as follows:  for every receiver type $\theta$ and signal policy $\pi$,
\begin{equation}
W(\theta,\pi)=\int_{\Delta(\Omega)} V(\theta,\mu)\pi(d\mu)~.
\end{equation}

\begin{lemma}
    \label{lemma:joint_upper_W}
$W:\mathbb{R}^{|\Omega|\times |A|}\times \Pi\rightarrow [0,1]$ is upper semicontinuous.
\end{lemma}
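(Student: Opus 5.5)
To prove Lemma~\ref{lemma:joint_upper_W}, I will argue with sequences, which is legitimate because both $\mathbb{R}^{|\Omega|\times|A|}$ and $\Pi$ (with the weak topology, metrizable as noted in the footnote) are metric spaces. Fix $(\theta_n,\pi_n)\to(\theta_0,\pi_0)$. The goal is
\[
\limsup_{n\to\infty} W(\theta_n,\pi_n)\le W(\theta_0,\pi_0).
\]
The difficulty is that the integrand $V(\theta_n,\cdot)$ changes with $n$ at the same time as the measure $\pi_n$. The plan is to dominate all the integrands by a single upper semicontinuous function that depends only on a neighbourhood of $\theta_0$, and then apply the portmanteau theorem.

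\textbf{Step 1: a uniform upper envelope.} For $k\in\mathbb{N}^+$ define
\[
\overline V_k(\mu):=\sup\{V(\theta,\mu):\theta\in \operatorname{Ball}_{1/k}(\theta_0)\}.
\]
I will show that $\overline V_k$ is upper semicontinuous on $\Delta(\Omega)$. Take $\mu_m\to\mu$ with $\overline V_k(\mu_m)\to L$. Since $V$ takes finitely many ``forms'' (one per action), the supremum is attained: the set of $\theta$ in the compact ball for which a given $a$ is a best reply at $\mu$ is closed. So there are maximizers $\theta^m\in\operatorname{Ball}_{1/k}(\theta_0)$. Extract a convergent subsequence $\theta^m\to\theta^*$; by compactness of the closed ball, $\theta^*\in\operatorname{Ball}_{1/k}(\theta_0)$. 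Joint upper semicontinuity of $V$ (Lemma~\ref{lemma:V_upper_semi}) then gives
\[
L\le V(\theta^*,\mu)\le\overline V_k(\mu).
\]
Attainment of the supremum is not even essential. Taking $\theta^m$ that are $1/m$-optimal gives the same conclusion.

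\textbf{Step 2: shrinking envelopes.} The functions $\overline V_k$ decrease in $k$. For each $\mu$ they satisfy $\lim_k\overline V_k(\mu)=V(\theta_0,\mu)$: pick near-maximizers $\theta^k\to\theta_0$ and apply Lemma~\ref{lemma:V_upper_semi} once more. All the functions take values in $[0,1]$.

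\textbf{Step 3: combine.} Fix $k$. For all large $n$ we have $\theta_n\in\operatorname{Ball}_{1/k}(\theta_0)$, and hence $W(\theta_n,\pi_n)\le\int\overline V_k\,d\pi_n$. Since $\overline V_k$ is bounded and upper semicontinuous on the compact metric space $\Delta(\Omega)$, the portmanteau theorem yields
\[
\limsup_{n\to\infty}\int\overline V_k\,d\pi_n\le\int\overline V_k\,d\pi_0.
\]
This can also be seen by writing $\overline V_k$ as a decreasing limit of continuous functions and using weak convergence. Therefore $\limsup_n W(\theta_n,\pi_n)\le\int\overline V_k\,d\pi_0$ for every $k$. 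Letting $k\to\infty$, monotone (or dominated) convergence and Step~2 give
\[
\int\overline V_k\,d\pi_0\to\int V(\theta_0,\cdot)\,d\pi_0=W(\theta_0,\pi_0),
\]
which proves the claim.

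Measurability of $V(\theta,\cdot)$ follows from upper semicontinuity, so $W$ is well defined. I expect the main obstacle to be Step~1, namely establishing upper semicontinuity of the envelope $\overline V_k$ in $\mu$. Its proof relies on the compactness of the closed ball in type space together with the joint upper semicontinuity from Lemma~\ref{lemma:V_upper_semi}. The remaining steps are standard weak-convergence arguments.
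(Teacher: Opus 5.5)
Your proof is correct, but it is organized differently from the paper's. The paper handles the fact that the integrand $V(\theta_n,\cdot)$ and the measure $\pi_n$ move together by lifting to the product space. It forms $q_n:=\operatorname{Dirac}_{\theta_n}\otimes\pi_n$ and $q:=\operatorname{Dirac}_{\theta}\otimes\pi$ on $\mathbb{R}^{|\Omega|\times|A|}\times\Delta(\Omega)$ and verifies $q_n\to q$ weakly using bounded Lipschitz test functions. It then applies the portmanteau theorem once, to the jointly upper semicontinuous bounded function $V$ from Lemma~\ref{lemma:V_upper_semi}.

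You instead keep the measure space fixed as $\Delta(\Omega)$ and absorb the variation in $\theta$ into the decreasing envelopes $\overline V_k$. This costs two extra steps: proving upper semicontinuity of the envelope, which uses compactness of closed balls in type space, and the monotone-convergence passage $k\to\infty$. In exchange, you only invoke portmanteau on the compact space $\Delta(\Omega)$ and never need weak convergence of product measures.

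The paper's route is shorter and does not rely on local compactness of the parameter space. Yours is the classical envelope argument and makes the uniform-in-$\theta$ domination explicit. Both adapt without change to the function $F(x,\pi)$ in the proof of Lemma~\ref{lemma:nece_sequence}.

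One small point: attainment in Step 1 is immediate. For fixed $\mu$, $V(\cdot,\mu)$ takes values in the finite set $\{\sum_{\omega}v(a,\omega)\mu(\omega):a\in A\}$, so the closedness remark is not needed.
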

The proof is provided in Appendix~\ref{app:mode_proofs}.
For each receiver type, $\pi\mapsto W(\theta,\pi)$ is upper semicontinuous on the compact set $\Pi$; hence, it attains a maximum and an optimal policy exists.
We now define the notion of continuity of BP models.
\begin{definition}
\label{def:def_continuous_BP}
A BP model $G(\theta_0)$ is  \textbf{continuous} if for any sequence $\{G(\theta_n)\}_{n\in\mathbb{N}^+ }$ with $\theta_n\rightarrow\theta_0$,  $\displaystyle \lim_{n\rightarrow \infty}\, \left[\max_{\pi\in\Pi}W(\theta_n,\pi)\right] =\max_{\pi\in\Pi}W(\theta_0,\pi)$.
\end{definition}

Continuity of a BP model implies that slight ignorance by the modeler with respect to the receiver's type does not substantially change the modeler's prediction about the sender's expected utility.

\subsection{Robustness}
Continuity captures the setting where the modeler is (infinitesimally) ignorant about receiver preferences. We now turn to the setting where, in addition, the sender is (infinitesimally) ignorant. We depart from the Bayesian framework to model such ignorant senders and assume that an ignorant sender has no probabilistic model for the receiver's type. Rather, she assumes that the type is in a restricted set of types, typically a small ball.  
Thus, we replace the standard assumption that a sender maximizes expected utility with one of two non-Bayesian decision-making schemes: either maximizing  worst-case utility (\emph{max-min}) or minimizing regret (\emph{regret-minimizing}).

For any collection of types $\Theta\subseteq \mathbb{R}^{|\Omega|\times|A|}$ and any signal policy $\pi$, $\inf_{\theta\in\Theta} W(\theta,\pi)$ is the worst-case expected utility of $\pi$, while $\sup_{\theta\in\Theta} \left [  \max_{\pi'\in \Pi}W(\theta,\pi')- W(\theta,\pi) \right]$ is the worst-case regret of $\pi$.
A max-min sender aims to maximize the former, whereas a regret-minimizing  sender aims to minimize the latter. We now establish the existence of two signal policies that attain the respective objectives.

\begin{proposition}
\label{lemma:exis_max_regret}
For any nonempty collection of types $\Theta\subseteq \mathbb{R}^{|\Omega|\times|A|}$, there exist two signal policies $\pi_1$ and $\pi_2$ satisfying
 \begin{equation*}
     \begin{cases}
       \inf_{\theta\in\Theta} W(\theta,\pi_1)=\sup_{\pi\in \Pi} \inf_{\theta\in\Theta} W(\theta,\pi)~,
       \\
        \sup_{\theta\in\Theta} \left[\max_{\pi'\in \Pi}W(\theta,\pi') -W(\theta,\pi_2) \right] =\inf_{\pi\in \Pi} \sup_{\theta\in\Theta} \left [  \max_{\pi'\in \Pi}W(\theta,\pi')- W(\theta,\pi) \right]~.
     \end{cases}
 \end{equation*}
 We call $\pi_1$ the \textbf{max-min} policy and $\pi_2$ the \textbf{regret-minimizing} policy.    
\end{proposition}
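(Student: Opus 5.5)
The plan is to derive both existence claims from one principle: an upper semicontinuous function on a compact set attains its supremum. By Lemma~\ref{lemma:Pi_compact}, $\Pi$ is compact in the weak topology. So it suffices to show that the two objectives $\pi\mapsto \inf_{\theta\in\Theta}W(\theta,\pi)$ and $\pi\mapsto -\sup_{\theta\in\Theta}\bigl[\max_{\pi'\in\Pi}W(\theta,\pi')-W(\theta,\pi)\bigr]$ are upper semicontinuous on $\Pi$. Each will then attain its maximum on $\Pi$, giving $\pi_1$ and $\pi_2$ respectively.

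\textbf{Max-min objective.} By Lemma~\ref{lemma:joint_upper_W}, $W$ is jointly upper semicontinuous. In particular, for each fixed $\theta$ the section $\pi\mapsto W(\theta,\pi)$ is upper semicontinuous on $\Pi$. A pointwise infimum of any family of upper semicontinuous functions is upper semicontinuous: each superlevel set $\{\pi:\inf_\theta W(\theta,\pi)\ge c\}=\bigcap_{\theta\in\Theta}\{\pi: W(\theta,\pi)\ge c\}$ is an intersection of closed sets. This holds for an arbitrary nonempty $\Theta$, with no compactness of $\Theta$ needed. The function takes values in $[0,1]$, so it is finite, and it attains its supremum on the compact $\Pi$ at some $\pi_1$.

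\textbf{Regret-minimizing objective.} Write $M(\theta):=\max_{\pi'\in\Pi}W(\theta,\pi')$, which is well defined and lies in $[0,1]$ by the remark following Lemma~\ref{lemma:joint_upper_W}. Define the regret $R(\pi):=\sup_{\theta\in\Theta}[M(\theta)-W(\theta,\pi)]$. For fixed $\theta$, the map $\pi\mapsto M(\theta)-W(\theta,\pi)$ is a constant minus an upper semicontinuous function, hence lower semicontinuous. A pointwise supremum of lower semicontinuous functions is lower semicontinuous, by the same superlevel-set argument applied to $\{R>c\}$ as a union of open sets. Moreover $R$ takes values in $[-1,1]$, and in fact in $[0,1]$, since $M(\theta)\ge W(\theta,\pi)$. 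Therefore $R$ attains its infimum on the compact $\Pi$ at some $\pi_2$. Note that we never need any regularity of $M$ in $\theta$, because $\theta$ is held fixed inside each function before the supremum is taken.

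\textbf{Main obstacle.} The only delicate point is to make sure that the semicontinuity we use is in $\pi$ alone for each fixed $\theta$, and that it really follows from the joint statement in Lemma~\ref{lemma:joint_upper_W}; restricting a jointly upper semicontinuous function to a slice preserves upper semicontinuity. We must also check that the sign conventions work: the infimum preserves upper semicontinuity, and the supremum of negated upper semicontinuous functions preserves lower semicontinuity. Once these facts are in place, the Weierstrass-type theorem for semicontinuous functions on compact metrizable spaces finishes both cases. If one preferred sequences, one would take a maximizing sequence in $\Pi$ (respectively a minimizing one), extract a weakly convergent subsequence by compactness, and pass to the limit using $\limsup_n W(\theta,\pi_n)\le W(\theta,\pi)$ for each $\theta$.
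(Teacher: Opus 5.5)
Your proposal is correct and follows essentially the same route as the paper: for each fixed $\theta$, use upper semicontinuity of $\pi\mapsto W(\theta,\pi)$ (and of the negated regret), note that an infimum of upper semicontinuous functions (equivalently, a supremum of lower semicontinuous ones) preserves the property, and conclude by compactness of $\Pi$. The only difference is cosmetic: you phrase the regret case as minimizing a lower semicontinuous supremum, while the paper maximizes $\inf_{\theta}[W(\theta,\pi)-\max_{\pi'}W(\theta,\pi')]$.
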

\begin{proof}
It suffices to show the existence of $\pi_2$ such that 
\begin{equation*}
    \inf_{\theta\in\Theta} \left[W(\theta,\pi_2)-\max_{\pi'\in \Pi}W(\theta,\pi')  \right] =\sup_{\pi\in \Pi} \inf_{\theta\in\Theta} \left [   W(\theta,\pi)-\max_{\pi'\in \Pi}W(\theta,\pi') \right]~.
\end{equation*}
For each fixed type $\theta$,    $W(\theta,\pi)$ and $ W(\theta,\pi)-\max_{\pi'\in \Pi}W(\theta,\pi')$ are upper semicontinuous on $\Pi$. Thus, $\inf_{\theta\in\Theta} W(\theta,\pi)$ and $\inf_{\theta\in\Theta} \left [   W(\theta,\pi)-\max_{\pi'\in \Pi}W(\theta,\pi') \right]$ are upper semicontinuous on $\Pi$, because the infimum of any collection of upper semicontinuous functions is upper semicontinuous. Because $\Pi$ is compact, these two upper semicontinuous functions attain their maxima. 
\end{proof}

We now define the notion of robustness of BP models under both criteria. For a sender that maximizes the minimal utility:
\begin{definition}
\label{def:def_robust_BP_1}
A BP model $G(\theta_0)$ is  \textbf{MM-robust} if for any $\epsilon>0$, there exists $\delta>0$ such that for any nonempty set of types $\Theta\subseteq \operatorname{Ball}_{\delta}(\theta_0)$ with $\theta_0\in\Theta$, every \textbf{max-min} policy $\pi^*$ satisfies 
$ \sup_{\theta\in \Theta} \left|W(\theta,\pi^*)-\max_{\pi\in\Pi} W(\theta_0,\pi)\right|<\epsilon$.
\end{definition}
For a sender that minimizes the maximal regret: 
\begin{definition}
\label{def:def_robust_BP_2}
A BP model $G(\theta_0)$ is  \textbf{RM-robust} if for any $\epsilon>0$, there exists $\delta>0$ such that for any nonempty set of types $\Theta\subseteq \operatorname{Ball}_{\delta}(\theta_0)$ with $\theta_0\in\Theta$, every \textbf{regret-minimizing} policy $\pi^*$ satisfies $ \sup_{\theta\in \Theta} \left|W(\theta,\pi^*)-\max_{\pi\in\Pi} W(\theta_0,\pi)\right|<\epsilon$.
\end{definition}

\section{Continuity and Robustness Are Equivalent}
\label{sec:main_equivelant}
In this section, we establish the equivalence of the three notions. Formally,
\begin{Theorem} \label{theorem:bp_equivalent}
For any BP model $G(\theta_0)$, the following statements are equivalent:
\begin{itemize}
    \item  $G(\theta_0)$ is continuous. 
    \item  $G(\theta_0)$ is MM-robust.
    \item  $G(\theta_0)$ is RM-robust.
\end{itemize}
\end{Theorem}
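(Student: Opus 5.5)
We will prove the equivalence through a common structural condition rather than by proving the implications between the three notions directly. Call a posterior $\mu$ \emph{critical} for $\theta_0$ if every action $a$ induced at $\mu$ under sender-preferred tie-breaking is fragile there: for every $\delta>0$ there is $\theta\in\operatorname{Ball}_\delta(\theta_0)$ under which $a$ is not a best reply at any belief near $\mu$ (e.g.\ $B(\theta_0,a)$ has empty relative interior near $\mu$, as with $a_3$ at $\mu=1$ in Example~\ref{sec:warm_examples}). Condition (C) says that some optimal policy for $G(\theta_0)$ is supported on non-critical posteriors. The plan is to show that (C) is sufficient for all three notions and that its failure refutes all three.

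\textbf{Upper bound.} For all three notions we first prove an unconditional upper-semicontinuity statement: $\theta\mapsto\max_{\pi}W(\theta,\pi)$ is upper semicontinuous at $\theta_0$. This follows from Lemma~\ref{lemma:joint_upper_W} together with compactness of $\Pi$ (Lemma~\ref{lemma:Pi_compact}), via a standard Berge-type argument: take maximizers $\pi_n$ for $\theta_n$, pass to a weakly convergent subsequence, and use joint upper semicontinuity. Every policy gives at most the optimum for each $\theta$, so for every $\Theta\subseteq\operatorname{Ball}_\delta(\theta_0)$ and every $\pi$ we get $W(\theta,\pi)\le \max_\pi W(\theta_0,\pi)+\epsilon$. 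It therefore only remains to bound from below.

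\textbf{Sufficiency.} Assume (C) and fix an optimal $\pi_0$ whose posteriors are non-critical. First approximate $\pi_0$ by a finitely supported policy; Carathéodory gives at most $|\Omega|$ posteriors. Then perturb each posterior $\mu_i$ slightly into the interior of the best-reply region of its induced action $a_i$, adjusting the weights so the average remains $\mu_0$. Strict optimality of $a_i$ at the perturbed posteriors survives all $\theta\in\operatorname{Ball}_\delta(\theta_0)$ for small $\delta$, so this one policy $\hat\pi$ satisfies $W(\theta,\hat\pi)\ge \max W(\theta_0,\cdot)-\epsilon$ uniformly on the ball. The three notions then follow:
\begin{itemize}
\item \emph{Continuity.} $\max W(\theta_n,\cdot)\ge W(\theta_n,\hat\pi)$, which combined with the upper bound gives convergence.
\item \emph{MM-robustness.} The max-min value is at least $\inf_\Theta W(\theta,\hat\pi)$, so every max-min $\pi^*$ satisfies $W(\theta,\pi^*)\ge \max W(\theta_0,\cdot)-\epsilon$ for all $\theta\in\Theta$.
\item \emph{RM-robustness.} The worst-case regret of $\hat\pi$ is at most $2\epsilon$ by the upper bound, so any regret minimizer has $W(\theta,\pi^*)\ge \max W(\theta,\cdot)-2\epsilon\ge \max W(\theta_0,\cdot)-3\epsilon$ for all $\theta\in\Theta$. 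The last inequality uses continuity, which we have just established.
\end{itemize}

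\textbf{Necessity.} This is the main obstacle. Suppose every optimal policy puts positive mass on critical posteriors. We aim to construct a single perturbed type $\theta_\delta\in\operatorname{Ball}_\delta(\theta_0)$ that lowers each sender-favorable fragile action by $\delta$ in the relevant states, so that all fragile actions disappear simultaneously. We then want $\limsup_\delta \max W(\theta_\delta,\cdot)<\max W(\theta_0,\cdot)$, which breaks continuity. Lemma~\ref{prop:continuous_robust_nece} (failure of continuity implies failure of robustness) then finishes the argument, using $\Theta=\{\theta_0,\theta_\delta\}$ and the fact that neither criterion can beat $\max W(\theta_\delta,\cdot)$ at $\theta_\delta$.

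The hard part is the strict gap. A limit of optimal policies for $\theta_\delta$ is a policy for $\theta_0$ that avoids the killed actions in the limit. Such a policy can recover only the value achievable without critical posteriors. Showing this is strictly less than the optimum requires a compactness argument over optimal policies, together with a careful choice of perturbation that makes the fragile actions uniformly non-inducible near critical beliefs. I expect to handle this by ordering actions by sender payoff and perturbing each fragile action in the direction that empties its best-reply region's neighborhood.
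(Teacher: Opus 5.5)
Your outer architecture is sound and matches the paper's. It has three pieces: the Berge-type upper bound, a single policy that is uniformly good on a ball (which yields all three notions exactly as in Lemma~\ref{lemma:existence_for_all_type}), and the observation that failure of continuity defeats both robustness notions.

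The genuine gap is the structural condition itself. Your notion of ``critical'' mishandles duplicate receiver actions, and as a result the necessity claim you set out to prove is false. If $a$ has a duplicate $b\in R(\theta_0,a)$, then $a$ is fragile in your sense: raise $u(b,\cdot)$ by $\delta$ and $a$ is never a best reply. But the duplicate \emph{class} is not fragile: when $B(\theta_0,a)$ is full-dimensional, $\theta\mapsto\bigcup_{b\in R(\theta_0,a)}B(\theta,b)$ is lower hemicontinuous at $\theta_0$. In Example~\ref{ex:main_text_eample}, the unique optimal policy splits $\mu_0$ into $0$ and $0.75$. At $0.75$ the induced actions are $a_4$ and $a_5$, each fragile, so $0.75$ is critical for you and (C) fails. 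Yet for every nearby type some element of $\{a_4,a_5\}$ is a best reply slightly above $0.75$, where both give the sender about $0.35$, so the model is continuous and robust. Your plan to make all fragile actions ``disappear simultaneously'' is impossible here, since killing $a_4$ requires favoring $a_5$ and vice versa. The right notion (Definition~\ref{def:critical_mu}) declares $\mu$ non-critical when some induced action has a full-dimensional region and all of its duplicates give the sender the same payoff at $\mu$.

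This also pinpoints the missing idea in your necessity step. The sender's ranking of duplicates can flip with the belief (in Example~\ref{ex:main_text_eample} it flips at $0.75$), so no fixed ordering of actions by sender payoff works. The paper instead uses the state-dependent perturbation of Lemma~\ref{lemma:nece_sequence}: it adds $\tfrac2n-\tfrac1n v(a,\omega)$ to full-dimensional actions and subtracts $\tfrac1n$ from the others. This removes every lower-dimensional action outright and, within each duplicate class, makes the receiver strictly prefer the sender-worst duplicate at each posterior. The limit is then governed by the virtual indirect utility $V(\sigma(\theta_0),\cdot)$, with $\limsup_n V(\theta_n,\mu_n)\le V(\sigma(\theta_0),\mu)$. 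It is not governed by ``the value achievable without critical posteriors'': limit policies may still charge critical posteriors, only at the lower virtual payoff. The strict gap then needs no delicate compactness over optimal policies. $V(\sigma(\theta_0),\cdot)$ is upper semicontinuous and lies below $V(\theta_0,\cdot)$, so its optimum is attained by a finitely supported policy. If that policy reached $\max_{\pi}W(\theta_0,\pi)$, it would be $\theta_0$-optimal with every atom non-critical, because $V(\sigma(\theta_0),\mu)<V(\theta_0,\mu)$ at critical $\mu$.

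Two smaller repairs are needed in your sufficiency step. First, the finite-support reduction must keep its points inside $\operatorname{supp}(\pi_0)$; a mere approximation need not preserve non-criticality. Second, once posteriors are pushed off their affine span, reweighting alone cannot always restore Bayes plausibility; the paper adds a small-mass correction posterior (Lemma~\ref{lemma:adjustment_policy_defi}).
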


In the following two subsections, we characterize necessary and sufficient conditions for continuity, MM-robustness, and RM-robustness (Propositions~\ref{prop:suffi_all}~and~\ref{prop:necessary_all}). 
The three notions admit the same characterization. Therefore, they are equivalent.
The common condition---the existence of an optimal policy supported on non-critical posterior beliefs (see 
Definition~\ref{def:critical_mu})---is readily interpretable and is stated formally at the end of Section~\ref{sec:nece_example}.

We first show that sufficiently small ignorance always permits a weaker prediction, namely, the identification of a valid upper bound on the outcome of arbitrary BP instances.

\begin{proposition}
\label{lemma:set_upper_continuous}
For any BP model $G(\theta_0)$ and every $\epsilon>0$, there exists $\delta>0$ such that, for every nonempty set of types $\Theta\subseteq\operatorname{Ball}_{\delta}(\theta_0)$, every max-min policy $\pi_1$ for $\Theta$, every regret-minimizing policy $\pi_2$ for $\Theta$, and every $\theta\in\Theta$,
\begin{equation*}
\max\bigl\{W(\theta,\pi_1),W(\theta,\pi_2)\bigr\}\leq \max_{\pi\in\Pi}W(\theta,\pi)\leq \max_{\pi\in\Pi}W(\theta_0,\pi)+\epsilon~.
\end{equation*}
\end{proposition}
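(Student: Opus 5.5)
The first inequality holds by definition. For each $\theta\in\Theta$ and each policy $\pi_i$, $W(\theta,\pi_i)\le\max_{\pi\in\Pi}W(\theta,\pi)$, and the maximum exists by Lemma~\ref{lemma:Pi_compact} and Lemma~\ref{lemma:joint_upper_W}. This part uses no property of max-min or regret-minimizing policies beyond their existence (Proposition~\ref{lemma:exis_max_regret}). Since $\theta\in\Theta\subseteq\operatorname{Ball}_{\delta}(\theta_0)$, the whole statement reduces to showing that the value function
\[
M(\theta):=\max_{\pi\in\Pi}W(\theta,\pi)
\]
is upper semicontinuous at $\theta_0$. Concretely, we need a $\delta>0$ such that $M(\theta)\le M(\theta_0)+\epsilon$ for all $\theta\in\operatorname{Ball}_{\delta}(\theta_0)$. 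This is a Berge-type maximum argument: the constraint set $\Pi$ does not depend on $\theta$ and is compact, and the objective $W$ is jointly upper semicontinuous.

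I would argue by contradiction with sequences. Suppose that for some $\epsilon>0$ no such $\delta$ exists. Then there are types $\theta_n$ with $\|\theta_n-\theta_0\|\le 1/n$ and $M(\theta_n)>M(\theta_0)+\epsilon$. Choose optimal policies $\pi_n\in\Pi$ with $W(\theta_n,\pi_n)=M(\theta_n)$. By Lemma~\ref{lemma:Pi_compact}, $\Pi$ is compact, and the weak topology is metrizable, so we may pass to a subsequence with $\pi_n\to\pi^\ast\in\Pi$. Then $(\theta_n,\pi_n)\to(\theta_0,\pi^\ast)$ in the product topology. Joint upper semicontinuity of $W$ (Lemma~\ref{lemma:joint_upper_W}) gives
\[
M(\theta_0)\ge W(\theta_0,\pi^\ast)\ge\limsup_{n\to\infty} W(\theta_n,\pi_n)\ge M(\theta_0)+\epsilon,
\]
which is a contradiction. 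This yields the desired $\delta$, and the same $\delta$ works simultaneously for every $\Theta\subseteq\operatorname{Ball}_{\delta}(\theta_0)$ and every choice of $\pi_1,\pi_2$, because the bound depends only on $\theta$.

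The only substantive ingredient is the joint upper semicontinuity of $W$ in $(\theta,\pi)$, which Lemma~\ref{lemma:joint_upper_W} already supplies. Without it, the step would require Lemma~\ref{lemma:V_upper_semi} together with a portmanteau-type argument for upper semicontinuous integrands under weak convergence, where the measures and the integrand vary simultaneously. Given the earlier lemmas, I expect no real obstacle. The one point to check is that compactness of $\Pi$ allows sequential extraction, which the paper's footnote on metrizability justifies.
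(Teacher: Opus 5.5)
Your proof is correct and follows essentially the same route as the paper. In both, the first inequality is immediate, and the second reduces to upper semicontinuity of $\theta\mapsto\max_{\pi\in\Pi}W(\theta,\pi)$ at $\theta_0$: the paper gets this by citing Berge's maximum theorem (fixed compact $\Pi$ and jointly upper semicontinuous $W$ from Lemma~\ref{lemma:joint_upper_W}), whereas you prove that step directly with the standard sequential-compactness contradiction argument.
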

\begin{proof}
We apply the upper-semicontinuity part of Berge's maximum theorem.\footnote{See Theorem 2 in Section 6.3 (Maximum Theorem) of  \citet*{Berge1963}.}  If a function of two variables is upper semicontinuous and the feasible region of the second variable depends on the first through a compact-valued, upper hemicontinuous correspondence, then taking the supremum over the second variable yields an upper semicontinuous function of the first.
By Lemma~\ref{lemma:joint_upper_W}, $W$ is upper semicontinuous. The feasible region is the fixed compact set $\Pi$ independent of the type $\theta$. Therefore, $\theta\mapsto \max_{\pi\in \Pi} W(\theta,\pi)$ is upper semicontinuous. Note that for any collection of receiver types, the expected utility generated by either a max-min or a regret-minimizing policy is pointwise bounded above by the optimal expected utility for the corresponding type.
\end{proof}

An alternative, more constructive proof is provided in Appendix~\ref{app:alter_detailed_proofs}. That proof combines the geometric properties of  the best-reply regions with a reusable technical device. For each action $a$, the set $B(\theta,a)$ is defined by a system of non-strict linear inequalities. 
Hence, as a correspondence in $\theta$, it is upper hemicontinuous (Lemma~\ref{lemma:upper_sets_appendix}).\footnote{ We use Painlev\'e--Kuratowski set limits and convergence, detailed in Definition~\ref{def:set_convergence} in Appendix~\ref{app:mode_proofs}. Throughout, we use the terms  \emph{upper} and \emph{lower semicontinuous} for real-valued functions and \emph{upper} and \emph{lower hemicontinuous} for set-valued correspondences.}
Consequently, when $\theta$ is sufficiently close to $\theta_0$, every posterior $\mu$ has a nearby posterior $\mu'$ such that $V(\theta_0,\mu')$ is approximately as high as $V(\theta,\mu)$.
The technical device rescales the weights assigned to the perturbed posteriors and adds a counterweight to
restore Bayes plausibility. It then produces a signal policy at $\theta_0$ with nearly the same expected utility as the original policy at $\theta$. The constructed policy need not be optimal at $\theta_0$. This gives the desired comparison and completes the proof.
Similar constructions appear in the related literature \citep[see, e.g.,][]{Kamenica2011BayesianPersuasion,arieli2023optimal,lipnowski2024perfect,lin2025generalized}.
The device is stated in Appendix~\ref{app:alt_proofs_geo}.

\subsection{Sufficient Condition for Continuity and Robustness}
\label{sec:sufficient}

In light of Proposition~\ref{lemma:set_upper_continuous}, it is enough to show that there exists a signal policy that guarantees almost the same expected utility as the optimal policy at $\theta_0$, regardless of the realized type, in order to establish continuity and robustness. 
Formally:

\begin{lemma}
\label{lemma:existence_for_all_type}
    Let $G(\theta_0)$ be a BP model.  
    If for every $\epsilon>0$, there exists $\delta>0$ and a signal policy $\pi^*$ such that $ W(\theta,\pi^*)\geq \max_{\pi\in\Pi} W(\theta_0,\pi)-\epsilon$ for every  $\theta \in \operatorname{Ball}_{\delta}(\theta_0)$, then $G(\theta_0)$ is continuous, MM-robust, and RM-robust.
\end{lemma}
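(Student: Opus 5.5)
The plan is to combine the hypothesis of the lemma (a lower bound uniform over types) with Proposition~\ref{lemma:set_upper_continuous} (an upper bound uniform over types), handling the three notions one at a time. Write $V^*_0:=\max_{\pi\in\Pi}W(\theta_0,\pi)$. Fix $\epsilon>0$. The hypothesis applied with $\epsilon/2$ gives $\delta_1>0$ and a policy $\pi^*$ such that
\[
W(\theta,\pi^*)\geq V^*_0-\epsilon/2 \quad \text{for all } \theta\in\operatorname{Ball}_{\delta_1}(\theta_0).
\]
Proposition~\ref{lemma:set_upper_continuous} applied with $\epsilon/2$ gives $\delta_2>0$. Set $\delta:=\min\{\delta_1,\delta_2\}$.

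\emph{Continuity.} Let $\theta_n\to\theta_0$. For all large $n$ we have $\theta_n\in\operatorname{Ball}_{\delta}(\theta_0)$. Then $\max_\pi W(\theta_n,\pi)\geq W(\theta_n,\pi^*)\geq V^*_0-\epsilon/2$. Applying Proposition~\ref{lemma:set_upper_continuous} to the singleton $\Theta=\{\theta_n\}$ gives $\max_\pi W(\theta_n,\pi)\leq V^*_0+\epsilon/2$. Since $\epsilon$ is arbitrary, the limit equals $V^*_0$.

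\emph{MM-robustness.} Take any nonempty $\Theta\subseteq\operatorname{Ball}_\delta(\theta_0)$ and any max-min policy $\pi_1$. By optimality of $\pi_1$ against the feasible choice $\pi^*$, for every $\theta\in\Theta$,
\[
W(\theta,\pi_1)\geq \inf_{\theta'\in\Theta}W(\theta',\pi_1)\geq \inf_{\theta'\in\Theta}W(\theta',\pi^*)\geq V^*_0-\epsilon/2 .
\]
Proposition~\ref{lemma:set_upper_continuous} supplies the upper bound $W(\theta,\pi_1)\leq V^*_0+\epsilon/2$. Hence $\sup_{\theta\in\Theta}|W(\theta,\pi_1)-V^*_0|\leq\epsilon/2<\epsilon$.

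\emph{RM-robustness.} This is the only step needing care, because regret-minimality controls regret rather than utility. The worst-case regret of $\pi^*$ on $\Theta$ is
\[
\sup_{\theta\in\Theta}\Bigl[\max_{\pi}W(\theta,\pi)-W(\theta,\pi^*)\Bigr]\leq (V^*_0+\epsilon/2)-(V^*_0-\epsilon/2)=\epsilon,
\]
where the first term is bounded by Proposition~\ref{lemma:set_upper_continuous} and the second by the choice of $\pi^*$. A regret-minimizing $\pi_2$ has worst-case regret no larger, so for every $\theta\in\Theta$,
\[
W(\theta,\pi_2)\geq \max_\pi W(\theta,\pi)-\epsilon\geq W(\theta,\pi^*)-\epsilon\geq V^*_0-3\epsilon/2 .
\]
The upper bound $W(\theta,\pi_2)\leq V^*_0+\epsilon/2$ again comes from Proposition~\ref{lemma:set_upper_continuous}. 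To obtain the strict inequality $<\epsilon$ required by the definition, rerun the argument with $\epsilon/4$ in place of $\epsilon/2$ in both the hypothesis and Proposition~\ref{lemma:set_upper_continuous}. The lower bound then becomes $V^*_0-3\epsilon/4$ and the upper bound $V^*_0+\epsilon/4$, so $\sup_{\theta\in\Theta}|W(\theta,\pi_2)-V^*_0|\leq 3\epsilon/4<\epsilon$.

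The condition $\theta_0\in\Theta$ in the definitions is not needed in any of these steps.
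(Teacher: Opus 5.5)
Your proof is correct and follows the paper's argument essentially line by line. You sandwich the values between the hypothesis's uniform lower bound and the uniform upper bound from Proposition~\ref{lemma:set_upper_continuous}, compare the max-min and regret-minimizing policies against $\pi^*$ (bounding the regret of $\pi^*$ by twice the tolerance), and take the tolerance to be $\epsilon/4$ so that the resulting $3\epsilon/4$ bound is strictly below $\epsilon$.
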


The proof is provided in Appendix~\ref{sec:appen_sufficient_proofs}. 
The existence of $\pi^*$ implies lower semicontinuity of the sender's optimal value at $\theta_0$. 
Together with upper semicontinuity, it also implies that $\pi^*$ is almost optimal for every nearby type $\theta$. 
For any collection of types around $\theta_0$, the max-min policy and the regret-minimizing policy perform
at least as well as $\pi^*$ according to their respective criteria.

To construct the signal policy in Lemma~\ref{lemma:existence_for_all_type}, it suffices to require the existence of an optimal policy that can be perturbed.
Intuitively, for each supported belief of such an optimal policy and each corresponding induced action, the correspondence $\theta\mapsto B(\theta,a)$ should be lower hemicontinuous. 
Lower hemicontinuity of the corresponding system of linear inequalities requires both that $B(\theta,a)$ has dimension $|\Omega|-1$ as the simplex, and that no degenerate inequalities of the form $0\times\mu\leq0$ define $B(\theta,a)$ (duplicate actions for the receiver).\footnote{For any nonempty set $B$ and any $x_0 \in B$, $\dim(B):=\dim (L)$ where $L:=\text{span}\{x-x_0:x\in B\}$; if $B=\emptyset$, set $\dim(B)=-1$. }
For each type $\theta$, denote the group of duplicate actions with the same receiver payoff as $a$ by
\begin{equation}
    R(\theta,a):=\{b\in A: u_{\theta}(a,\omega)= u_{\theta}(b,\omega), \forall \omega\in \Omega\}~.
\end{equation}
However, duplicate actions for the receiver may be utility-equivalent for the sender, either globally or only at a particular posterior belief.
This motivates the following definition.
\begin{definition}
\label{def:critical_mu}
 Let $G(\theta_0)$ be a BP model.  For each $\mu\in \Delta(\Omega)$, we say that $\mu$ is \emph{not critical at $\theta_0$} if there exists an induced action $a$ such that 
 \begin{enumerate}
     \item  $\dim(B(\theta_0, a))=|\Omega|-1$. 
     \item For each $b\in R(\theta_0,a)$, $\sum_{\omega\in\Omega} v(b,\omega)\mu(\omega)=\sum_{\omega\in\Omega} v(a,\omega)\mu(\omega)$.\footnote{If $R(\theta_0,a)=\{a\}$, this condition is automatically satisfied.}
 \end{enumerate}
 \label{def:property_B_a}
\end{definition}

\begin{example}
\label{ex:main_text_eample}
   Consider a BP model whose receiver and sender utilities are shown in Figure~\ref{table:uvalue_example_sec1} and where the sender's corresponding indirect utility function is depicted in Figure~\ref{fig:vvalue_example_sec1}. 
The model has five actions, including a pair of duplicate actions, $a_4$ and $a_5$, and an action, $a_2$, whose best-reply region is lower-dimensional.
\begin{figure}[!h]
  \centering
  \begin{subfigure}[b]{0.5\linewidth}
    \begin{tabular}{c|c|c|c|c|c}
\hline
$u_{\theta}(a,\omega)$ & \textbf{$a_1$} & { \textbf{$a_2$}}& \textbf{$a_3$}& \textbf{$a_4$}& \textbf{$a_5$}\\ \hline
$\omega=0$ & 0.7 & $ 0.6 $& 0.5 &-0.1 &-0.1   \\ \hline
${\omega}=1$ & -0.1 & 0.2 & $0.5$ &0.7 &0.7   \\ \hline
\multicolumn{1}{p{2cm}}{}\\ \hline 
$v(a,\omega)$ & {$a_1$} & { \textbf{$a_2$}}& \textbf{$a_3$}& \textbf{$a_4$}& \textbf{$a_5$}\\ \hline
$\omega=0$ & 0.15 & 0.18 & 0  & 0.35& 0.275\\ \hline
$\omega=1$ & 0.15 & 0.18 & 0  &0.35  & 0.375\\ \hline
\end{tabular}
    \caption{receiver and sender utilities}
    \label{table:uvalue_example_sec1}
  \end{subfigure}\hfill
  \begin{subfigure}[b]{0.5\textwidth}
      \tikzset{every picture/.style={inner sep=0, outer sep=0}}
    \begin{tikzpicture}[ scale = 0.3]

\draw[white] (-7,0)--(-0.5,0);

\node (a11) at (2,11)[]{$a_1$};
\node (a11) at (4,11)[]{$a_2$};
\node (a11) at (8,11)[]{$a_3$};
\node (a11) at (14,11)[]{$a_4,a_5$};

\draw[->,black] (0,0)--(0,11);
\draw[->,black] (0,0)--(16,0);

\draw[red,very thick] (0,3)--(4,3);
\draw[blue, very thick] (4,0)--(4,3.6);
\draw[blue, dashed,very thick] (12,7)--(16,7);
\draw[red,very thick] (12,7)--(16,7.5);

\draw[<->, gray] (0,9)--(4,9);
\draw[<->, gray] (4,9.5)--(12,9.5);
\draw[<->, gray] (12,10)--(16,10);

\node (zero) at (0,0) [label={[below=4pt]$0$}]{};
\node (zero) at (16,0) [label={[below=4pt]$1$}]{};
\node (zero) at (4,0) [label={[below=4pt]$0.25$}]{};
\node (zero) at (12,0) [label={[below=4pt]$0.75$}]{};
\node (zero) at (8,0) [label={[below=4pt]$\mu_0$}]{};

\draw[black, dotted, very thick] (0,3)--(12,7);
   
    \end{tikzpicture}
    \caption{sender indirect utility $V(\theta,\mu)$}
    \label{fig:vvalue_example_sec1}
    \end{subfigure}
    \caption{Example 2.}
  \label{fig:example_sec1}
\end{figure}

Note that the sender has an optimal policy supported on two non-critical posteriors, $0$ and $0.75$.
Consequently, as formalized later in this section, this persuasion model is continuous and robust under both behavioral criteria.%
\footnote{Anecdotally, we mention that the BP model in this example does not satisfy the sufficient condition (or assumptions) introduced in \citet{lipnowski2024perfect,gan2025robust, lin2025generalized}.}
\end{example}

We use the following observation, whose proof is deferred to Appendix~\ref{sec:appen_sufficient_proofs}. 
It shows that there exists a measurable perturbation rule that works for all non-critical posteriors at $\theta_0$ and all nearby types. Under this rule, every non-critical posterior is perturbed slightly, while its benchmark indirect utility is preserved up to a small loss for every realized type sufficiently close to $\theta_0$.

\begin{lemma}
\label{lemma:intersect_small_set}
Let $G(\theta_0)$ be a BP model. For every $\gamma>0$, there exist $\delta>0$ and a Borel measurable mapping $T:\Delta(\Omega)\to\Delta(\Omega)$ such that for every non-critical $\mu$ at $\theta_0$ and every type $\theta\in \operatorname{Ball}_{\delta}(\theta_0)$, $\|\mu-T(\mu)\|_2\leq \gamma$ and $V(\theta,T(\mu))\geq V(\theta_0,\mu)-\sqrt{|\Omega|}\gamma$.
\end{lemma}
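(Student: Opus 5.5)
The plan is to push each non-critical posterior a small, fixed fraction of the way toward a point deep inside the best-reply region of its witnessing action. There the receiver's best reply is locked into the duplicate class of that action for every type in a small ball around $\theta_0$.

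\textbf{Step 1: an interior point with a uniform margin.} Fix an action $a$ with $\dim(B(\theta_0,a))=|\Omega|-1$. For $b\notin R(\theta_0,a)$, the map $\mu\mapsto\sum_\omega(u_{\theta_0}(a,\omega)-u_{\theta_0}(b,\omega))\mu(\omega)$ is affine on the simplex. It is $\geq 0$ on $B(\theta_0,a)$. It cannot vanish identically on a full-dimensional subset of the simplex: that would force $u_{\theta_0}(a,\cdot)-u_{\theta_0}(b,\cdot)$ to be constant, and a constant is zero on $B(\theta_0,a)$ only if it is zero, i.e., $b\in R(\theta_0,a)$.

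Hence each such functional is strictly positive on the relative interior of $B(\theta_0,a)$. Picking $c_a$ in that relative interior gives a margin
\[
m_a:=\min_{b\notin R(\theta_0,a)}\sum_\omega\bigl(u_{\theta_0}(a,\omega)-u_{\theta_0}(b,\omega)\bigr)c_a(\omega)>0 .
\]
Let $m:=\min_a m_a$ over the finitely many full-dimensional actions.

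\textbf{Step 2: the map $T$.} Let $t:=\gamma/\sqrt2$, which is at most the $\ell_2$ diameter of the simplex divided into $\gamma$; assume $t\le 1$. Fix an ordering of $A$. For a non-critical $\mu$, let $a(\mu)$ be the first action witnessing Definition~\ref{def:critical_mu}, and set $T(\mu):=(1-t)\mu+t\,c_{a(\mu)}$. For critical $\mu$, set $T(\mu):=\mu$.

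The set of $\mu$ for which a given $a$ is a witness is Borel. It is cut out by the closed conditions $\mu\in B(\theta_0,a)$ and the linear equalities in condition 2, together with $\sum_\omega v(a,\omega)\mu(\omega)=V(\theta_0,\mu)$, where $V(\theta_0,\cdot)$ is Borel by Lemma~\ref{lemma:V_upper_semi}. Hence $T$ is piecewise affine on finitely many Borel pieces, and therefore Borel measurable.

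Clearly $\|T(\mu)-\mu\|_2=t\|c_a-\mu\|_2\le t\sqrt2\le\gamma$.

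\textbf{Step 3: the best reply is locked.} Take $a=a(\mu)$ and $b\notin R(\theta_0,a)$. By linearity, the $\theta_0$-payoff difference between $a$ and $b$ at $T(\mu)$ is at least $(1-t)\cdot 0+t\,m_a\geq tm$.

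Choose $\delta<tm/2$. For $\theta\in\operatorname{Ball}_\delta(\theta_0)$, each expected payoff difference moves by at most $2\delta$. So $a$ strictly beats every $b\notin R(\theta_0,a)$ under $\theta$ at $T(\mu)$, and $A^*(\theta,T(\mu))\subseteq R(\theta_0,a)$.

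\textbf{Step 4: the value bound.} By condition 2, $\sum_\omega v(b,\omega)\mu(\omega)=V(\theta_0,\mu)$ for every $b\in R(\theta_0,a)$. Since $v\in[0,1]$, Cauchy--Schwarz gives
\[
\Bigl|\sum_\omega v(b,\omega)\bigl(T(\mu)(\omega)-\mu(\omega)\bigr)\Bigr|\le\sqrt{|\Omega|}\,\gamma .
\]
Whichever $b\in A^*(\theta,T(\mu))$ is induced under $\theta$, it follows that $V(\theta,T(\mu))\ge V(\theta_0,\mu)-\sqrt{|\Omega|}\gamma$.

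The main obstacle is Step 1, namely a strictly positive margin that is uniform over all non-critical $\mu$. Taking finitely many fixed centers $c_a$ handles this. Measurability is routine once $T$ is defined piecewise with a fixed tie-breaking order on witnesses.
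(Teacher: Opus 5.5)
Your proof is correct, but it takes a different route from the paper's. For each full-dimensional $a$, the paper proceeds as follows:
\begin{itemize}
\item It builds an adversarial type $\theta_a$, raising the non-duplicates by $\delta$ and lowering the class $R(\theta_0,a)$ by $\delta$.
\item It shows $\bigcup_{b\in R(\theta_0,a)}B(\theta_a,b)\subseteq\bigcup_{b\in R(\theta_0,a)}B(\theta,b)$ for every $\theta\in\operatorname{Ball}_{\delta}(\theta_0)$.
\item It invokes a Slater point and an external lower-hemicontinuity theorem for linear inequality systems (Daniilidis et al.), so that this worst-case region lies within Hausdorff distance $\gamma$ of its $\theta_0$ counterpart for small $\delta$.
\item It sets $T(\mu)$ to be the projection of $\mu$ onto that region.
\end{itemize}
This yields only $R(\theta_0,a)\cap A^*(\theta,T(\mu))\neq\emptyset$.

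You instead fix one relative-interior point $c_a$ with an explicit margin $m_a>0$ against every non-duplicate, and shrink toward it by a fixed fraction $t$. Linearity then gives a margin of at least $tm$ at $T(\mu)$, uniformly in $\mu$, and choosing $\delta<tm/2$ forces $A^*(\theta,T(\mu))\subseteq R(\theta_0,a)$.

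Your version is elementary and self-contained, gives an explicit $\delta$, and proves a stronger containment. Taking $\theta=\theta_0$, it also delivers the sender-worst bound of Lemma~\ref{lemma:mapping_tie_1}, whose own construction (moving toward a relative-interior point) is close to yours. The paper's route buys a structural by-product: lower hemicontinuity of $\theta\mapsto\bigcup_{b\in R(\theta_0,a)}B(\theta,b)$ at $\theta_0$, with a single region valid for every type in the ball. It also moves each belief minimally.

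Two small points you should state explicitly:
\begin{itemize}
\item Assuming $t\le 1$ (i.e., $\gamma\le\sqrt2$) loses nothing, because the conclusion only weakens as $\gamma$ grows.
\item If $A=R(\theta_0,a)$, the margin condition is vacuous ($m_a=+\infty$).
\end{itemize}
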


If an action $a$ has a full-dimensional best-reply region, then, together with all its duplicate actions $R(\theta_0,a)$, the correspondence $\theta\mapsto \bigcup_{b\in R(\theta_0,a)} B(\theta,b)$ is lower hemicontinuous at $\theta_0$. We also show in the proof of Lemma~\ref{lemma:intersect_small_set} that there exists $\theta'\in \operatorname{Ball}_{\delta}(\theta_0)$ such that, for any type $\theta\in \operatorname{Ball}_{\delta}(\theta_0)$, the union $\bigcup_{b\in R(\theta_0,a)} B(\theta',b) \subseteq \bigcup_{b\in R(\theta_0,a)} B(\theta,b)$.
Since  $\min_{b\in R(\theta_0,a)} \sum_{\omega\in\Omega}v(b,\omega)\mu(\omega)$ defines a continuous function of the belief $\mu$, moving a non-critical belief $\mu$ to a sufficiently close belief $\mu'$ contained in $\bigcup_{b\in R(\theta_0,a)}B(\theta',b)$ ensures that $V(\theta,\mu')$ is almost as large as $V(\theta_0,\mu)$, regardless of the realized type $\theta$.
The sufficient condition is then the following.
\begin{proposition}[Sufficient Condition for Continuity and Robustness]
\label{lemma:sufficient_puri_new}
\label{prop:suffi_all}
Let $G(\theta_0)$ be a BP model.  
If there exists an optimal signal policy $\pi$ such that every posterior belief in its support is non-critical at $\theta_0$, then $G(\theta_0)$ is continuous, MM-robust, and RM-robust.
\end{proposition}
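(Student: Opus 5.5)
The plan is to verify the hypothesis of Lemma~\ref{lemma:existence_for_all_type}. Fix $\epsilon>0$ and an optimal policy $\pi_0$ at $\theta_0$ whose support consists of non-critical posteriors. We will build, for a suitable $\delta>0$, a single policy $\pi^*$ such that $W(\theta,\pi^*)\ge W(\theta_0,\pi_0)-\epsilon$ for every $\theta\in\operatorname{Ball}_\delta(\theta_0)$. Lemma~\ref{lemma:existence_for_all_type} then delivers continuity, MM-robustness and RM-robustness at once.

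The construction uses the measurable perturbation $T$ from Lemma~\ref{lemma:intersect_small_set} with a small parameter $\gamma$. That lemma gives $\delta$ and $T$ such that, for every $\mu\in\operatorname{supp}(\pi_0)$ and every $\theta\in\operatorname{Ball}_\delta(\theta_0)$, we have $\|T(\mu)-\mu\|_2\le\gamma$ and $V(\theta,T(\mu))\ge V(\theta_0,\mu)-\sqrt{|\Omega|}\gamma$. Since $\operatorname{supp}(\pi_0)$ is closed, it has full $\pi_0$-measure, so the bound holds $\pi_0$-a.s. The pushforward $T_\#\pi_0$ need not be Bayes plausible, because its barycenter $\mu_1=\int T(\mu)\pi_0(d\mu)$ satisfies only $\|\mu_1-\mu_0\|_2\le\gamma$.

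To repair this, we use the rescaling/counterweight device. Because $\mu_0$ has full support, for $\gamma$ small we can write $\mu_0=(1-\lambda)\mu_1+\lambda\nu$ for some $\nu\in\Delta(\Omega)$ with $\lambda=O(\gamma/\min_\omega\mu_0(\omega))$. One can take $\nu$ to be a vertex-type point along the ray from $\mu_1$ through $\mu_0$, which stays in the simplex since $\mu_0$ is interior. Define
\[
\pi^*=(1-\lambda)\,T_\#\pi_0+\lambda\,\delta_\nu .
\]
This policy is in $\Pi$. Since $V\ge 0$,
\[
W(\theta,\pi^*)\ge(1-\lambda)\int V(\theta,T(\mu))\,\pi_0(d\mu)\ge(1-\lambda)\bigl(W(\theta_0,\pi_0)-\sqrt{|\Omega|}\gamma\bigr)\ge \max_{\pi}W(\theta_0,\pi)-\lambda-\sqrt{|\Omega|}\gamma .
\]
Choosing $\gamma$ so that $\lambda+\sqrt{|\Omega|}\gamma<\epsilon$ gives the required bound uniformly over $\operatorname{Ball}_\delta(\theta_0)$. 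Measurability of $T$ makes the pushforward well defined and lets us write the integral via change of variables.

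The main obstacle is the existence of a single $T$ that works uniformly in $\theta$. That uniformity is already packaged in Lemma~\ref{lemma:intersect_small_set}, which we take as given. What remains is the barycenter bookkeeping: checking that $\lambda\to0$ as $\gamma\to0$ with a constant depending only on $\mu_0$. This follows because the distance from $\mu_0$ to the boundary of the simplex is positive.
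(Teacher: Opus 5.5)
Your proposal is correct and follows essentially the paper's proof. Lemma~\ref{lemma:intersect_small_set} supplies the uniform perturbation $T$, and your counterweight $\nu$ on the ray from $\mu_1$ through $\mu_0$ is the paper's $\mu_{\operatorname{corr}}=\mu_0-\tfrac{L_{\mu_0}}{\|r\|_2}r$ from Lemma~\ref{lemma:adjustment_policy_defi}, giving $\lambda\le\gamma/L_{\mu_0}$ as in Corollary~\ref{coro:coro_adjust}; Lemma~\ref{lemma:existence_for_all_type} then finishes, with the paper simply fixing $\gamma=\epsilon/(\sqrt{|\Omega|}+1/L_{\mu_0})$ explicitly.
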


The proof is provided in Appendix~\ref{sec:appen_sufficient_proofs}. 
By Lemma~\ref{lemma:intersect_small_set}, there exists a mapping that moves the support of an optimal policy to beliefs whose utility is guaranteed regardless of the exact type. 
The technical device in Appendix~\ref{app:alt_proofs_geo} then  constructs the signal policy required by Lemma~\ref{lemma:existence_for_all_type}, which completes the proof.

\subsection{Necessary Condition for Continuity and Robustness}
\label{sec:nece_example}

Proposition~\ref{lemma:set_upper_continuous} implies that continuity is necessary for robustness:
\begin{lemma}
    \label{prop:continuous_robust_nece}
If a BP model is MM-robust or RM-robust, then it is also continuous.  
\end{lemma}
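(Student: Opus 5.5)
The plan is to prove the contrapositive in spirit: assume $G(\theta_0)$ is MM-robust (the RM case is identical), take any sequence $\theta_n\to\theta_0$, and show that $\max_{\pi\in\Pi}W(\theta_n,\pi)\to\max_{\pi\in\Pi}W(\theta_0,\pi)$. Write $M(\theta):=\max_{\pi\in\Pi}W(\theta,\pi)$. It suffices to prove two bounds for each $\epsilon>0$ and all large $n$. The upper bound is $M(\theta_n)\le M(\theta_0)+\epsilon$. The lower bound is $M(\theta_n)\ge M(\theta_0)-\epsilon$.

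The upper bound follows directly from Proposition~\ref{lemma:set_upper_continuous}, which gives upper semicontinuity of $\theta\mapsto M(\theta)$. Concretely, take the $\delta$ from that proposition and the singleton set $\Theta=\{\theta_n\}$. Once $\|\theta_n-\theta_0\|\le\delta$, the middle inequality of the proposition yields $M(\theta_n)\le M(\theta_0)+\epsilon$.

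For the lower bound I use robustness with a two-point type set. Fix $\epsilon>0$ and let $\delta$ be the constant from Definition~\ref{def:def_robust_BP_1}. For $n$ large enough that $\theta_n\in\operatorname{Ball}_{\delta}(\theta_0)$, set $\Theta_n:=\{\theta_0,\theta_n\}$. This set is nonempty, lies in the ball, and contains $\theta_0$, so the definition applies. By Proposition~\ref{lemma:exis_max_regret}, a max-min policy $\pi_1^{(n)}$ for $\Theta_n$ exists, and a regret-minimizing policy $\pi_2^{(n)}$ exists in the RM case. Robustness then gives
\[
\left|W(\theta_n,\pi^{(n)}_i)-M(\theta_0)\right|<\epsilon .
\]
Since $W(\theta_n,\pi_i^{(n)})\le M(\theta_n)$, we obtain $M(\theta_n)>M(\theta_0)-\epsilon$.

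Combining the two bounds gives $|M(\theta_n)-M(\theta_0)|\le\epsilon$ for all large $n$, which is exactly continuity in the sense of Definition~\ref{def:def_continuous_BP}.

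There is no real obstacle; the argument is a direct unpacking of the definitions. The only points needing care are two. First, the chosen $\Theta_n$ must satisfy the definition's requirement that it contain $\theta_0$, which is why I use the two-point set rather than $\{\theta_n\}$ alone. Second, robustness quantifies over every max-min (or regret-minimizing) policy, so I need at least one such policy to exist, and Proposition~\ref{lemma:exis_max_regret} guarantees this.
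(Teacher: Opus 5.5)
Your proof is correct and follows the paper's argument. It gets upper semicontinuity of $\theta\mapsto\max_{\pi\in\Pi}W(\theta,\pi)$ from Proposition~\ref{lemma:set_upper_continuous}, and the matching lower bound $\max_{\pi\in\Pi}W(\theta,\pi)\geq W(\theta,\pi^*)>\max_{\pi\in\Pi}W(\theta_0,\pi)-\epsilon$ by applying robustness to a type set containing $\theta_0$. The differences are cosmetic: the paper takes $\Theta=\operatorname{Ball}_{\delta}(\theta_0)$ once rather than your two-point sets $\{\theta_0,\theta_n\}$, and it leaves implicit the existence of a max-min/regret-minimizing policy (Proposition~\ref{lemma:exis_max_regret}), which you rightly make explicit.
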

The proof is provided in Appendix~\ref{app:nece_proofs}. 
It follows immediately that if a BP model is not continuous, then its discontinuity can only be a downward jump in the sender's optimal value; consequently, robustness fails.
We now show that the conditions in Proposition~\ref{prop:suffi_all} are also necessary.
\begin{proposition}[Necessary Condition for Continuity and Robustness]
\label{prop:necessary_all}
 Let $G(\theta_0)$ be a BP model.
 If the conditions in Proposition \ref{lemma:sufficient_puri_new} fail, then there exists a positive constant $C^*$ and  a sequence of receiver types $\{\theta_n\}_{n\in \mathbb{N}^+}$ converging to $\theta_0$ such that $\displaystyle \limsup_{n\rightarrow\infty} \left[ \max_{\pi\in \Pi}W(\theta_n,\pi)\right] \leq\max_{\pi\in \Pi}W(\theta_0,\pi)-C^*$.
Hence, $G(\theta_0)$ is neither continuous nor robust under either behavioral  criterion.
\end{proposition}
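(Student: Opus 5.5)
The plan is to build one explicit perturbation that makes the receiver break every "fragile" tie against the sender, then pass to the limit by compactness.

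\textbf{Step 1: the perturbed types.} For $n\in\mathbb{N}^+$ define
\begin{equation*}
u_{\theta_n}(a,\omega)=u_{\theta_0}(a,\omega)-\tfrac{1}{n}\,\mathbbm{1}\bigl[\dim B(\theta_0,a)<|\Omega|-1\bigr]-\tfrac{1}{n^2}\,v(a,\omega)~,
\end{equation*}
so that $\theta_n\to\theta_0$. Two properties follow.
\begin{enumerate}
\item Suppose $\dim B(\theta_0,a)<|\Omega|-1$. Then at every $\mu\in B(\theta_0,a)$ some other action $c$ is also a best reply at $\theta_0$. Otherwise $a$ would be the unique best reply on a neighbourhood of $\mu$, and $B(\theta_0,a)$ would be full-dimensional. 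Take $c$ full-dimensional if possible. For $n$ large the $1/n$ penalty dominates the $1/n^2$ term, so, combined with the usual upper hemicontinuity of $B$ (as in Lemma~\ref{lemma:upper_sets_appendix}), $a$ is never a best reply at $\theta_n$ near its old region.
\item Duplicates $b\in R(\theta_0,a)$ share the best-reply region, hence the dimension, so they receive the same $1/n$ term. Among them the receiver at $\theta_n$ picks an action minimizing $\sum_\omega v(b,\omega)\mu(\omega)$.
\end{enumerate}

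\textbf{Step 2: a limsup bound.} Define
\begin{equation*}
\tilde V(\mu)=\max\Bigl\{\min_{b\in R(\theta_0,a)}\textstyle\sum_\omega v(b,\omega)\mu(\omega)\;:\; a\in A^*(\theta_0,\mu),\ \dim B(\theta_0,a)=|\Omega|-1\Bigr\}~,
\end{equation*}
with $\tilde V=0$ if the set is empty. This function is upper semicontinuous and satisfies $\tilde V\le V(\theta_0,\cdot)$. Moreover $\tilde V(\mu)=V(\theta_0,\mu)$ exactly when $\mu$ is non-critical at $\theta_0$, by Definition~\ref{def:critical_mu}.

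The key claim is: if $\mu_n\to\mu$, then $\limsup_n V(\theta_n,\mu_n)\le\tilde V(\mu)$. To prove it, take a subsequence attaining the limsup along which the induced action $a_n\equiv a$ is constant, which is possible since $A$ is finite.
\begin{itemize}
\item By closedness of best replies, $a\in A^*(\theta_0,\mu)$.
\item By Step 1(i), $a$ is full-dimensional at $\theta_0$.
\item By Step 1(ii), $v_a\cdot\mu_n\le v_b\cdot\mu_n$ for all $b\in R(\theta_0,a)$; in the limit $v_a\cdot\mu=\min_{b}v_b\cdot\mu$.
\end{itemize}
Hence $V(\theta_n,\mu_n)=v_a\cdot\mu_n\to v_a\cdot\mu\le\tilde V(\mu)$.

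I expect this step to be the main obstacle. The delicate points are handling ties with several full-dimensional competitors, and making sure the $1/n$ versus $1/n^2$ scaling really excludes lower-dimensional actions uniformly near the limit belief.

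\textbf{Step 3: compactness and the gap.} Let $\pi_n$ be optimal at $\theta_n$. Pass to a subsequence converging weakly to some $\pi\in\Pi$ (Lemma~\ref{lemma:Pi_compact}) along which $\max_{\pi'} W(\theta_n,\pi')$ converges to its limsup. By Skorokhod representation, or a standard weak-convergence argument for upper semicontinuous integrands, Step 2 gives
\begin{equation*}
\limsup_n \max_{\pi'} W(\theta_n,\pi')\le \int\tilde V\,d\pi\le \tilde V^*:=\max_{\pi'\in\Pi}\int\tilde V\,d\pi'~.
\end{equation*}
Set $C^*=\max_{\pi}W(\theta_0,\pi)-\tilde V^*$. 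Since $\tilde V\le V(\theta_0,\cdot)$, we have $C^*\ge0$.

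Suppose $C^*=0$. The concavification of the upper semicontinuous function $\tilde V$ at $\mu_0$ is attained by a policy $\pi^*$ with finitely many (at most $|\Omega|$) atoms, by Carathéodory's theorem and upper semicontinuity. Then $\int\tilde V\,d\pi^*=\max_\pi W(\theta_0,\pi)\ge\int V(\theta_0,\cdot)\,d\pi^*$. Together with $\tilde V\le V(\theta_0,\cdot)$, this forces equality at every atom. So $\pi^*$ is optimal at $\theta_0$ and every point of its support is non-critical, contradicting the failure of Proposition~\ref{lemma:sufficient_puri_new}'s condition. Hence $C^*>0$.

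\textbf{Conclusion.} The sequence $\theta_n\to\theta_0$ has optimal value bounded away from $\max_{\pi}W(\theta_0,\pi)$, so continuity fails. Lemma~\ref{prop:continuous_robust_nece} then rules out MM-robustness and RM-robustness.
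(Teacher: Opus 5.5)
Your argument is correct, and for the sequence part it is close to the paper's. Your $\tilde V$ is exactly $V(\sigma(\theta_0),\cdot)$, the indirect utility of the paper's virtual type (Definition~\ref{def:pseudo_type}). Your Step~2 is the key pointwise bound behind Lemma~\ref{lemma:nece_sequence}. Your two-scale perturbation does the same job as the paper's single-scale one ($+\tfrac{2}{n}-\tfrac{1}{n}v$ on full-dimensional actions, $-\tfrac{1}{n}$ on the rest). Passing to the limit by Skorokhod and reverse Fatou, instead of a jointly upper semicontinuous $f(x,\mu)$ plus Berge, is an equivalent device.

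You genuinely differ in showing $C^*>0$, which is the paper's Lemma~\ref{lemma:nece_pseudo_gap}. The paper enumerates the finitely many \emph{basic} policies of the virtual type: these are supported on extreme points of the virtual regions, with weights pinned down by Bayes plausibility. It then bounds the gap below by $\min\{C_1,q_1q_2\}$. You instead take a single finite-support maximizer of $\int\tilde V\,d\pi$ and use the pointwise equivalence ``$\tilde V(\mu)=V(\theta_0,\mu)$ iff $\mu$ is non-critical'' to reach a contradiction. Your route is shorter, avoids the extreme-point machinery entirely, and identifies $C^*$ as the exact gap $\max_\pi W(\theta_0,\pi)-\max_\pi W(\sigma(\theta_0),\pi)$. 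The paper's route yields a constant assembled from finitely many explicit quantities, but that extra structure is not needed for positivity.

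One step needs tightening, and it is the obstacle you flagged. In Step~1(i), ``take $c$ full-dimensional if possible'' leaves open the case where every competing best reply is itself lower-dimensional. Such a competitor carries the same $\tfrac{1}{n}$ penalty, so nothing is gained. Also, arguing at beliefs in $B(\theta_0,a)$ and then invoking upper hemicontinuity does not by itself control the nearby beliefs $\mu_n$ that Step~2 actually uses.

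The missing fact is that the full-dimensional best-reply regions cover $\Delta(\Omega)$ (Lemma~\ref{lemma:lower_dimensional_set_subset}). Quickly: the union of lower-dimensional regions is closed and nowhere dense in the simplex. The closed union of the full-dimensional regions contains its complement, so it is dense and hence equals $\Delta(\Omega)$. Applied directly at each belief $\mu$, this gives a full-dimensional $c\in A^*(\theta_0,\mu)$, and therefore $\sum_{\omega}\bigl(u_{\theta_n}(c,\omega)-u_{\theta_n}(a,\omega)\bigr)\mu(\omega)\ge\tfrac{1}{n}-\tfrac{1}{n^2}>0$ for every $n\ge 2$ and every $\mu$. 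So lower-dimensional actions are never best replies along your sequence, with no uniformity or hemicontinuity argument required (compare Lemma~\ref{lemma:nece_remove_clear}).

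The same lemma shows that the index set defining $\tilde V$ is never empty, and you need this too. Under your convention $\tilde V=0$ on an empty index set, a critical $\mu$ with $V(\theta_0,\mu)=0$ would satisfy $\tilde V(\mu)=V(\theta_0,\mu)$. That would break the ``exactly when'' claim used in Step~3.
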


The construction of this adverse sequence relies on a geometric observation. For each action $a$, if $B(\theta,a)$ is nonempty and lower-dimensional, 
then there exists another action $b$ with a full-dimensional best-reply region such that $B(\theta,a)\subseteq B(\theta,b)$. Consequently, the full-dimensional best-reply regions cover the entire simplex. This observation is stated as Lemma~\ref{lemma:lower_dimensional_set_subset} in Appendix~\ref{app:nece_proofs}.

We prove Proposition~\ref{prop:necessary_all} in two steps. First, we define a \emph{virtual type} and show in Lemma~\ref{lemma:nece_pseudo_gap} that, when the conditions in Proposition~\ref{prop:suffi_all} fail, the optimal value at the virtual type is lower than that at the original type by a positive constant.
Then, in Lemma~\ref{lemma:nece_sequence}, we approximate the virtual type by a sequence of types converging to $\theta_0$, obtaining an adverse sequence as required by Proposition~\ref{prop:necessary_all}.

The virtual type represents a \emph{virtual receiver} whose best-reply regions are constructed from those of the original type by two modifications: we remove all lower-dimensional regions and, at each posterior, retain only the sender-worst actions within each duplicate class. Using only the resulting collection of regions, we define the sender's indirect utility function---and hence her expected utility function---at the virtual type without specifying receiver utilities. Formally,

\begin{definition}
    \label{def:pseudo_type}
 Let $G(\theta)$ be a BP model.  The virtual type $\sigma(\theta)$ is defined as follows. If $B(\theta,a)=\emptyset$ or $\dim(B(\theta,a))<|\Omega|-1$, then $B(\sigma(\theta),a)=\emptyset$; otherwise,
  \begin{equation*}
      B(\sigma(\theta),a)=\{\mu\in B(\theta,a):\sum_{\omega\in\Omega} v(a,\omega)\mu(\omega)\leq \sum_{\omega\in\Omega} v(b,\omega)\mu(\omega),\text{ for every } b\in R(\theta,a)  \}~.
  \end{equation*}
The virtual best-reply regions cover every posterior belief.\footnote{Note that $\bigcup_{b\in R(\theta,a)}   B(\sigma(\theta),b)= B(\theta,a)$ for every action $a$ such that $\dim(B(\theta,a))=|\Omega|-1$.}
For every posterior belief $\mu$, the receiver's best replies are $  A^*(\sigma(\theta),\mu):=\{a\in A: \mu\in B(\sigma(\theta),a)\}$.
The indirect utility function is $V(\sigma(\theta),\mu):=\max_{a\in  A^*(\sigma(\theta),\mu)} \left[ \sum_{\omega\in \Omega} v(a,\omega)\mu(\omega)\right]$.
For every signal policy $\pi$, the expected utility is $W(\sigma(\theta),\pi):=\int_{\Delta(\Omega)}V(\sigma(\theta),\mu) \pi(d\mu)$.
\end{definition}
\begin{remark}
   The virtual type need not be generated by a receiver utility function.
\end{remark}
To gain intuition, consider a BP model of type $\theta_0$, whose best-reply regions are shown on the left of Figure~\ref{fig:type_violation} (satisfying Lemma~\ref{lemma:lower_dimensional_set_subset}).
There is a pair of duplicate actions, $a_1$ and $a_5$, and a lower-dimensional best-reply region corresponding to action $a_6$. 
The virtual type $\sigma(\theta_0)$ is shown on the right of Figure~\ref{fig:type_violation}. 

\begin{figure}[h!]
\centering
\begin{subfigure}  [b]{0.45\textwidth}
\centering
\begin{tikzpicture}[baseline = 3cm, xscale=0.19, yscale = 0.19]
        \draw[] (0,0) -- (20,0);
         \draw[] (0,0) -- (10,17.3);
           \draw[]  (10,17.3) -- (20,0);
             \draw[black, dashed, thick]  (9,7) -- (6,10.38);
    \draw[black, dashed, thick]  (9,7) -- (2,0);
     \draw[black, dashed, thick]  (9,7) -- (17,5.19);
     \draw[red,thick]  (15,0) -- (19,1.7);
   
         \draw[<-,thick] (0.4,5) -- (4,5);
         \draw[<-,thick] (18,-2) -- (18,0.5);

         \draw[->,thick]  (19,1.7) -- (20,6);

            \node[] at (10.3,12) (9) {$a_3$};   
        \node[] at (-2,5) (9) {$a_2$};    
       \node[] at (12,3) (9) { $a_1$, $a_5$};      
       \node[] at (18,-3) (9) {$a_4$};  
       \node[] at (22,7) (9) {$a_6$};  
     
    \end{tikzpicture}
    \caption{original type $\theta_0$}
    \end{subfigure}
\begin{subfigure}  [b]{0.45\textwidth}
\centering
\begin{tikzpicture}[baseline = 3cm, xscale=0.19, yscale = 0.19]
        \draw[] (0,0) -- (20,0);
         \draw[] (0,0) -- (10,17.3);
           \draw[]  (10,17.3) -- (20,0);
             \draw[black, dashed, thick]  (9,7) -- (6,10.38);

    \draw[black, dashed, thick]  (9,7) -- (2,0);
     \draw[black, dashed, thick]  (9,7) -- (17,5.19);
     \draw[black, dashed, thick]  (15,0) -- (19,1.7);
   \draw[blue, dashed, thick]  (11,0) -- (12,6);
         \draw[<-,thick] (0.4,5) -- (4,5);
         \draw[<-,thick] (18,-2) -- (18,0.5);
        
   \node[] at (10.3,12) (9) {$a_3$};  
        \node[] at (-2,5) (9) {$a_2$};
          \node[] at (8.5,3) (9) {$a_1$};      
       \node[] at (14.5,3) (9) {$a_5$};  
 \node[] at (18,-3) (9) {$a_4$};  
       
    \end{tikzpicture}
    \caption{virtual type $\sigma(\theta_0)$}
    \end{subfigure}
     \caption{Best-reply regions.}
     \label{fig:type_violation}
\end{figure}
The construction deliberately retains the sender-worst action in each duplicate class.
In particular, for any belief $\mu\in B(\theta_0,a_1)$, if the sender strictly prefers action $a_1$ to action $a_5$ at $\mu$, then $\mu\in B(\sigma(\theta_0),a_5)$; if she strictly prefers action $a_5$ to action $a_1$ at $\mu$, then $\mu\in B(\sigma(\theta_0),a_1)$; otherwise, $\mu\in B(\sigma(\theta_0),a_1)$ and $\mu\in B(\sigma(\theta_0),a_5)$.
Hence, any policy that induces the receiver to choose the sender-preferred action among $a_1$ and $a_5$ at $\theta_0$ cannot induce that action at $\sigma(\theta_0)$. In addition, the lower-dimensional set is removed, $B(\sigma(\theta_0),a_6)=\emptyset$.
Again, any policy that induces the receiver to choose $a_6$ at $\theta_0$ cannot induce that action at $\sigma(\theta_0)$.\footnote{We revisit the virtual type in Section~\ref{sec:tie_eq}, where we show that it can also be defined strategically.}

These two modifications ensure that the expected utility generated by any policy at $\sigma(\theta)$ is no greater than that generated by the same policy at $\theta$. Taking the supremum over policies on both sides of this pointwise inequality yields $\max_{\pi\in\Pi}W(\sigma(\theta),\pi)\leq\max_{\pi\in\Pi}W(\theta,\pi)$ (the upper-semicontinuity of $\mu\mapsto V(\sigma(\theta),\mu)$ is stated as Lemma~\ref{lemma:v_pse_upper} in Appendix~\ref{app:nece_proofs}).
If the conditions in Proposition~\ref{lemma:sufficient_puri_new} fail, this weak inequality between the optimal values at $\sigma(\theta)$ and $\theta$ becomes strict:

\begin{lemma}
\label{lemma:nece_pseudo_gap}
 Let $G(\theta_0)$ be a BP model.
 If the conditions in Proposition \ref{lemma:sufficient_puri_new} fail, then there exists a positive constant $C^*$ such that the virtual type $\sigma(\theta_0)$ defined in Definition~\ref{def:pseudo_type} satisfies $ \max_{\pi\in \Pi}W(\sigma(\theta_0),\pi)\leq\max_{\pi\in \Pi}W(\theta_0,\pi)-C^*$.
\end{lemma}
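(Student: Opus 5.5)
We argue by contradiction, using compactness of $\Pi$ (Lemma~\ref{lemma:Pi_compact}) and upper semicontinuity of $\mu\mapsto V(\sigma(\theta_0),\mu)$ (Lemma~\ref{lemma:v_pse_upper}). Since $\pi\mapsto W(\sigma(\theta_0),\pi)$ is upper semicontinuous on the compact set $\Pi$, it attains its maximum at some $\pi^\sigma$. By the pointwise inequality $V(\sigma(\theta_0),\mu)\le V(\theta_0,\mu)$ noted before the statement, $W(\sigma(\theta_0),\pi^\sigma)\le W(\theta_0,\pi^\sigma)\le \max_{\pi}W(\theta_0,\pi)$. Suppose equality held, so $\max_\pi W(\sigma(\theta_0),\pi)=\max_\pi W(\theta_0,\pi)$. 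Then $\pi^\sigma$ is optimal at $\theta_0$, and $V(\sigma(\theta_0),\mu)=V(\theta_0,\mu)$ for $\pi^\sigma$-almost every $\mu$. Once we show that no such policy exists, we can set $C^*:=\max_\pi W(\theta_0,\pi)-\max_\pi W(\sigma(\theta_0),\pi)>0$. This is a fixed number depending only on $G(\theta_0)$, which is all the statement requires.

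The key step is to show that any $\mu$ with $V(\sigma(\theta_0),\mu)=V(\theta_0,\mu)$ is non-critical at $\theta_0$. Take $a\in A^*(\sigma(\theta_0),\mu)$ attaining $V(\sigma(\theta_0),\mu)$. By Definition~\ref{def:pseudo_type}, $\mu\in B(\sigma(\theta_0),a)$, so $\dim B(\theta_0,a)=|\Omega|-1$ and $\mu\in B(\theta_0,a)$, hence $a\in A^*(\theta_0,\mu)$. Its sender payoff equals $V(\theta_0,\mu)$, so $a$ is an induced action at $\theta_0$, and condition 1 of Definition~\ref{def:critical_mu} holds. For condition 2, every $b\in R(\theta_0,a)$ satisfies $b\in A^*(\theta_0,\mu)$, so $\sum_\omega v(b,\omega)\mu(\omega)\le V(\theta_0,\mu)=\sum_\omega v(a,\omega)\mu(\omega)$. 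Conversely, $\mu\in B(\sigma(\theta_0),a)$ means $a$ is sender-worst within $R(\theta_0,a)$ at $\mu$, which gives the reverse inequality $\sum_\omega v(a,\omega)\mu(\omega)\le\sum_\omega v(b,\omega)\mu(\omega)$. Hence all duplicates give the sender equal payoff at $\mu$, and $\mu$ is non-critical.

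The main obstacle is passing from ``$\pi^\sigma$-almost every'' to ``every $\mu$ in the support'', since the conditions in Proposition~\ref{lemma:sufficient_puri_new} concern $\operatorname{supp}(\pi)$. The set $N$ of beliefs where equality holds need not be closed. So we modify the policy rather than argue via supports directly. Let $\tilde\pi$ be $\pi^\sigma$ restricted to $N$, which has full mass. Pushing forward through the barycenter of a finite-support approximation will not work, so instead we use the characterization that the optimal value is attained by a policy supported on finitely many points ($|\Omega|$ points, by Carath\'eodory applied to the concave envelope of the upper semicontinuous function $V(\sigma(\theta_0),\cdot)$). Concretely, we choose $\pi^\sigma$ with finite support from the start: the concavification of an upper semicontinuous function on the simplex is attained at finitely many points. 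Then every support point carries positive mass. Equality $W(\sigma(\theta_0),\pi^\sigma)=W(\theta_0,\pi^\sigma)$ together with the pointwise inequality then forces $V(\sigma(\theta_0),\mu)=V(\theta_0,\mu)$ at every support point. By the key step, each support point is non-critical, and $\pi^\sigma$ is optimal at $\theta_0$. This contradicts the failure of the conditions in Proposition~\ref{lemma:sufficient_puri_new}, yielding $C^*>0$.
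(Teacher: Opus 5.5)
Your proof is correct, and it takes a more direct route than the paper's. The paper enumerates the finitely many basic policies at $\sigma(\theta_0)$: those supported on extreme points of the virtual best-reply regions, with weights pinned down by Bayes plausibility. It then bounds the gap uniformly over this family. Policies that are non-optimal at $\theta_0$ lose at least $C_1$. Policies that are optimal must put weight at least $q_2$ on some critical extreme point, where by Lemma~\ref{lemma:inf_choice_2} the virtual type loses at least $q_1$. This gives $C^*=\min\{C_1,q_1q_2\}$.

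You instead observe that the lemma only asserts strict inequality between two attained maxima. You take a single finite-support optimal policy at $\sigma(\theta_0)$ and argue by contradiction. Equal optima force $V(\sigma(\theta_0),\mu)=V(\theta_0,\mu)$ at every atom. Your key step then makes every atom non-critical, which contradicts the failure of the condition. That key step is the contrapositive of Lemma~\ref{lemma:inf_choice_2}, and you reprove it correctly, including the two-sided inequality that yields condition~2 of Definition~\ref{def:critical_mu}.

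Your route uses only the Carath\'eodory part of Lemma~\ref{lemma:representation_basic}. It dispenses with the reduction to uniquely weighted supports and with the enumeration of extreme points. The paper's route buys an explicit lower bound on the gap, built from finitely many primitive quantities and valid uniformly for all basic policies, but the lemma as stated does not need this. The detour in your third paragraph about restricting to $N$ and pushing forward barycenters is unnecessary. Choosing a finite-support optimal policy at the outset, as you end up doing, settles the support issue immediately.
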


The proof is provided in Appendix~\ref{app:nece_proofs}. 
The intuition is that failure of the conditions in Proposition~\ref{lemma:sufficient_puri_new} implies that every optimal policy at $\theta_0$ has a critical belief in its support at which the receiver has an available action that is inferior from the sender's perspective (see Lemma~\ref{lemma:inf_choice} in Appendix~\ref{app:nece_proofs}). 
When moving from $\theta_0$ to $\sigma(\theta_0)$, the construction retains only such sender-inferior choices at critical beliefs (Lemma~\ref{lemma:inf_choice_2}), as illustrated in the preceding example.

Additionally, we use the fact that there is always an optimal policy supported on extreme points of best-reply regions whose associated weights are uniquely determined by the supporting extreme points (Lemma~\ref{lemma:representation_basic} in Appendix~\ref{app:nece_proofs}); we call such policies basic policies (Definition~\ref{def:basic_policies}).\footnote{For each $a$ and each $\theta$, both $B(\theta,a)$ and $B(\sigma(\theta),a)$ are generated by finite linear inequalities, and so have finite extreme points.}
So it suffices to consider finitely many basic policies at $\sigma(\theta_0)$. 
A basic policy at $\sigma(\theta_0)$ is either non-optimal at $\theta_0$, or it is optimal at $\theta_0$ but, at $\sigma(\theta_0)$, the receiver is forced to choose an action that is strictly inferior for the sender at some belief in the policy's support.
Hence, each such policy yields a strictly lower payoff at $\sigma(\theta_0)$ than the optimal value at $\theta_0$.

The following lemma, which shows that the virtual type $\sigma(\theta_0)$ can be approximated by a sequence that converges to $\theta_0$, completes the proof of Proposition~\ref{prop:necessary_all}.
 
\begin{lemma}
\label{lemma:nece_sequence}
For any BP model $G(\theta_0)$, consider the following sequence of types: for every $n$ and every pair $(a,\omega)$,
 \begin{equation*}
     u_{\theta_n}(a,\omega):=
     \begin{cases}
         u_{\theta_0}(a,\omega) + \tfrac{2}{n} -\tfrac{1}{n}v(a,\omega) & \text{if } B(\theta_0,a)\neq\emptyset \,,\,\dim(B(\theta_0,a))=|\Omega|-1
         \\
         u_{\theta_0}(a,\omega) - \tfrac{1}{n} & \text{otherwise} 
     \end{cases}~.
 \end{equation*}
Then $\displaystyle\limsup_{n\rightarrow \infty} \left[ \max_{\pi\in \Pi}W(\theta_n,\pi)\right]\leq \max_{\pi\in \Pi}W(\sigma(\theta_0),\pi)$.
\end{lemma}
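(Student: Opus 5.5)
The plan is to show that, for all large $n$, the best-reply structure at $\theta_n$ is dominated by the virtual type. Concretely, there is $N$ such that for every $n\ge N$ and every $\mu\in\Delta(\Omega)$ we want
\[
V(\theta_n,\mu)\le V(\sigma(\theta_0),\mu)+\eta_n,\qquad \eta_n\to 0 .
\]
Since $\pi$ is a probability measure, integrating gives $W(\theta_n,\pi)\le W(\sigma(\theta_0),\pi)+\eta_n$ for every $\pi\in\Pi$. Taking the maximum over $\pi$, with the maximum at $\sigma(\theta_0)$ attained by Lemma~\ref{lemma:v_pse_upper} and compactness of $\Pi$, and letting $n\to\infty$ then yields the claim.

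\textbf{Step 1: full-dimensional actions beat the rest.} Write $F$ for the set of actions with $\dim B(\theta_0,a)=|\Omega|-1$. For $a\in F$ the perturbation adds $\tfrac{2}{n}-\tfrac{1}{n}v(a,\omega)\ge \tfrac1n$ state by state, because $v\le 1$. For $a\notin F$ the perturbation subtracts $\tfrac1n$. Fix $\mu$ and take $b\in A^*(\theta_n,\mu)$. I claim $b\in F$. If not, then $b$'s expected payoff beats every $a\in F$ at $\theta_n$. By the covering Lemma~\ref{lemma:lower_dimensional_set_subset}, some $a\in F$ is a $\theta_0$-best reply at $\mu$. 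Then
\[
E_\mu u_{\theta_n}(a)\ \ge\ E_\mu u_{\theta_0}(a)+\tfrac1n\ \ge\ E_\mu u_{\theta_0}(b)+\tfrac1n\ >\ E_\mu u_{\theta_0}(b)-\tfrac1n\ =\ E_\mu u_{\theta_n}(b),
\]
a contradiction. So only actions in $F$ are ever chosen at $\theta_n$.

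\textbf{Step 2: $\theta_n$-best replies are close to $\theta_0$-best replies.} This is an upper hemicontinuity argument, uniform in $\mu$ by compactness of the simplex. Let $g_a(\mu)=\max_c E_\mu u_{\theta_0}(c)-E_\mu u_{\theta_0}(a)$, which is continuous. If $b\in A^*(\theta_n,\mu)$, then $g_b(\mu)\le 4/n$, since the perturbations are uniformly $O(1/n)$. Because $B(\theta_0,b)$ is a polytope defined by linear inequalities, $g_b(\mu)\le 4/n$ forces $\operatorname{dist}(\mu,B(\theta_0,b))\le c/n$ by a Hoffman-type bound. Compactness alone also suffices here, giving a distance $\rho_n\to0$.

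\textbf{Step 3: duplicates resolve sender-worst.} This is the crux. Within a duplicate class $R(\theta_0,a)\subseteq F$, the receiver at $\theta_n$ compares $E_\mu u_{\theta_0}+\tfrac2n-\tfrac1n E_\mu v$. He therefore picks the duplicate with the \emph{lowest} sender payoff at $\mu$, which is exactly the rule defining $B(\sigma(\theta_0),\cdot)$. Combining Steps 1–3, any $b\in A^*(\theta_n,\mu)$ satisfies:
\begin{itemize}
\item $b\in F$;
\item $b$ is a $\theta_0$-best reply at some $\mu'$ with $\|\mu-\mu'\|\le\rho_n$;
\item $b$ minimizes $E_\mu v$ within its duplicate class at $\mu$.
\end{itemize}

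\textbf{The main obstacle.} The three properties in Step 3 hold at nearby points rather than exactly at $\mu$, and $V(\sigma(\theta_0),\cdot)$ is only upper semicontinuous, so a pointwise bound at $\mu$ is not immediate. Two things must be combined. First, the set $\{\mu: \mu\in B(\theta_0,b),\ E_\mu v(b)\le E_\mu v(c)\ \forall c\in R(\theta_0,b)\}$ is exactly $B(\sigma(\theta_0),b)$. Second, the sender-worst condition holds exactly at $\mu$, and it is a closed linear condition.

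I would first try to avoid approximation entirely. Suppose $\mu\in B(\theta_n,b)$ with $b$ sender-worst at $\mu$. Because $b$'s class shares the same $u_{\theta_0}$-payoffs, and the only non-duplicate competitors are other classes shifted by at most $\tfrac1n$, I would try to show that $\mu$ lies within $c/n$ of a point of $B(\sigma(\theta_0),b)$. Failing that, I would argue by contradiction. Suppose $V(\theta_n,\mu_n)>V(\sigma(\theta_0),\mu_n)+\eta$ along a subsequence, with a fixed chosen action $b$ (finitely many actions). Pass to a limit $\mu_n\to\bar\mu$. The limit satisfies $\bar\mu\in B(\theta_0,b)$ and, by closedness, the sender-worst condition, so $\bar\mu\in B(\sigma(\theta_0),b)$. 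Continuity of $E_\mu v(b)$ then gives $\limsup V(\theta_n,\mu_n)\le E_{\bar\mu}v(b)\le V(\sigma(\theta_0),\bar\mu)$.

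This only bounds by $V(\sigma(\theta_0),\cdot)$ at the limit point, not at $\mu_n$. So I would instead integrate directly. Take $\pi_n$ optimal at $\theta_n$ and push it forward to a joint measure on (belief, chosen action). Extract a weak limit, which is Bayes-plausible with beliefs in the corresponding closed sets $B(\sigma(\theta_0),b)$. Upper semicontinuity, as in Proposition~\ref{lemma:set_upper_continuous}, then gives $\limsup\max W(\theta_n,\cdot)\le \max W(\sigma(\theta_0),\cdot)$. This measure-limit route is the one I would commit to.
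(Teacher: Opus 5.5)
Your committed measure-limit route is correct and is essentially the paper's proof. The closedness fact you derive (if $b\in A^*(\theta_n,\mu_n)$ along a subsequence with $\mu_n\to\bar\mu$, then $\bar\mu\in B(\sigma(\theta_0),b)$) is exactly the paper's key step. The paper packages it differently: it proves joint upper semicontinuity of $f(x,\mu)$, which equals $V(\theta(x),\mu)$ for $x\neq 0$ (your perturbation with $\tfrac{1}{n}$ replaced by $|x|$) and $V(\sigma(\theta_0),\mu)$ at $x=0$, and then applies the Portmanteau theorem and Berge's maximum theorem instead of lifting to belief--action measures.

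The obstacle you flagged does not actually arise. Take $b\in A^*(\theta_n,\mu)$ and any $c\in A^*(\sigma(\theta_0),\mu)$. Both are full-dimensional, so both are shifted by $\tfrac{2}{n}-\tfrac{1}{n}v$, and the $\theta_n$-comparison gives $\tfrac{1}{n}\bigl(E_\mu v(c)-E_\mu v(b)\bigr)\geq E_\mu u_{\theta_0}(c)-E_\mu u_{\theta_0}(b)\geq 0$. Thus the $-\tfrac{1}{n}v$ term makes the receiver sender-adverse across all tied classes, not only within duplicate classes. Hence $V(\theta_n,\mu)\leq V(\sigma(\theta_0),\mu)$ holds exactly for every $n$ and $\mu$, and the lemma follows by integration alone.
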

The formal proof is deferred to Appendix~\ref{app:nece_proofs}. The argument is similar to the upper semicontinuity results (Proposition~\ref{lemma:set_upper_continuous}). The distinction is that the argument here applies to a particular sequence, and the limiting upper bound is evaluated at the virtual type $\sigma(\theta_0)$ rather than at the limiting real type $\theta_0$.
The key step is to show that for every posterior belief $\mu$ and every sequence $\{\mu_n\}$ with $\mu_n\to\mu$, $\displaystyle\limsup_{n\rightarrow \infty} V(\theta_n,\mu_n)\leq V(\sigma(\theta_0),\mu)$. 
Thus, along the specified sequence of types, the indirect utilities are asymptotically bounded above by the indirect utility of the virtual type. The remainder of the proof follows by adapting the techniques used to prove Lemma~\ref{lemma:joint_upper_W}, concerning the upper semicontinuity of $W$, and Proposition~\ref{lemma:set_upper_continuous}.

Lemma~\ref{lemma:nece_sequence} also has a geometric interpretation. Along the sequence constructed above, the best-reply region of each action is almost contained in its counterpart at $\sigma(\theta_0)$. Formally, for every $a\in A$, $\displaystyle\limsup_{n\to\infty}B(\theta_n,a) \subseteq B(\sigma(\theta_0),a)$.\footnote{The set convergence notion is stated in Definition~\ref{def:set_convergence} in Appendix~\ref{app:mode_proofs}.}
In the example in Figure~\ref{fig:type_violation}, since $B(\theta_0,a_6)\subseteq B(\theta_0,a_4)$, slightly decreasing the receiver's utility from $a_6$, while increasing his utility from actions with full-dimensional best-reply regions, makes another action strictly preferable to $a_6$ at every belief (formally, Lemma~\ref{lemma:nece_remove_clear}). Hence, $B(\theta_n,a_6)=B(\sigma(\theta_0),a_6)=\emptyset$. 
For $a_1$ and $a_5$, $u_{\theta_0}(a_1,\omega)=u_{\theta_0}(a_5,\omega)$ for all states at $\theta_0$. The perturbation therefore transforms their best-reply comparison into
\begin{equation*}
    \sum_{\omega\in \Omega} u_{\theta_n}(a_1,\omega) \mu(\omega) 
    \geq 
    \sum_{\omega\in \Omega} u_{\theta_n}(a_5,\omega) \mu(\omega) 
    \Longleftrightarrow  
    \sum_{\omega\in \Omega} v(a_1,\omega) \mu(\omega) 
    \leq 
    \sum_{\omega\in \Omega} v(a_5,\omega) \mu(\omega)~.
\end{equation*}
This is precisely the constraint used to define the virtual type.  A formal version of this geometric interpretation (Lemma~\ref{lemma:nece_corre_alter}) is provided in Appendix~\ref{app:alt_proofs_geo}.

Lemmas~\ref{lemma:nece_pseudo_gap}~and~\ref{lemma:nece_sequence} complete the proof of Proposition~\ref{prop:necessary_all}.
We conclude that the sufficient conditions in Proposition~\ref{lemma:sufficient_puri_new} are also necessary for continuity and robustness. Thus, these notions are equivalent.
We summarize the necessary and sufficient conditions as follows. 
\begin{proposition}[Necessary and Sufficient Conditions for Continuity and Robustness]
\label{pro:nece_su_summary}
    For any BP model $G(\theta_0)$, continuity, MM-robustness, and RM-robustness hold if and only if there exists an optimal signal policy $\pi$ such that every posterior belief in its support is non-critical at $\theta_0$.
\end{proposition}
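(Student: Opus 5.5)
Proposition~\ref{pro:nece_su_summary} is the combination of the two directions already established, so the plan is to assemble them.

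\emph{Sufficiency.} Suppose some optimal signal policy at $\theta_0$ is supported only on beliefs that are non-critical at $\theta_0$. Then Proposition~\ref{prop:suffi_all} applies directly, and $G(\theta_0)$ is continuous, MM-robust, and RM-robust.

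\emph{Necessity.} We argue by contraposition. Suppose no optimal policy is supported only on non-critical beliefs, so the hypothesis of Proposition~\ref{lemma:sufficient_puri_new} fails.
\begin{itemize}
\item Lemma~\ref{lemma:nece_pseudo_gap} gives a constant $C^*>0$ with $\max_{\pi}W(\sigma(\theta_0),\pi)\le \max_\pi W(\theta_0,\pi)-C^*$.
\item Lemma~\ref{lemma:nece_sequence}, applied to the explicit sequence $\theta_n\to\theta_0$, gives $\limsup_n \max_\pi W(\theta_n,\pi)\le \max_\pi W(\sigma(\theta_0),\pi)$.
\item Chaining the two bounds yields $\limsup_n \max_\pi W(\theta_n,\pi)\le \max_\pi W(\theta_0,\pi)-C^*$, which is the conclusion of Proposition~\ref{prop:necessary_all}.
\end{itemize}
This bound contradicts Definition~\ref{def:def_continuous_BP}, so $G(\theta_0)$ is not continuous.

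It remains to show that MM- and RM-robustness also fail. Lemma~\ref{prop:continuous_robust_nece} states that either form of robustness implies continuity, so both fail as well. For a direct check, fix $\epsilon<C^*/2$ and any $\delta>0$. Choose $n$ large enough that $\theta_n\in\operatorname{Ball}_\delta(\theta_0)$ and $\max_\pi W(\theta_n,\pi)<\max_\pi W(\theta_0,\pi)-C^*/2$. Take $\Theta=\{\theta_0,\theta_n\}$. By Proposition~\ref{lemma:set_upper_continuous}, every max-min or regret-minimizing policy $\pi^*$ for $\Theta$ satisfies $W(\theta_n,\pi^*)\le\max_\pi W(\theta_n,\pi)$. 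Hence the deviation at $\theta_n$ from $\max_\pi W(\theta_0,\pi)$ exceeds $\epsilon$, and Definitions~\ref{def:def_robust_BP_1} and~\ref{def:def_robust_BP_2} both fail.

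\emph{Where the difficulty lies.} The assembly above is routine. The substantive step is Lemma~\ref{lemma:nece_pseudo_gap}, which we take as given. Its content is a finiteness argument: one reduces to finitely many basic policies (Lemma~\ref{lemma:representation_basic}) and shows that each of them has a strict payoff gap at $\sigma(\theta_0)$ (Lemmas~\ref{lemma:inf_choice} and~\ref{lemma:inf_choice_2}). Finiteness is what makes the minimum of these gaps a positive constant $C^*$.
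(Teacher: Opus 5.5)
Your proposal is correct and follows the paper's own assembly: sufficiency is Proposition~\ref{prop:suffi_all}, and necessity chains Lemma~\ref{lemma:nece_pseudo_gap} with Lemma~\ref{lemma:nece_sequence} to obtain Proposition~\ref{prop:necessary_all}, after which Lemma~\ref{prop:continuous_robust_nece} carries the failure of continuity over to both robustness notions. Your additional direct check with $\Theta=\{\theta_0,\theta_n\}$ is also valid; it implicitly uses Proposition~\ref{lemma:exis_max_regret}, which guarantees that max-min and regret-minimizing policies exist, so the failure is not vacuous.
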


\section{Tie-Breaking, Genericity, and Non-Robustness}
\label{sec:tie_eq}
We now establish the connection between robustness to receiver preferences and robustness to tie-breaking rules and discuss the genericity of robustness.
\begin{definition}
\label{defi:robust_tie_breaking}
A BP model $G(\theta)$ is robust to tie-breaking rules if
\begin{equation*}
    \sup_{\pi\in\Pi} \left [ \int_{\Delta(\Omega)}  \min_{a\in A^*(\theta,\mu)}  \left(\sum_{\omega\in \Omega} \mu(\omega) v(a,\omega) \right)  \pi(d\mu) \right ]= \max_{\pi\in\Pi} W(\theta,\pi)~.
\end{equation*}
\end{definition}

When the receiver has multiple best replies, his choice may be determined by a preference that is not captured by his utility function and may vary across decision instances. 
Recall that the assumption that the receiver's utility function is common knowledge simplifies an environment in which the sender is uncertain about the receiver's preferences. 
Analogously, a prespecified tie-breaking rule is a simplification of this underlying selection process. 
We show that robustness to tie-breaking rules---informally, robustness to such secondary preferences---is equivalent to robustness to preferences represented by his utility function.

\begin{Theorem}
 \label{thy:equive_tie_breaking}
For any BP model $G(\theta_0)$, the statement that $G(\theta_0)$ is robust to tie-breaking rules is equivalent to those in Theorem~\ref{theorem:bp_equivalent}.
\end{Theorem}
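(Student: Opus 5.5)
By Proposition~\ref{pro:nece_su_summary}, it suffices to show that $G(\theta_0)$ is robust to tie-breaking rules if and only if there is an optimal policy whose support consists only of beliefs that are non-critical at $\theta_0$. Write
\[
\underline{V}(\theta_0,\mu):=\min_{a\in A^*(\theta_0,\mu)}\sum_{\omega}\mu(\omega)v(a,\omega)
\]
for the sender-worst indirect utility. Then robustness to tie-breaking says that $\sup_{\pi}\int \underline{V}(\theta_0,\mu)\,\pi(d\mu)=\max_\pi W(\theta_0,\pi)$. The "$\le$" direction is trivial, since $\underline V\le V$ pointwise.

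\emph{Sufficiency.} Let $\pi$ be optimal with non-critical support. I would use Lemma~\ref{lemma:intersect_small_set} with a small $\gamma$, together with the device from Appendix~\ref{app:alt_proofs_geo}, to move each supported belief $\mu$ to a nearby belief $T(\mu)$. The aim is to place $T(\mu)$ in the interior of $\bigcup_{b\in R(\theta_0,a)}B(\theta_0,b)$, where $a$ is the witnessing induced action. At such an interior point, $A^*(\theta_0,T(\mu))=R(\theta_0,a)$ exactly, because the region is full-dimensional and any other non-duplicate action is strictly worse there. Condition~2 of Definition~\ref{def:critical_mu} says all duplicates give the same sender payoff at $\mu$. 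By continuity, the payoff of the worst duplicate at $T(\mu)$ is then within $O(\gamma)$ of $V(\theta_0,\mu)$. The rescaling and counterweight step costs $O(\gamma)$, because $v\in[0,1]$ and the counterweight has small mass.

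A cleaner alternative is to apply Lemma~\ref{lemma:intersect_small_set} with the particular type $\theta'$ mentioned after it. That type has its regions strictly inside the union, so strict preference holds at $\theta_0$ as well. Either way, letting $\gamma\to0$ shows that the supremum reaches $\max_\pi W(\theta_0,\pi)$.

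\emph{Necessity.} Suppose the condition fails. Then Lemma~\ref{lemma:nece_pseudo_gap} gives $\max_\pi W(\sigma(\theta_0),\pi)\le \max_\pi W(\theta_0,\pi)-C^*$. I would show the pointwise bound $\underline V(\theta_0,\mu)\le V(\sigma(\theta_0),\mu)$ for every $\mu$. The virtual best-reply regions cover the simplex, and every action in $A^*(\sigma(\theta_0),\mu)$ lies in $A^*(\theta_0,\mu)$. It is therefore enough to exhibit some $a\in A^*(\sigma(\theta_0),\mu)$ with $\mu\in B(\theta_0,a)$. For any such $a$, its payoff is at least the minimum over $A^*(\theta_0,\mu)$, so the maximum $V(\sigma(\theta_0),\mu)$ is at least $\underline V(\theta_0,\mu)$. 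Integrating then gives
\[
\sup_\pi\int\underline V(\theta_0,\mu)\,\pi(d\mu)\le \max_\pi W(\sigma(\theta_0),\pi)<\max_\pi W(\theta_0,\pi),
\]
so robustness to tie-breaking fails.

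The main obstacle is the sufficiency step. The difficulty is guaranteeing that the perturbed beliefs avoid every boundary where a sender-worse, non-duplicate action becomes tied. The interior-point argument handles this, but measurability of $T$ and the uniformity of the $O(\gamma)$ losses have to be inherited carefully from the proof of Lemma~\ref{lemma:intersect_small_set}.
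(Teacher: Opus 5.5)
Your proposal is correct and follows the paper's route. For sufficiency, you perturb each supported non-critical belief to a point where $A^*(\theta_0,T(\mu))=R(\theta_0,a)$ and then apply the adjustment device; the paper's Lemma~\ref{lemma:mapping_tie_1} does this by moving toward a fixed relative-interior point $\mu_a$ rather than by your $\theta_a$-projection. For necessity, you combine Lemma~\ref{lemma:nece_pseudo_gap} with exactly your pointwise bound, which the paper phrases via $V_\Sigma=V(\sigma(\theta_0),\cdot)$ from Lemma~\ref{lemma:eq_tie_pseudo}. One small care point: ``interior'' must mean relative interior in the affine hull of the simplex, not merely interior relative to $\Delta(\Omega)$, where a simplex vertex can still tie with a lower-dimensional action. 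Your $\theta'$ variant handles this automatically through its strict $2\delta$ margin.
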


The equivalence is established by the fact that all notions have the same necessary and sufficient conditions shown in Proposition~\ref{pro:nece_su_summary}.

\begin{proposition}
\label{prop:nece_suf_tie_breaking}
A BP model $G(\theta_0)$ is robust to tie-breaking if and only if there exists an optimal signal policy $\pi$ such that all posterior beliefs in its support are non-critical at $\theta_0$.
\end{proposition}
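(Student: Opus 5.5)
We prove the two directions separately, reusing the machinery of Section~\ref{sec:main_equivelant}. Write $\underline{V}(\theta_0,\mu):=\min_{a\in A^*(\theta_0,\mu)}\sum_{\omega}\mu(\omega)v(a,\omega)$ for the sender-worst indirect utility, and $\underline{W}(\theta_0,\pi)$ for its integral against $\pi$. Robustness to tie-breaking says $\sup_{\pi}\underline{W}(\theta_0,\pi)=\max_\pi W(\theta_0,\pi)$.

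\emph{Sufficiency.} Suppose an optimal $\pi$ is supported on non-critical beliefs. Fix $\gamma>0$ and use Lemma~\ref{lemma:intersect_small_set}. Its proof supplies a type $\theta'\in\operatorname{Ball}_\delta(\theta_0)$ and a measurable map $T$ with $V(\theta,T(\mu))\geq V(\theta_0,\mu)-\sqrt{|\Omega|}\gamma$ for all nearby types $\theta$. We want the same bound for the worst-case value $\underline V(\theta_0,T(\mu))$.

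The natural route is to choose $T(\mu)$ in the interior of $\bigcup_{b\in R(\theta_0,a)}B(\theta_0,b)$, where $a$ is the induced action certifying non-criticality. This is possible because that union is full-dimensional, and $\mu$ lies in it. At an interior point $\mu'$, the only receiver best replies at $\theta_0$ are the duplicates in $R(\theta_0,a)$. Hence
\[
\underline V(\theta_0,\mu')=\min_{b\in R(\theta_0,a)}\sum_{\omega}v(b,\omega)\mu'(\omega).
\]
By condition 2 of Definition~\ref{def:critical_mu} and continuity, this is within $O(\gamma)$ of $V(\theta_0,\mu)$.

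The perturbed beliefs must then be reassembled into a Bayes-plausible policy. For this we apply the rescaling-plus-counterweight device of Appendix~\ref{app:alt_proofs_geo}, exactly as in the proof of Proposition~\ref{prop:suffi_all}. The counterweight has mass $O(\gamma)$ and loses at most that mass times the bound $1$ on $v$. The result is a policy $\pi_\gamma$ with $\underline W(\theta_0,\pi_\gamma)\geq \max_\pi W(\theta_0,\pi)-O(\gamma)$. Letting $\gamma\to0$ gives the equality.

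The main obstacle is making the choice $\mu\mapsto T(\mu)$ measurable and uniform over uncountably many support points. It must be uniform in the sense that $\|T(\mu)-\mu\|\le\gamma$ while the interior point is also at least some distance from the boundary. I would handle this as in Lemma~\ref{lemma:intersect_small_set}. Since there are finitely many actions, for each $a$ we fix an interior point $c_a$ of the full-dimensional region and set $T(\mu)=(1-t)\mu+tc_a$ with small $t$. This is interior by convexity, and measurability holds by selecting $a$ through a fixed priority order.

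\emph{Necessity.} Suppose no optimal policy is supported on non-critical beliefs. Lemma~\ref{lemma:nece_pseudo_gap} gives $\max_\pi W(\sigma(\theta_0),\pi)\leq \max_\pi W(\theta_0,\pi)-C^*$. It then suffices to show $\underline V(\theta_0,\mu)\leq V(\sigma(\theta_0),\mu)$ for every $\mu$, since integrating yields
\[
\sup_\pi\underline W(\theta_0,\pi)\leq \max_\pi W(\sigma(\theta_0),\pi)<\max_\pi W(\theta_0,\pi).
\]
For the pointwise inequality, fix $\mu$. By Lemma~\ref{lemma:lower_dimensional_set_subset}, $\mu$ lies in some full-dimensional region $B(\theta_0,b)$. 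Within $R(\theta_0,b)$, the sender-worst action $b^*$ at $\mu$ satisfies $\mu\in B(\sigma(\theta_0),b^*)$ by Definition~\ref{def:pseudo_type}. Since $b^*\in A^*(\theta_0,\mu)$, we get $\underline V(\theta_0,\mu)\leq\sum_\omega v(b^*,\omega)\mu(\omega)\leq V(\sigma(\theta_0),\mu)$. This direction is therefore routine.
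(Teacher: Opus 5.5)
Your proposal is correct and follows essentially the paper's route. For sufficiency, you push each non-critical support point into the relative interior of its full-dimensional best-reply region, where only duplicates are best replies, and then apply the rescaling-plus-counterweight device of Lemma~\ref{lemma:adjustment_policy_defi}; for necessity, you combine Lemma~\ref{lemma:nece_pseudo_gap} with a pointwise bound of the sender-worst value by $V(\sigma(\theta_0),\cdot)$. The differences are cosmetic: your fixed-weight map $T(\mu)=(1-t)\mu+tc_a$ replaces the paper's step of length $\gamma$ toward $\mu_a$, and you prove the pointwise bound directly against the virtual type rather than through the rule $\Sigma$ and Lemma~\ref{lemma:eq_tie_pseudo}.
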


Unlike the potentially unique best response (PUBR) condition of \citet{lipnowski2024perfect}, which is sufficient for robustness to tie-breaking, our necessary and sufficient condition permits an optimal policy to induce duplicate receiver actions, even though such actions can never be uniquely optimal for the receiver. Recall Example~\ref{ex:main_text_eample}, in which an optimal policy is supported on the posterior beliefs $0$ and $0.75$, both of which are non-critical. At the latter posterior, the sender is indifferent between the receiver's duplicate actions, although this indifference does not hold at other posteriors. Nevertheless, the preceding proposition implies that the BP model is robust to tie-breaking.

We provide the formal proof in Appendix~\ref{app:tie_eq}. 
The sufficiency argument is analogous to the proof of Proposition~\ref{prop:suffi_all}. It constructs signal policies whose sender-worst expected utility approaches the benchmark optimum.

For the necessary direction, note that the virtual type constructed in Section~\ref{sec:nece_example} can be interpreted as a particular tie-breaking rule. Formally,
\begin{definition}
\label{def:special_tie}
For every BP model $G(\theta)$ and posterior belief $\mu$, among the receiver's best replies $A^*(\theta,\mu)$, the admissible action under $\Sigma$ satisfies:
 \begin{enumerate}
     \item $\dim(B(\theta,a))=|\Omega|-1$. 
     \item $\sum_{\omega\in \Omega} v(a,\omega)\mu(\omega)\leq \sum_{\omega\in \Omega} v(b,\omega)\mu(\omega)$ for every $b\in R(\theta,a)$.
     \item $\sum_{\omega\in \Omega} v(a,\omega)\mu(\omega)\geq \min_{c\in R(\theta,b)} \sum_{\omega\in\Omega}v(c,\omega)\mu(\omega)$ for every $b\in A^*(\theta,\mu)$ such that $\dim(B(\theta,b))=|\Omega|-1$.
 \end{enumerate}
\footnote{Lemma~\ref{lemma:lower_dimensional_set_subset} supplies a full-dimensional best reply; finiteness then guarantees a sender-worst duplicate in each class and a sender-best surviving class. If multiple actions satisfy these three properties, fix an ordering of $A$ and select the action with the smallest index.}
Let $V_{\Sigma}:\mathbb{R}^{|\Omega|\times |A|}\times\Delta(\Omega)$ denote the sender's indirect utility under this tie-breaking rule.  
\end{definition}
Under $\Sigma$, the receiver follows a three-stage selection rule: he first discards best replies whose best-reply regions are lower-dimensional; then, within each exact-duplicate class, he retains the sender-worst actions; finally, among the surviving actions across different duplicate classes, he selects the best for the sender.\footnote{The selection rule is reminiscent of that studied in Stackelberg games by \citet{von2010leadership}.}

\begin{lemma}
    \label{lemma:eq_tie_pseudo}
    For any BP model $G(\theta)$ and any posterior belief $\mu$, the virtual type (Definition~\ref{def:pseudo_type}) and the tie-breaking rule in Definition~\ref{def:special_tie} satisfy $V(\sigma(\theta),\mu)=V_{\Sigma}(\theta,\mu)$.
\end{lemma}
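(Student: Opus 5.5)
The plan is to unpack both sides at a fixed $\mu$ and show they are the same maximum of $\mu\cdot v(a)$ over the same set of actions. Write $\langle v(a),\mu\rangle:=\sum_{\omega}v(a,\omega)\mu(\omega)$.

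First, I will describe $A^*(\sigma(\theta),\mu)$ explicitly. By Definition~\ref{def:pseudo_type}, $a\in A^*(\sigma(\theta),\mu)$ if and only if three things hold: $\dim(B(\theta,a))=|\Omega|-1$, $\mu\in B(\theta,a)$ (i.e.\ $a\in A^*(\theta,\mu)$), and $\langle v(a),\mu\rangle\le\langle v(b),\mu\rangle$ for all $b\in R(\theta,a)$. Call this set $S(\mu)$. These are exactly conditions 1 and 2 of Definition~\ref{def:special_tie}, together with $a$ being a best reply. Then
\[
V(\sigma(\theta),\mu)=\max_{a\in S(\mu)}\langle v(a),\mu\rangle .
\]
Nonemptiness of $S(\mu)$ follows from Lemma~\ref{lemma:lower_dimensional_set_subset}, which supplies a full-dimensional best reply $b$ at $\mu$. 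Any duplicate $c\in R(\theta,b)$ has the same receiver payoffs, so $B(\theta,c)=B(\theta,b)$. Hence $c$ is a full-dimensional best reply, and by finiteness a sender-worst element of $R(\theta,b)$ lies in $S(\mu)$.

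Second, I will group $S(\mu)$ by duplicate classes. For each full-dimensional best reply $b$, the whole class $R(\theta,b)$ consists of full-dimensional best replies by the observation above. Its members lying in $S(\mu)$ are exactly the minimizers of $\langle v(\cdot),\mu\rangle$ over $R(\theta,b)$, since $R(\theta,c)=R(\theta,b)$ for $c\in R(\theta,b)$. It follows that
\[
V(\sigma(\theta),\mu)=\max_{b}\;\min_{c\in R(\theta,b)}\langle v(c),\mu\rangle ,
\]
where $b$ ranges over full-dimensional elements of $A^*(\theta,\mu)$.

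Third, I will check that the action $a$ selected by $\Sigma$ attains this max-min. Conditions 1–2 put $a$ in $S(\mu)$, so $\langle v(a),\mu\rangle\le V(\sigma(\theta),\mu)$. Condition 3 says $\langle v(a),\mu\rangle\ge\min_{c\in R(\theta,b)}\langle v(c),\mu\rangle$ for every such $b$, which gives $\langle v(a),\mu\rangle\ge V(\sigma(\theta),\mu)$. So $V_\Sigma(\theta,\mu)=\langle v(a),\mu\rangle=V(\sigma(\theta),\mu)$. The index-based selection among several admissible actions is irrelevant, because they all share this value.

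Existence of an admissible action also needs checking. Take the class attaining the outer max and a sender-worst member of it; it satisfies conditions 1–3. This is the step requiring the most care, namely verifying that duplicates inherit full-dimensionality and best-reply status. I expect it to be a short argument, since $B(\theta,\cdot)$ depends only on the receiver's payoffs and is therefore constant on $R(\theta,b)$.
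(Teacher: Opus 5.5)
Your proposal is correct and follows essentially the same route as the paper: it identifies $A^*(\sigma(\theta),\mu)$ as the full-dimensional best replies that are sender-worst within their duplicate classes, and observes that $\Sigma$ selects a sender-best element of this set. You also write out the max-min identity, the fact that $B(\theta,\cdot)$ is constant on each duplicate class, and the nonemptiness and existence checks, all of which the paper's three-line proof leaves implicit.
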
 
\begin{proof}
$A^*(\sigma(\theta),\mu)$ consists of the full-dimensional best replies at $\theta$ that are worst for the sender within their respective duplicate classes. Then the sender-optimal action within $A^*(\sigma(\theta),\mu)$ defines $V(\sigma(\theta),\mu)$. This is precisely $\Sigma$.
\end{proof}

The necessity direction follows directly by combining the preceding lemma, Lemma~\ref{lemma:nece_pseudo_gap} and the pointwise inequality $\min_{a\in A^*(\theta,\mu)}\sum_{\omega\in\Omega}\mu(\omega)v(a,\omega)\leq V_{\Sigma}(\theta,\mu)$.

Under the sender-worst tie-breaking, the function $\mu\mapsto\min_{a\in A^*(\theta,\mu)}\sum_{\omega\in\Omega}\mu(\omega)v(a,\omega)$ need not be upper semicontinuous. Consequently, the existence of a signal policy attaining the supremum under this tie-breaking rule is not guaranteed. By contrast, $V_{\Sigma}(\theta,\mu)$ is upper semicontinuous in $\mu$ (Lemma~\ref{lemma:v_pse_upper} in Appendix~\ref{app:nece_proofs}). We further show that $\Sigma$ is the \emph{tight} tie-breaking rule to test robustness.

\begin{lemma}
\label{lemma:tight_tie_breaking}
For every BP model $G(\theta)$,  the tie-breaking rule $\Sigma$ satisfies
\begin{equation*}
    \displaystyle \max_{\pi\in \Pi} \left[ \int_{\Delta(\Omega)} V_{\Sigma}(\theta,\mu)\pi(d\mu)\right]=\sup_{\pi\in\Pi} \left [ \int_{\Delta(\Omega)}  \min_{a\in A^*(\theta,\mu)}  \left(\sum_{\omega\in \Omega} \mu(\omega) v(a,\omega) \right)  \pi(d\mu) \right ]~.
\end{equation*}
\end{lemma}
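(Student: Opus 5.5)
The inequality ``$\geq$'' is immediate. By Definition~\ref{def:special_tie}, $V_{\Sigma}(\theta,\mu)$ is the sender's payoff from some action in $A^*(\theta,\mu)$. Hence $\min_{a\in A^*(\theta,\mu)}\sum_{\omega}\mu(\omega)v(a,\omega)\leq V_{\Sigma}(\theta,\mu)$ pointwise. Integrating against any $\pi\in\Pi$ and taking the supremum bounds the right-hand side by $\sup_{\pi}\int V_\Sigma\,d\pi$. This supremum is a maximum: by Lemma~\ref{lemma:eq_tie_pseudo} and Lemma~\ref{lemma:v_pse_upper}, $\mu\mapsto V_\Sigma(\theta,\mu)$ is upper semicontinuous, so $\pi\mapsto\int V_\Sigma\,d\pi$ is upper semicontinuous on the compact set $\Pi$ (Lemma~\ref{lemma:Pi_compact}), exactly as for $W$.

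For ``$\leq$'', fix $\epsilon>0$. Take an optimal policy $\pi^*$ for $V_\Sigma$, chosen to be basic (Lemma~\ref{lemma:representation_basic} applied at $\sigma(\theta)$). Then $\pi^*=\sum_{i}p_i\delta_{\mu_i}$ has finite support. At each $\mu_i$, let $a_i$ be the $\Sigma$-admissible action. It satisfies $\dim B(\theta,a_i)=|\Omega|-1$ and $\sum_\omega v(a_i,\omega)\mu_i(\omega)=\min_{c\in R(\theta,a_i)}\sum_\omega v(c,\omega)\mu_i(\omega)$.

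The key observation concerns any point $\mu'$ in the relative interior of $B(\theta,a_i)$ (relative to the simplex). There $A^*(\theta,\mu')=R(\theta,a_i)$. Indeed, suppose $b$ is also a best reply on an open neighbourhood of $\mu'$ in the simplex. Then the linear map $\mu\mapsto\sum_\omega(u_\theta(a_i,\omega)-u_\theta(b,\omega))\mu(\omega)$ is $\geq 0$ on $B(\theta,a_i)$ and equals $0$ at the interior point $\mu'$, so it attains its minimum over this open set at $\mu'$ and must vanish on the whole set. A linear functional vanishing on a full-dimensional subset of the simplex has constant coefficients, and that constant is zero because the functional vanishes on probability vectors. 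Hence $b\in R(\theta,a_i)$. Consequently, at such $\mu'$ the sender-worst payoff equals $g_i(\mu'):=\min_{c\in R(\theta,a_i)}\sum_\omega v(c,\omega)\mu'(\omega)$. The function $g_i$ is continuous, and $g_i(\mu_i)=V_\Sigma(\theta,\mu_i)$.

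Next I perturb the support. Pick interior points $x_i$ of $B(\theta,a_i)$ and set $\mu_i^t=(1-t)\mu_i+tx_i$. By convexity, $\mu_i^t$ lies in the relative interior for $t\in(0,1]$, and $\mu_i^t\to\mu_i$ as $t\to0$. The perturbed policy must be Bayes plausible, so I apply the reweighting device of Appendix~\ref{app:alt_proofs_geo}. Concretely, the mean of $\sum_ip_i\delta_{\mu_i^t}$ is $(1-t)\mu_0+t\bar x$ with $\bar x=\sum_ip_ix_i$. Because $\mu_0$ has full support, for $t$ small there is a posterior $y_t$ and a weight $\lambda_t=O(t)$ such that $(1-\lambda_t)\sum_ip_i\delta_{\mu_i^t}+\lambda_t\delta_{y_t}$ averages to $\mu_0$. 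This uses the fact that $\mu_0$ lies in the interior of the simplex, so the required counterweight point stays in $\Delta(\Omega)$.

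Since $v\geq0$, the counterweight contributes a nonnegative amount. The sender-worst value of the new policy is therefore at least $(1-\lambda_t)\sum_ip_ig_i(\mu_i^t)$, and this converges to $\sum_ip_iV_\Sigma(\theta,\mu_i)=\max_\pi\int V_\Sigma\,d\pi$ as $t\to0$. Choosing $t$ small yields a policy within $\epsilon$, which proves ``$\leq$''.

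The main obstacle is the interior-point step, that is, verifying $A^*(\theta,\mu')=R(\theta,a_i)$ in the relative interior of $B(\theta,a_i)$ when the simplex is not itself full-dimensional in $\mathbb{R}^{|\Omega|}$. I handle it by the affine argument above: a functional vanishing on an open subset of the simplex has constant coefficients, and the constant is zero. The Bayes-plausibility correction, with its $O(t)$ counterweight, is routine.
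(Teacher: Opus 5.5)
Your proposal is correct and follows essentially the paper's route: move each support point of a $V_\Sigma$-optimal policy into the relative interior of the full-dimensional best-reply region of its $\Sigma$-selected action. There the best-reply set collapses to the duplicate class, so the sender-worst payoff is the continuous function $g_i$. You then restore Bayes plausibility with an $O(t)$ counterweight, as in Lemma~\ref{lemma:adjustment_policy_defi}. Working with a finite-support (basic) optimal policy spares you the Borel-measurable map $T$, and your affine argument for $A^*(\theta,\mu')=R(\theta,a_i)$ proves a step the paper only asserts. One precision: take ``relative interior'' with respect to the affine hull $\{\sum_\omega\mu(\omega)=1\}$, so that such points have full support. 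Under the topology of the simplex itself the claim fails at boundary points, e.g.\ $u(a)=(1,1)$, $u(b)=(0,1)$ at $\mu'=\delta_1$.
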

The proof is provided in Appendix~\ref{app:tie_eq}. 
To gain intuition, every action selected under $\Sigma$ has a full-dimensional best-reply region. The corresponding posterior belief can therefore be perturbed into the interior of that region, where the same sender payoff is approximately guaranteed under arbitrary tie-breaking. The construction is analogous to the proof of the sufficient direction in Proposition~\ref{prop:nece_suf_tie_breaking}.

The preceding lemma and the equivalence established in Theorems~\ref{theorem:bp_equivalent}~and~\ref{thy:equive_tie_breaking} yield a new numerical necessary and sufficient condition for all notions. 

\begin{proposition}[Numerical Necessary and Sufficient Condition]
    \label{prop:num_nece_suff}
    For any BP model $G(\theta_0)$, continuity, MM-robustness, RM-robustness, and robustness to tie-breaking hold if and only if $\displaystyle \max_{\pi\in \Pi} \left[ \int_{\Delta(\Omega)} V_{\Sigma}(\theta_0,\mu)\pi(d\mu)\right]=\max_{\pi\in\Pi} W(\theta_0,\pi)$.
\end{proposition}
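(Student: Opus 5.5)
The plan is to chain the results already established, using Proposition~\ref{prop:nece_suf_tie_breaking} as the pivot. By Theorems~\ref{theorem:bp_equivalent} and~\ref{thy:equive_tie_breaking}, the four notions are equivalent to each other, and by Proposition~\ref{prop:nece_suf_tie_breaking} they are equivalent to robustness to tie-breaking in the sense of Definition~\ref{defi:robust_tie_breaking}. It therefore suffices to show that robustness to tie-breaking holds if and only if $\max_{\pi\in \Pi} \int V_{\Sigma}(\theta_0,\mu)\pi(d\mu)=\max_{\pi\in\Pi} W(\theta_0,\pi)$.

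This last equivalence follows from Lemma~\ref{lemma:tight_tie_breaking} at $\theta=\theta_0$. That lemma identifies the left-hand side of Definition~\ref{defi:robust_tie_breaking}, namely the supremum of the sender-worst tie-breaking value, with $\max_{\pi\in\Pi}\int V_\Sigma(\theta_0,\mu)\pi(d\mu)$. Substituting this identity into Definition~\ref{defi:robust_tie_breaking} gives exactly the displayed numerical condition, and conversely. The maximum on the $V_\Sigma$ side is attained because $\mu\mapsto V_\Sigma(\theta_0,\mu)$ is upper semicontinuous: by Lemma~\ref{lemma:eq_tie_pseudo} it equals $V(\sigma(\theta_0),\cdot)$, which is upper semicontinuous by Lemma~\ref{lemma:v_pse_upper}. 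Combined with the compactness of $\Pi$ (Lemma~\ref{lemma:Pi_compact}), this ensures the maximum is well defined.

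As a sanity check, one can also argue directly through the virtual type without invoking the tie-breaking notion. For necessity, if the non-critical condition of Proposition~\ref{pro:nece_su_summary} fails, then Lemma~\ref{lemma:nece_pseudo_gap} together with Lemma~\ref{lemma:eq_tie_pseudo} gives a strict gap, so the numerical condition fails. For sufficiency, if the numerical condition holds, then $\max W(\sigma(\theta_0),\cdot)=\max W(\theta_0,\cdot)$. By the contrapositive of Lemma~\ref{lemma:nece_pseudo_gap}, the conditions of Proposition~\ref{lemma:sufficient_puri_new} cannot fail, and Proposition~\ref{prop:suffi_all} then applies.

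No step is a real obstacle, since all the work resides in Lemma~\ref{lemma:tight_tie_breaking} and Lemma~\ref{lemma:nece_pseudo_gap}. The only point requiring care is making sure that Lemma~\ref{lemma:nece_pseudo_gap} is used in its contrapositive form, not in its converse.
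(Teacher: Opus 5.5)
Your main argument is correct and is the paper's own: Lemma~\ref{lemma:tight_tie_breaking} at $\theta=\theta_0$ turns Definition~\ref{defi:robust_tie_breaking} into the numerical condition, and Theorems~\ref{theorem:bp_equivalent} and~\ref{thy:equive_tie_breaking} then extend it to the other three notions. One slip, in your side ``sanity check'' only: both of its paragraphs prove the same implication (numerical condition $\Rightarrow$ non-critical condition), so the converse is missing there; it is easy to supply, because at a non-critical $\mu$ the witnessing induced action $a$ from Definition~\ref{def:critical_mu} satisfies $\mu\in B(\sigma(\theta_0),a)$ (all its duplicates give the sender the same payoff at $\mu$), hence $V(\sigma(\theta_0),\mu)=V(\theta_0,\mu)$ on the support of the non-critical optimal policy and the two maxima coincide.
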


Finally, we show that continuity and robustness hold generically.
We treat utility values $\{u_{\theta}(a,\omega)\}_{a\in A, \omega\in \Omega}$ as a set of random variables.

\begin{definition}
\label{defi}
Let $U$ be a property of BP models. We say that $U$ is \emph{generic} if the set of receiver utility profiles for which $U$ fails has Lebesgue measure zero in $\mathbb{R}^{|\Omega|\times|A|}$. Equivalently, $U$ holds for Lebesgue-almost every receiver utility profile.
\end{definition}

\citet{lipnowski2024perfect} show that, generically, the Bayesian persuasion model is robust to tie-breaking rules (see their Proposition 2). 
Together with the equivalence results in Theorems~\ref{theorem:bp_equivalent}~and~\ref{thy:equive_tie_breaking}, we obtain the following direct corollary.
\begin{corollary}
\label{thy:generic_robust}
\textbf{Generically}, the Bayesian persuasion model is continuous, MM-robust, and RM-robust. 
\end{corollary}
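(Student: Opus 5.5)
The corollary follows by chaining two results. The first is the genericity of robustness to tie-breaking due to \citet{lipnowski2024perfect}. The second is the equivalences in Theorems~\ref{theorem:bp_equivalent} and~\ref{thy:equive_tie_breaking}.

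Let $N\subseteq\mathbb{R}^{|\Omega|\times|A|}$ be the set of receiver utility profiles $\theta$ for which $G(\theta)$ fails to be robust to tie-breaking in the sense of Definition~\ref{defi:robust_tie_breaking}. Here $\Omega$, $A$, $\mu_0$ and $v$ are held fixed. The first step is to invoke Proposition~2 of \citet{lipnowski2024perfect}. It gives, for fixed primitives other than receiver utilities, that $N$ is Lebesgue-null.

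I would check carefully that their notion of robustness to tie-breaking coincides with Definition~\ref{defi:robust_tie_breaking}: the supremum over policies of the sender-worst value equals the sender-preferred optimum. I would also check that their genericity is with respect to Lebesgue measure on receiver utilities for a fixed sender utility and prior, as in Definition~\ref{defi}. If their statement is instead phrased through the PUBR condition, I would argue in two steps. First, PUBR fails only on a null set: it is cut out by finitely many nontrivial polynomial (linear) equalities among the entries of $u_\theta$, such as duplicate actions, or lower-dimensional regions that meet optimal extreme points. Second, under PUBR there is an optimal policy supported on beliefs whose induced actions have full-dimensional regions and no duplicates. Such beliefs are non-critical by Definition~\ref{def:critical_mu}, so Proposition~\ref{prop:nece_suf_tie_breaking} (equivalently Proposition~\ref{prop:suffi_all}) applies.

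Second, for every $\theta\notin N$, Theorem~\ref{thy:equive_tie_breaking} shows that robustness to tie-breaking is equivalent to the statements of Theorem~\ref{theorem:bp_equivalent}. Hence $G(\theta)$ is continuous, MM-robust and RM-robust. The set of profiles where any of the three properties fails is therefore contained in $N$, which has measure zero, so each property is generic.

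The only real obstacle is matching the hypotheses of the cited genericity result to this setting: fixed $v$, full-support prior, finite $A$ and $\Omega$, and measure taken over $u$ alone. If their result is stated jointly over $(u,v)$, I would use Fubini to obtain a null set of $u$ for almost every $v$. If that does not suffice, I would fall back on the direct argument above, since non-criticality fails only if some action has a lower-dimensional nonempty best-reply region or two actions coincide, and both are null-set events in $u$.
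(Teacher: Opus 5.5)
Your proposal is correct and is the paper's argument: the paper also just cites Proposition~2 of \citet{lipnowski2024perfect} for generic robustness to tie-breaking and transfers it to continuity, MM-robustness and RM-robustness via Theorems~\ref{theorem:bp_equivalent} and~\ref{thy:equive_tie_breaking}. Your fallback route is also sound and would remove the reliance on matching the citation's hypotheses. If no two actions are duplicates and every nonempty $B(\theta,a)$ is full-dimensional, then every belief is non-critical, and Proposition~\ref{prop:suffi_all} applies to any optimal policy. You would need to spell out why nonempty lower-dimensional regions are null: each of their vertices has at least $|\Omega|$ tight constraints, which forces a nontrivial determinantal (polynomial, not linear) equation in $u$.
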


\subsection{Two Caveats}
\label{sec:final_discuss}
Although our main result is a strong positive result about the robustness of the Bayesian persuasion model, we would like to provide two comments so that this result is taken with a grain of salt.

First, the level of accuracy, $\epsilon$, depends on the level of ignorance, $\delta$. A stronger notion of robustness (a ``uniform'' notion) could require this relationship to be independent of the underlying BP model. Unfortunately, this does not hold (even for bounded receiver utilities): for any given level of accuracy $\epsilon$ one can find a sequence of BP models such that the required level of ignorance $\delta$ converges to zero. Example~\ref{example:appendix_noUniform} in Appendix~\ref{app:tie_eq} illustrates this impossibility.\footnote{In particular, take any non-robust BP model and a disappearing sequence of neighborhoods. Every such neighborhood of that model,  no matter how small, contains a robust BP model. 
Choosing one robust model from each neighborhood yields the required sequence.}

Second, \emph{genericity} is a mathematical property that hinges on the assumption that instances of BP models are generated by a given randomization mechanism. Should economists take this underlying randomness for granted? We provide two classes of BP models that could naturally occur and that are not robust. 

{\bf Two-Layer Bureaucracy} It is often the case that in application systems there are two gatekeepers. The first one handles the mass of requests and only unresolved cases are forwarded to the second layer. 
An applicant is in one of two states:  competent or incompetent. The first gatekeeper processes a request and can accept, deny, or forward it to the second gatekeeper. Immediate acceptance is optimal for the applicant because further scrutiny by the second gatekeeper is costly (e.g., time-consuming). In many bureaucracies, institutional gatekeepers are deterred from making a mistake of type one---approving a request that should be denied---while consequences of the opposite mistake are brushed over. In such settings, the gatekeeper will only approve if he is certain that the applicant is competent. Note that this is exactly the story underlying Example~\ref{sec:warm_examples}.

{\bf A Preference for Compromise} In some binary-action settings, a third natural action may exist, an action which serves as a compromise between the initial two. For example, in an allocation problem where a good can be allocated to one of two individuals, a third option of flipping a fair coin may also be acceptable. A compromise action need not take the form of a lottery. For example, a venture-capital fund may be interested in backing an entrepreneur who is looking to raise a certain amount. The fund can either invest, in return for shares, or provide the required amount as a loan. A third natural option is to provide half the amount as an investment and half as a loan.

\begin{figure}[!h]
    \centering
      \begin{subfigure}[b]{0.35\textwidth}
      \tikzset{every picture/.style={inner sep=0, outer sep=0}}
    \begin{tikzpicture}[baseline = 3cm, scale = 0.27]

     \node (a11) at (0,0)[label={[left=1pt]\textcolor{black}{$a_2$}}]{};
\node (a21) at (0,5) [label={[left=1pt]\textcolor{red}{$a_3$}}]{};
\node (a31) at (0,10) [label={[left=1pt]\textcolor{black}{$a_1$}}]{};

\node (zero) at (0,0) [label={[below=4pt]$0$}]{};
\node (zero) at (16,0) [label={[below=4pt]$1$}]{};
\node (zero) at (4.8,0) [label={[below=4pt]$\mu_0$}]{};

  \draw[->,black]  (0,0) -- (0,11);
     \draw[->,black]  (0,0) -- (18,0);

\draw[thick,black]  (16,10)  --  (8,5);
\draw[dashed, thick,black]  (8,5)  --  (0,0);
\draw[thick,black]  (0,10)  --  (8,5);
\draw[dashed, thick,black]  (8,5)  --  (16,0);
\draw[thick,dashed,red]  (0,5)  --  (16,5);

   \filldraw[red] (8,5) circle (10pt);

\node[] at (6,-2) {receiver's belief};      
        
    \end{tikzpicture}
    \caption{receiver's best replies}
    \label{fig:inconsistent_case_discuss1}

    \end{subfigure}
    \hspace{3cm}
       \begin{subfigure}[b]{0.35\textwidth}
      \tikzset{every picture/.style={inner sep=0, outer sep=0}}
    \begin{tikzpicture}[baseline = 3cm, scale = 0.27]

\node (zero) at (0,0) [label={[below=4pt]$0$}]{};
\node (zero) at (16,0) [label={[below=4pt]$1$}]{};
\node (zero) at (4.8,0) [label={[below=4pt]$\mu_0$}]{};

  \draw[->,black]  (0,0) -- (0,11);
     \draw[->,black]  (0,0) -- (18,0);

     \draw[<->,black]  (0,8) -- (8,8);
      \draw[<->,black]  (16,8) -- (8,8);
      \node (zero) at (4,10) []{$a_1$};
\node (zero) at (8,10) []{\textcolor{red}{$a_3$}};
\node (zero) at (12,10) []{$a_2$};

  \draw[red, very thick] (8,0) -- (8,6);
       \filldraw[red] (8,6) circle (8pt);    
\node[] at (6,-2) {receiver's belief};      
        
    \end{tikzpicture}
    \caption{sender's indirect utility}
    \label{fig:inconsistent_case_discuss2}

    \end{subfigure}
    \caption{A preference for compromise to destroy robustness.}
    \label{fig:inconsistent_case_discuss}
\end{figure}

It stands to reason to assume that the level of evidence that makes the decision maker (the receiver) indifferent between the two initial actions would make her indifferent between those two and the compromise action. If, in addition, the compromise action is the one favored by the sender (e.g., she cares about fairness or she enjoys the best of both worlds in the financing setting), we have an instance of Bayesian persuasion depicted in Figure~\ref{fig:inconsistent_case_discuss}, which is non-robust.

\appendix
\renewcommand{\theHsection}{appendix.\Alph{section}}
\renewcommand{\theHsubsection}{appendix.\Alph{section}.\arabic{subsection}}

\renewcommand{\theequation}{A.\arabic{equation}}
\renewcommand{\theHequation}{A.\arabic{equation}}
\setcounter{equation}{0}

\section{Set Convergence and Proofs for Section~\ref{sec:model}}
\label{app:mode_proofs}

\begin{definition}[Painlev{\'e}--Kuratowski set limits and convergence]
\label{def:set_convergence}
Following \citet[Definition 4.1]{rockafellar1998variational}, let $\{C_n\}_{n\in\mathbb{N}^+}$ be a sequence of subsets of $\mathbb{R}^t$. 
\begin{equation*}
    \text{The outer limit is: } \displaystyle      \limsup_{n\to\infty}C_n:=\left\{x\in\mathbb{R}^t:\liminf_{n\to\infty} \left[ \inf_{y\in C_n}||x-y||_2\right]=0\right\}~.
\end{equation*}
\begin{equation*}
    \text{The inner limit is: }   \displaystyle     \liminf_{n\to\infty}C_n:=\left\{x\in\mathbb{R}^t:\limsup_{n\to\infty}\left[\inf_{y\in C_n}||x-y||_2\right]=0\right\}~.
\end{equation*}
A correspondence $C:\mathbb{R}^q\rightrightarrows\mathbb{R}^p$ is upper (respectively, lower) hemicontinuous at $x_0$ if $x_n\to x_0$ implies $\limsup_n C(x_n)\subseteq C(x_0)$ (respectively, $\liminf_n C(x_n)\supseteq C(x_0)$); see \citet[Definition 5.4]{rockafellar1998variational}.\footnote{Their terms are \emph{outer} and \emph{inner semicontinuity}; we use the standard economic terminology.}
Following \citet[Theorem 4.10 and Proposition 5.12]{rockafellar1998variational}, for a closed-valued  correspondence $C:\mathbb{R}^q\rightrightarrows  K$ where $K$ is a compact subset of $\mathbb{R}^p$ (the best-reply region and simplex in our setting), $C$ is upper hemicontinuous at $x_0$ if and only if for every sequence $\{x_n\}_{n\in\mathbb{N}^+}$ with $x_n\to x_0$ and every $\epsilon>0$, there exists $N$ such that for every $n\geq N$,  if $C(x_0)\neq \emptyset$, $ \sup_{z\in C(x_n)}\inf_{y\in C(x_0)}||z-y||_2 \leq \epsilon$, and if $C(x_0)=\emptyset$, $C(x_n)=\emptyset$; it is lower hemicontinuous at $x_0\in \mathbb{R}^q$ if and only if, for every such sequence and every $\epsilon>0$, there exists $N$ such that, for every $n\geq N$, if $C(x_0)\neq \emptyset$,  $\sup_{z\in C(x_0)}\inf_{y\in C(x_n)}||z-y||_2 \leq \epsilon$, and if $C(x_0)=\emptyset$, the condition holds automatically.\footnote{Let $C\neq \emptyset$, define  $\sup_{z\in C}\inf_{y\in \emptyset}||z-y||_2:=\infty$ and $\sup_{z\in \emptyset}\inf_{y\in C}||z-y||_2 :=0$.}
\end{definition}

\begin{proof}[Proof of Lemma~\ref{lemma:Pi_compact}]
Consider $F(\pi):=\int_{\Delta(\Omega)}\mu\pi(d\mu)$, which is continuous on $\Delta(\Delta(\Omega))$ under the weak topology.   The singleton $\{\mu_0\}$ is closed in $\Delta(\Omega)$, so $\Pi=F^{-1}(\{\mu_0\})$ is closed in $\Delta(\Delta(\Omega))$. Since $\Delta(\Delta(\Omega))$ is compact, $\Pi$ is compact. 
\end{proof}

\begin{proof}[Proof of Lemma~\ref{lemma:V_upper_semi}]
Fix $(\theta,\mu)$. If $A^*(\theta,\mu)=A$, it is trivial. Otherwise, every $a\in A^*(\theta,\mu)$  yields strictly greater receiver expected utility than every $b\in A\setminus A^*(\theta,\mu)$.
For each action $a$, $(\theta,\mu)\mapsto\sum_{\omega\in\Omega}u_{\theta}(a,\omega)\mu(\omega)$ is continuous. 
So for any sequence $\{(\theta_n,\mu_n)\}_{n\in\mathbb{N}^+ }$ converging to $(\theta,\mu)$, there exists $I>0$ such that for all $n\geq I$,  $A^*(\theta_n,\mu_n)\subseteq A^*(\theta,\mu)$. 
For each action $a$, $\mu\mapsto\sum_{\omega\in\Omega}v(a,\omega)\mu(\omega)$ is continuous. So 
\begin{equation*}
    \displaystyle\limsup_{n\rightarrow \infty}\left[\max_{a\in A^*(\theta_n,\mu_n)  } \sum_{\omega\in\Omega}v(a,\omega)\mu_n(\omega) \right]\leq \displaystyle\limsup_{n\rightarrow \infty}\left[\max_{a\in A^*(\theta,\mu)}  \sum_{\omega\in\Omega}v(a,\omega)\mu_n(\omega)  \right]\leq V(\theta,\mu)~.
\end{equation*}
\end{proof}

\begin{proof}[Proof of Lemma~\ref{lemma:joint_upper_W}]
We endow $\mathbb{R}^{|\Omega|\times|A|} \times \Delta(\Omega)$ with the product metric and endow $\Delta(\mathbb{R}^{|\Omega|\times|A|}  \times \Delta(\Omega))$ with the weak topology.
Take any sequence $(\theta_n,\pi_n)\to(\theta,\pi)$ in $\mathbb{R}^{|\Omega|\times|A|} \times\Pi$. 
Define probability measures $q_n$ and $q$ on $\mathbb{R}^{|\Omega|\times|A|} \times  \Delta(\Omega)$ by
\begin{equation*}
 q_n(d\theta',d\mu):=\operatorname{Dirac}_{\theta_n}(d\theta') \pi_n(d\mu)
\qquad\text{and}\qquad
q(d\theta',d\mu):=\operatorname{Dirac}_{\theta}(d\theta') \pi(d\mu)~.   
\end{equation*}
We first show that $q_n\rightarrow q$. For any  bounded Lipschitz function $f(\theta,\mu)$,
\begin{equation*}
   \int_{\mathbb{R}^{|\Omega|\times|A|} \times \Delta(\Omega)} f(\theta',\mu)q_n(d\theta',d\mu) =\int_{\Delta(\Omega)} [f(\theta_n,\mu)-f(\theta,\mu)]\pi_n(d\mu)+\int_{\Delta(\Omega)} f(\theta,\mu)\pi_n(d\mu) ~.
\end{equation*}
Since $f$ is Lipschitz, the first term converges to zero.
For the second term, by weak convergence,  $\int_{\Delta(\Omega)} f(\theta,\mu)\pi_n(d\mu) \rightarrow \int_{\Delta(\Omega)} f(\theta,\mu)\pi(d\mu)=\int_{\mathbb{R}^{|\Omega|\times|A|} \times \Delta(\Omega)} f(\theta',\mu)q(d\theta',d\mu)$. Thus, by the Portmanteau theorem---bounded Lipschitz functions characterize weak convergence on metric spaces \citep*{billingsley2013convergence}, $q_n\rightarrow q$. \footnote{The Portmanteau theorem provides equivalent definitions of weak convergence of measures.}
By Lemma~\ref{lemma:V_upper_semi}, $V(\theta,\mu)$ is upper semicontinuous and bounded.
By the Portmanteau theorem,
\begin{equation*}
 \begin{split}
    {}&\limsup_{n\rightarrow \infty}    \int_{\mathbb{R}^{|\Omega|\times|A|} \times \Delta(\Omega)} V(\theta',\mu)\,q_n(d\theta',d\mu)\leq \int_{\mathbb{R}^{|\Omega|\times|A|} \times \Delta(\Omega)} V(\theta',\mu)q(d\theta',d\mu)~.
 \end{split}
 \end{equation*}
By the definitions of $q_n$, $q$, and $W$, it follows that $\displaystyle\limsup_{n\rightarrow \infty} W(\theta_n,\pi_n) \leq W(\theta,\pi)$.
\end{proof}

\renewcommand{\theequation}{B.\arabic{equation}}
\renewcommand{\theHequation}{B.\arabic{equation}}
\setcounter{equation}{0}

\section{Proofs for Section \ref{sec:main_equivelant}}
\label{sec:appen_equ_proofs}

\subsection{Sufficient Condition Proofs}
\label{sec:appen_sufficient_proofs}

\begin{proof}[Proof of Lemma~\ref{lemma:existence_for_all_type}]
By Proposition~\ref{lemma:set_upper_continuous} and the assumption, for any $\eta>0$, there exist $\delta>0$ and a signal policy $\pi^*$ such that, for every $\theta\in\operatorname{Ball}_{\delta}(\theta_0)$, $\max_{\pi\in\Pi} W(\theta_0,\pi)+\eta\geq \max_{\pi\in\Pi} W(\theta,\pi)\geq     W(\theta,\pi^*)\geq \max_{\pi\in\Pi} W(\theta_0,\pi)-\eta$.
Thus $G(\theta_0)$ is continuous. For any nonempty $\Theta\subseteq \operatorname{Ball}_{\delta}(\theta_0)$ and corresponding max-min policy $\pi_{1}$, since $ \inf_{\theta\in\Theta} W(\theta,\pi_1)\geq  \inf_{\theta\in\Theta}  W(\theta,\pi^*)\geq \max_{\pi\in\Pi} W(\theta_0,\pi)-\eta$, 
\begin{equation*}
\forall\theta\in \Theta: \quad \max_{\pi\in\Pi} W(\theta_0,\pi)+\eta\geq \max_{\pi\in\Pi} W(\theta,\pi)  \geq  W(\theta,\pi_1)\geq \max_{\pi\in\Pi} W(\theta_0,\pi)-\eta~.
\end{equation*}
Let $\pi_2$ be a regret-minimizing policy for $\Theta$, then  
\begin{equation*}
\sup_{\theta\in\Theta} \left[\max_{\pi\in\Pi}W(\theta,\pi)- W(\theta,\pi_2)\right]\leq \sup_{\theta\in\Theta} \left[\max_{\pi\in\Pi}W(\theta,\pi)- W(\theta,\pi^*)\right]\leq 2\eta~.
\end{equation*}
Thus, for every type $\theta\in\Theta$,  $ W(\theta,\pi_2) \geq \max_{\pi\in\Pi} W(\theta,\pi)-2\eta$. So $\max_{\pi\in\Pi} W(\theta_0,\pi)+\eta\geq \max_{\pi\in\Pi} W(\theta,\pi)\geq    W(\theta,\pi_2) \geq \max_{\pi\in\Pi} W(\theta_0,\pi)-3\eta$.
For any $\epsilon>0$, taking $\eta=\epsilon/4$ proves robustness under both criteria.
\end{proof}

\begin{proof}[Proof of Lemma~\ref{lemma:intersect_small_set}]
 For every $a$ and $b\in R(\theta_0,a)$, $u_{\theta_0}(a,\omega)=u_{\theta_0}(b,\omega)$. So 
 \begin{equation*}
  \bigcup_{b\in R(\theta_0,a)} B(\theta_0,b)=\{\mu\in \Delta(\Omega): \sum_{\omega\in \Omega} u_{\theta_0}(a,\omega)\mu(\omega)\geq \sum_{\omega\in \Omega} u_{\theta_0}(c,\omega)\mu(\omega), \forall c\in A\setminus R(\theta_0,a)\}~.    
 \end{equation*}
For any $\delta>0$ and any action $a$, define $\theta_a\in \operatorname{Ball}_{\delta}(\theta_0)$ as follows. If $b\not\in R(\theta_0,a)$, $  u_{\theta_a}(b,\omega)=u_{\theta_0}(b,\omega)+\delta$; otherwise, $ u_{\theta_a}(b,\omega)= u_{\theta_0}(b,\omega)-\delta$.
For every $\theta\in\operatorname{Ball}_{\delta}(\theta_0)$, $\bigcup_{b\in R(\theta_0,a)}B(\theta_a,b) \subseteq\bigcup_{b\in R(\theta_0,a)}B(\theta,b)$, because
for all $c\in A\setminus R(\theta_0,a)$, $\theta\in \operatorname{Ball}_{\delta}(\theta_0)$ and $\mu\in \bigcup_{b\in R(\theta_0,a)} B(\theta_a,b)$, 
\begin{equation*}
\min_{b\in R(\theta_0,a)} \sum_{\omega\in\Omega} u_{\theta}(b,\omega)\mu(\omega)\geq \sum_{\omega\in\Omega} u_{\theta_a}(a,\omega)\mu(\omega)\geq \sum_{\omega\in\Omega} u_{\theta_a}(c,\omega) \mu(\omega)\geq \sum_{\omega\in\Omega} u_{\theta}(c,\omega)\mu(\omega)~.
\end{equation*}
Note that $R(\theta_0,a)\subseteq R(\theta_a,a)$, and therefore
 \begin{equation*}
  \bigcup_{b\in R(\theta_0,a)} B(\theta_a,b)=\{\mu\in \Delta(\Omega): \sum_{\omega\in \Omega} u_{\theta_a}(a,\omega)\mu(\omega)\geq \sum_{\omega\in \Omega} u_{\theta_a}(c,\omega)\mu(\omega), \forall c\in A\setminus R(\theta_0,a)\}.
 \end{equation*}
After excluding duplicate actions, none of the inequalities is of the form $0\times \mu\leq 0$.

We next show that if $\dim(B(\theta_0,a))=|\Omega|-1$ and if $A\setminus R(\theta_0,a)\neq \emptyset$, then there exists a posterior $\mu\in \bigcup_{b\in R(\theta_0,a)} B(\theta_0,b)$ that satisfies all the remaining inequalities strictly. 
Suppose, for contradiction, that there are some $c\in A\setminus R(\theta_0,a)$ such that $\sum_{\omega\in\Omega}u_{\theta_0}(a,\omega)\mu(\omega)= \sum_{\omega\in\Omega}u_{\theta_0}(c,\omega)\mu(\omega)$ for every $\mu\in B(\theta_0,a)$.
Then $B(\theta_0,a)$ would be contained in the hyperplane corresponding to the equality,
which has dimension at most $|\Omega|-2$ in $\Delta(\Omega)$, contradicting
$\dim B(\theta_0,a)=|\Omega|-1$. So for every $c\in A\setminus R(\theta_0,a)$, there exists a posterior belief $\mu_c\in B(\theta_0,a)$ such that $\sum_{\omega\in\Omega}u_{\theta_0}(a,\omega)\mu_c(\omega)> \sum_{\omega\in\Omega}u_{\theta_0}(c,\omega)\mu_c(\omega)$. Let $\tilde{\mu}$ be a convex combination of $\{\mu_c\}_{c\in A\setminus R(\theta_0,a)}$ with strictly positive weights. Then by convexity, $\tilde{\mu}\in B(\theta_0,a)$ and  $\sum_{\omega\in\Omega}u_{\theta_0}(a,\omega)\tilde{\mu}(\omega)> \sum_{\omega\in\Omega}u_{\theta_0}(c,\omega)\tilde{\mu}(\omega)$ for every $c\in A\setminus R(\theta_0,a)$.
Relative to the simplex (equivalently, in its affine hull), the correspondence $\delta\mapsto\bigcup_{b\in R(\theta_0,a)} B(\theta_a,b)$ satisfies the Slater condition at $\delta=0$.

Hence, along the special path defined by $\theta_a$, the correspondence $\delta\mapsto\bigcup_{b\in R(\theta_0,a)} B(\theta_a,b)$ is lower hemicontinuous at $\delta=0$ (if $A\setminus R(\theta_0,a)\neq \emptyset$, see
Theorem 1.6 (basic equivalence) in \citep{daniilidis2013lower}; otherwise, $\bigcup_{b\in R(\theta_0,a)} B(\theta,b)=\Delta(\Omega)$ and the result is immediate).\footnote{\citet*{daniilidis2013lower} use both \emph{lower semicontinuity} and \emph{lower hemicontinuity} for different properties. Their lower semicontinuity corresponds to what we call lower hemicontinuity. Theorem~1.6 in their paper, which we invoke here, concerns precisely this property.} Since for every $\delta>0$, $\bigcup_{b\in R(\theta_0,a)} B(\theta_a,b)\subseteq \bigcup_{b\in R(\theta_0,a)} B(\theta,b)$ for every $\theta\in\operatorname{Ball}_{\delta}(\theta_0)$, it follows that $\theta\mapsto\bigcup_{b\in R(\theta_0,a)} B(\theta,b)$ is lower hemicontinuous at $\theta_0$. So for any $\gamma>0$, since $A$ is finite, we can choose $\delta>0$ such that for every $a\in A$, if $\dim(B(\theta_0,a))=|\Omega|-1$,  then $  \max_{x\in \bigcup_{b\in R(\theta_0,a)} B(\theta_0,b)} \min_{y\in \bigcup_{b\in R(\theta_0,a)} B(\theta_a,b)} ||x-y||_2\leq \gamma$.

For every $\mu\in \Delta(\Omega)$, define the mapping $T$ as follows. If $\mu$ is critical at $\theta_0$, let $T(\mu):=\mu$; otherwise, choose by a fixed ordering a sender-optimal $a\in A^*(\theta_0,\mu)$ satisfying the two conditions in Definition~\ref{def:critical_mu}, let $ T(\mu):= \operatorname{proj}(\mu,  \bigcup_{b\in R(\theta_0,a)} B(\theta_a,b))$.\footnote{Here, $\operatorname{proj}(\mu,S)$ denotes a closest point to $\mu$ in the set $S$. Note that for every action $a$, $\bigcup_{b\in R(\theta_0,a)} B(\theta_a,b)$ is convex and compact. The projection onto this set is well defined and unique.} So $||T(\mu)-\mu||_2\leq \gamma$.
For the latter case,  for every $\theta\in \operatorname{Ball}_{\delta}(\theta_0)$, $R(\theta_0,a)\bigcap A^*(\theta,T(\mu))\neq \emptyset$.

This mapping is Borel measurable because its domain ($\Delta(\Omega)$) can be partitioned into finitely many Borel regions, determined by linear inequalities specifying whether each $\mu$ is critical and which action it induces. On each region, the mapping is either a continuous projection or the identity.
Note that $v$ is bounded by $1$. $T$ satisfies, for every non-critical $\mu$ at $\theta_0$ and for every $\theta\in \operatorname{Ball}_{\delta}(\theta_0)$ 
\begin{equation*}
\begin{split}
     V(\theta,T(\mu)) &\geq \min_{b\in R(\theta_0,a)} \left[ \sum_{\omega\in\Omega} v(b,\omega)T(\mu)(\omega)  \right]
     \\
     {}&\geq \min_{b\in R(\theta_0,a)} \left[ \sum_{\omega\in\Omega} v(b,\omega)\mu(\omega) \right] - \sum_{\omega\in\Omega} \left|\mu(\omega)-T(\mu)(\omega)\right| \geq V(\theta_0,\mu) -\sqrt{|\Omega|}\gamma~.
\end{split}
\end{equation*}
\end{proof}

\begin{proof}[Proof of Proposition~\ref{prop:suffi_all}]
For any $\epsilon>0$, let $\gamma:={\epsilon}/({\sqrt{|\Omega|}+{1}/{L_{\mu_0}}})$.\footnote{$L_{\mu_0}:=\min_{\omega\in\Omega}\mu_0(\omega)$, see Lemma~\ref{lemma:far_point_around_u0}.} 
When the conditions in Proposition~\ref{prop:suffi_all} are satisfied, there exists an optimal signal policy $\pi$ such that all the posterior beliefs in its support are not critical. 
By Lemma~\ref{lemma:intersect_small_set}, there exists $\delta>0$ and a mapping $T:\Delta(\Omega)\to\Delta(\Omega)$ such that for any $\mu\in \operatorname{supp}(\pi)$ and for any
$\theta\in \operatorname{Ball}_{\delta}(\theta_0)$, $\|\mu-T(\mu)\|_2\leq  \gamma $ and $ V(\theta,T(\mu))\geq V(\theta_0,\mu)-\sqrt{|\Omega|} \gamma$.
By Corollary~\ref{coro:coro_adjust} in Appendix~\ref{app:alt_proofs_geo}, implementing $\pi_T$ guarantees that for any
$\theta\in \operatorname{Ball}_{\delta}(\theta_0)$,
\begin{equation*}
    W(\theta,\pi_T)\geq W(\theta_0,\pi)-\gamma  (\sqrt{|\Omega|}+ {1}/{L_{\mu_0}}) = W(\theta_0,\pi)-\epsilon=\max_{\pi'\in \Pi}W(\theta_0,\pi')-\epsilon~.
\end{equation*}
$\pi_T$ is the signal policy required by Lemma~\ref{lemma:existence_for_all_type}, which completes the proof. 
\end{proof}

\subsection{Necessary Condition Proofs}
\label{app:nece_proofs}

\begin{proof}[Proof of Lemma~\ref{prop:continuous_robust_nece}]
Suppose $G(\theta_0)$ is MM-robust.
For any $\epsilon>0$, there exists $\delta>0$ such that for $\Theta=\operatorname{Ball}_{\delta}(\theta_0)$, any max-min policy $\pi_1$ satisfies $ W(\theta,\pi_1) \geq \max_{\pi\in \Pi} W(\theta_0,\pi)-\epsilon$ for every $\theta\in\operatorname{Ball}_{\delta}(\theta_0)$.
For every $\theta\in\operatorname{Ball}_{\delta}(\theta_0)$, $\max_{\pi\in\Pi} W(\theta,\pi)\geq  W(\theta,\pi_1)  \geq \max_{\pi\in\Pi} W(\theta_0,\pi)-\epsilon$. This establishes lower semicontinuity. Combined with the upper semicontinuity (Proposition~\ref{lemma:set_upper_continuous}), $G(\theta_0)$ is continuous.
The proof when $G(\theta_0)$ is RM-robust is the same.
\end{proof}

\begin{lemma}
\label{lemma:lower_dimensional_set_subset}
Let $G(\theta)$ be a BP model.  
For any action $a$, if $B(\theta, a)\neq \emptyset$ and $\dim(B(\theta, a))<|\Omega|-1$, then there exists another action $b$ such that $ B(\theta, a)\subseteq B(\theta,b)$ and $\dim(B(\theta,b))=|\Omega|-1$.
Consequently, $\bigcup_{\{a\in A:\dim(B(\theta,a))=|\Omega|-1\}}B(\theta,a)=\bigcup_{a\in A}B(\theta,a)=\Delta(\Omega)$.
\end{lemma}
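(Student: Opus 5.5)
The plan has two parts. First we show that the full-dimensional best-reply regions already cover $\Delta(\Omega)$. Then we use a relative-interior argument to place each lower-dimensional region inside one of them.

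\emph{Covering.} Fix $\theta$. Since $A^*(\theta,\mu)\neq\emptyset$ for every $\mu$, we have $\bigcup_{a\in A}B(\theta,a)=\Delta(\Omega)$. Let $F:=\bigcup_{\{a:\dim(B(\theta,a))=|\Omega|-1\}}B(\theta,a)$ and let $L$ be the union of the nonempty lower-dimensional regions.
\begin{enumerate}
\item Each lower-dimensional region is a compact convex set lying in a proper affine subspace of the affine hull of the simplex. It is therefore nowhere dense relative to $\Delta(\Omega)$.
\item $L$ is a finite union of such sets, so it is also nowhere dense relative to $\Delta(\Omega)$.
\item Hence $\Delta(\Omega)\setminus L\subseteq F$ is dense in $\Delta(\Omega)$.
\item $F$ is a finite union of compact sets, hence closed. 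A closed set containing a dense subset is everything, so $F=\Delta(\Omega)$.
\end{enumerate}
This gives the ``consequently'' part of the statement directly.

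\emph{Containment.} Now take $a$ with $B(\theta,a)\neq\emptyset$ and $\dim(B(\theta,a))<|\Omega|-1$.
\begin{enumerate}
\item Pick $\mu^\circ$ in the relative interior of $B(\theta,a)$; it is nonempty because $B(\theta,a)$ is a nonempty convex set.
\item By the covering step, there is an action $b$ with $\dim(B(\theta,b))=|\Omega|-1$ and $\mu^\circ\in B(\theta,b)$. Since the dimensions differ, $b\neq a$.
\item Define the affine (linear in $\nu$) function $f(\nu):=\sum_{\omega}\bigl(u_\theta(a,\omega)-u_\theta(b,\omega)\bigr)\nu(\omega)$.
\item On $B(\theta,a)$ we have $f\geq 0$, because $a$ is a best reply there.
\item $f(\mu^\circ)=0$, because both $a$ and $b$ are best replies at $\mu^\circ$.
\item An affine function that is nonnegative on a convex set and vanishes at a relative-interior point vanishes on the whole set. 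Indeed, for any $\nu\in B(\theta,a)$ the point $\mu^\circ-t(\nu-\mu^\circ)$ lies in $B(\theta,a)$ for small $t>0$, and $f$ is nonnegative at both that point and at $\nu$, while their appropriate convex combination is $\mu^\circ$, where $f=0$; this forces $f(\nu)=0$.
\item So $f\equiv 0$ on $B(\theta,a)$. At every $\nu\in B(\theta,a)$, action $b$ then attains the same receiver utility as the best reply $a$, so $\nu\in B(\theta,b)$.
\end{enumerate}
This proves $B(\theta,a)\subseteq B(\theta,b)$ with $\dim(B(\theta,b))=|\Omega|-1$.

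The main point needing care is the covering step. One must state precisely that ``lower-dimensional'' means nowhere dense relative to the simplex, since all dimensions are computed inside the affine hull of $\Delta(\Omega)$. Once that is settled, the containment step is the short relative-interior argument above.
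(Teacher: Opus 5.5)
Your proof is correct, but it runs in the opposite order from the paper's. The paper first notes only that some full-dimensional region exists, and then proves the containment by contradiction. Suppose $B(\theta,a)$ is not contained in any full-dimensional $B(\theta,b)$. For each such $b$, pick $\mu_b\in B(\theta,a)$ where $a$ strictly beats $b$, and average these points with positive weights. This gives a point $\mu^*$ at which $a$ strictly beats every full-dimensional action. On a relative neighborhood of $\mu^*$, after deleting the finitely many lower-dimensional regions of the other actions, only $a$ remains a best reply. That forces $\dim B(\theta,a)=|\Omega|-1$, a contradiction. The covering identity is then read off as a corollary.

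You instead establish the covering first, by a nowhere-density plus closedness argument. You then get the containment from the elementary fact that a linear function that is nonnegative on a convex set and vanishes at a relative-interior point vanishes on the whole set.

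Your route is more direct and gives slightly more. Any action $b$ that is a best reply at a single relative-interior point of $B(\theta,a)$ satisfies $B(\theta,a)\subseteq B(\theta,b)$, so the witness $b$ is explicit rather than obtained by contradiction. You also use the dimension-counting fact (finitely many lower-dimensional convex sets cannot fill a relatively open piece of the simplex) exactly once. The paper invokes it twice: to get a full-dimensional action at the start, and at the end of its contradiction. The paper's route, in turn, avoids relative interiors and works only with strict inequalities and open neighborhoods.

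You flag one point as needing care: a closed subset of $\Delta(\Omega)$ lying in a proper affine subspace of its affine hull has empty interior relative to $\Delta(\Omega)$. This is indeed needed. It holds because every nonempty relatively open subset of the simplex contains relative-interior points of $\Delta(\Omega)$, and hence is $(|\Omega|-1)$-dimensional.
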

\begin{proof}[Proof of Lemma~\ref{lemma:lower_dimensional_set_subset}]
Since $\bigcup_{a\in A}B(\theta,a)=\Delta(\Omega)$ and $\dim(\Delta(\Omega))=|\Omega|-1$, there exists at least one action $b$ such that $\dim(B(\theta,b))=|\Omega|-1$, because a finite union of lower-dimensional polytopes cannot cover \(\Delta(\Omega)\).

Suppose, for contradiction, that for every $b\neq a$ with $\dim(B(\theta,b))=|\Omega|-1$, $B(\theta,a)\setminus B(\theta,b)\neq\emptyset$.
Then for each such $b$, there exists $\mu_b\in B(\theta,a)$ such that $\sum_{\omega} u_{\theta}(a,\omega)\mu_b(\omega)> \sum_{\omega} u_{\theta}(b,\omega)\mu_b(\omega)$.
Let $\mu^*$ be a convex combination of $\{\mu_b:b\neq a, \dim(B(\theta,b))=|\Omega|-1\}$ with strictly positive weights. Since $B(\theta,a)$ is convex, $\mu^*\in B(\theta,a)$.
Then for every $b$ with $\dim(B(\theta,b))=|\Omega|-1$, $\sum_{\omega} u_{\theta}(a,\omega)\mu^*(\omega)> \sum_{\omega} u_{\theta}(b,\omega)\mu^*(\omega)$. 
Hence, by continuity, there is a relative neighborhood of $\mu^*$ in the simplex, denoted by $\operatorname{Ball}_{\gamma}(\mu^*)\bigcap \Delta(\Omega)$, satisfying $\forall\mu\in \operatorname{Ball}_{\gamma}(\mu^*)\bigcap \Delta(\Omega)$, $\sum_{\omega} u_{\theta}(a,\omega)\mu(\omega)> \sum_{\omega} u_{\theta}(b,\omega)\mu(\omega)$ for every $b\neq a$ with $\dim(B(\theta,b))=|\Omega|-1$.

Let $A_a:=\{c\in A: c\neq a, \dim(B(\theta,c))< |\Omega|-1 \}$ denote the collection of other actions with lower-dimensional sets. 
Every belief in $\left(\operatorname{Ball}_{\gamma}(\mu^*)\bigcap \Delta(\Omega)\right)  \setminus\bigcup_{c\in A_a}B(\theta,c)$ belongs to none of the best-reply regions associated with actions other than $a$, and hence belongs to \(B(\theta,a)\).
Note that $\dim(\operatorname{Ball}_{\gamma}(\mu^*)\bigcap \Delta(\Omega))=|\Omega|-1$, and removing finitely many lower-dimensional sets does not reduce its dimension,
\begin{equation*}
 \dim(B(\theta,a)) \geq \dim \left[ \left(\operatorname{Ball}_{\gamma}(\mu^*)\bigcap \Delta(\Omega)\right)\setminus \left(\bigcup_{c\in A_a} B(\theta,c) \right) \right]=|\Omega|-1~.
\end{equation*}
This contradicts the assumption that \(B(\theta,a)\) is lower-dimensional.
\end{proof}

\begin{lemma}
\label{lemma:inf_choice}
Let $G(\theta_0)$ be a BP model.   
If the conditions in Proposition~\ref{prop:suffi_all} fail, then for every optimal policy $\pi$, there exists a critical posterior $\mu\in\operatorname{supp}(\pi)$; furthermore, there exists $b\in A^*(\theta_0,\mu)$ that is not an induced action at $\mu$  and satisfies $\operatorname{dim}(B(\theta_0,b))=|\Omega|-1$, and for every induced action $a$ at $\mu$,  $\sum_{\omega\in \Omega} v(b,\omega) \mu(\omega)<\sum_{\omega\in \Omega} v(a,\omega)\mu(\omega)$.
\end{lemma}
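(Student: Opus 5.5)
The first claim is essentially the negation of the hypothesis of Proposition~\ref{prop:suffi_all}. If the conditions fail, then no optimal policy has all of its support non-critical. So every optimal policy $\pi$ has at least one critical posterior $\mu$ in its support. The substantive work is to show that at such a critical $\mu$ there is a full-dimensional best reply $b$ that is strictly worse for the sender than every induced action.

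Fix an optimal $\pi$ and a critical $\mu\in\operatorname{supp}(\pi)$. Let $a$ be any induced action at $\mu$, so $a\in A^*(\theta_0,\mu)$ and $\sum_\omega v(a,\omega)\mu(\omega)=V(\theta_0,\mu)$. Because $\mu$ is critical, $a$ violates condition 1 or condition 2 of Definition~\ref{def:critical_mu}, and I split into these two cases.

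\emph{Case 1: $\dim(B(\theta_0,a))=|\Omega|-1$ but condition 2 fails.} Then there is $b\in R(\theta_0,a)$ with a different sender payoff at $\mu$. Since $a$ is sender-optimal among best replies, and $b$ is a best reply (duplicate utilities), that payoff must be strictly lower. Moreover $B(\theta_0,b)=B(\theta_0,a)$, so $b$ is full-dimensional. Because $b$ yields strictly less than $V(\theta_0,\mu)$, it is not induced, and it is strictly below every induced action, all of which attain $V(\theta_0,\mu)$.

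\emph{Case 2: $B(\theta_0,a)$ is lower-dimensional.} By Lemma~\ref{lemma:lower_dimensional_set_subset} there is $c$ with $B(\theta_0,a)\subseteq B(\theta_0,c)$ and $\dim(B(\theta_0,c))=|\Omega|-1$, so $c\in A^*(\theta_0,\mu)$. Consider the duplicate class $R(\theta_0,c)$; every member is a full-dimensional best reply at $\mu$.
\begin{itemize}
\item If some member of $R(\theta_0,c)$ is induced, then that member must itself violate condition 2 (it satisfies condition 1 and $\mu$ is critical). We are then back in Case 1 with that member playing the role of $a$, which produces the desired $b$.
\item If no member of $R(\theta_0,c)$ is induced, then every member gives sender payoff strictly below $V(\theta_0,\mu)$. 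Take $b=c$; it is full-dimensional, not induced, and strictly below every induced action.
\end{itemize}

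The main subtlety is ordering the argument so that the choice of $b$ does not depend on a particular induced $a$. This is handled by the observation that all induced actions share the common value $V(\theta_0,\mu)$, so "strictly below one induced action" is the same as "strictly below all of them." A second point to check is that Case 1 really covers every situation in which some full-dimensional best reply is induced: any such reply fails condition 2 by criticality, so its duplicate class contains a strictly sender-worse member, which is then the required $b$.
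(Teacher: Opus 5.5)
Your proof is correct and follows essentially the same route as the paper: both use Lemma~\ref{lemma:lower_dimensional_set_subset} to find a full-dimensional best reply whose region contains a lower-dimensional induced region, and both use the failure of condition~2 of Definition~\ref{def:critical_mu} to obtain a strictly sender-worse duplicate of a full-dimensional induced action. The difference is only in how the cases are organized. The paper splits on whether \emph{all} induced actions are lower-dimensional, whereas you fix one induced action and, in the lower-dimensional case, check whether the duplicate class of its full-dimensional container contains an induced action.
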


\begin{proof}
Failure of Proposition~\ref{prop:suffi_all}'s condition means every optimal policy supports a critical posterior $\mu$. Let $a$ be a sender-preferred action induced there. If all induced actions have lower-dimensional regions, Lemma~\ref{lemma:lower_dimensional_set_subset} provides $b\in A^*(\theta_0,\mu)$ with $B(\theta_0,a)\subseteq B(\theta_0,b)$ and $\dim B(\theta_0,b)=|\Omega|-1$. Since $b$ is not sender-preferred, its sender payoff is strictly lower. Otherwise, criticality means that some full-dimensional induced $a$ has a duplicate $b\in R(\theta_0,a)$ with strictly lower sender payoff; duplication implies $\dim B(\theta_0,b)=\dim B(\theta_0,a)=|\Omega|-1$.
\end{proof}
\begin{lemma}
 \label{lemma:inf_choice_2}   
For a BP model $G(\theta_0)$, if $\mu$ is critical at $\theta_0$, then $ V(\sigma(\theta_0),\mu)<V(\theta_0,\mu)$.
\end{lemma}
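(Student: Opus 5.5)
We argue by comparing the actions available at $\mu$ under $\sigma(\theta_0)$ with the induced actions at $\theta_0$. By construction, $A^*(\sigma(\theta_0),\mu)\subseteq A^*(\theta_0,\mu)$, so $V(\sigma(\theta_0),\mu)\le V(\theta_0,\mu)$. Hence it suffices to show that no action in $A^*(\sigma(\theta_0),\mu)$ is an induced action at $\mu$ under sender-preferred tie-breaking at $\theta_0$. Since $A$ is finite, the maximum defining $V(\sigma(\theta_0),\mu)$ is attained by some $c\in A^*(\sigma(\theta_0),\mu)$. If $c$ is not induced at $\theta_0$, then its sender payoff at $\mu$ is strictly below $V(\theta_0,\mu)$, which gives the strict inequality. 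The set $A^*(\sigma(\theta_0),\mu)$ is nonempty by the footnote to Definition~\ref{def:pseudo_type}, which rests on Lemma~\ref{lemma:lower_dimensional_set_subset}, but nonemptiness is not needed for the upper bound.

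Take any $c\in A^*(\sigma(\theta_0),\mu)$ and suppose, for contradiction, that $c$ is induced at $\mu$ under $\theta_0$. By Definition~\ref{def:pseudo_type}, membership of $\mu$ in $B(\sigma(\theta_0),c)$ gives two facts. First, $\dim B(\theta_0,c)=|\Omega|-1$, which is condition 1 of Definition~\ref{def:critical_mu}. Second, $\sum_\omega v(c,\omega)\mu(\omega)\le \sum_\omega v(b,\omega)\mu(\omega)$ for every $b\in R(\theta_0,c)$.

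Every $b\in R(\theta_0,c)$ has the same receiver payoffs as $c$, so $b\in A^*(\theta_0,\mu)$. Because $c$ is sender-optimal among the best replies, $\sum_\omega v(b,\omega)\mu(\omega)\le \sum_\omega v(c,\omega)\mu(\omega)=V(\theta_0,\mu)$. Combined with the second fact, this gives equality for every $b\in R(\theta_0,c)$, which is condition 2 of Definition~\ref{def:critical_mu}.

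So $c$ is an induced action satisfying both conditions of Definition~\ref{def:critical_mu}, and therefore $\mu$ is not critical. This contradicts the hypothesis. Hence every $c\in A^*(\sigma(\theta_0),\mu)$ has sender payoff strictly less than $V(\theta_0,\mu)$, and in particular the maximizing $c$ does, which proves $V(\sigma(\theta_0),\mu)<V(\theta_0,\mu)$.

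The only delicate step is bookkeeping: applying Definition~\ref{def:pseudo_type} with $R(\theta,c)$ evaluated at $\theta=\theta_0$, and noting that duplicates of $c$ are automatically best replies at $\mu$. Lemma~\ref{lemma:inf_choice} is not needed for this argument.
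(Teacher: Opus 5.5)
Your proof is correct and takes the same approach as the paper. The paper's proof simply asserts that at a critical $\mu$ every sender-preferred action in $A^*(\theta_0,\mu)$ is removed when passing to $\sigma(\theta_0)$, and your contradiction argument spells out exactly that claim. One small quibble: choosing a maximizer $c\in A^*(\sigma(\theta_0),\mu)$ does require nonemptiness, so your remark that nonemptiness is not needed is slightly off. This is harmless, since the paper also relies on nonemptiness, which follows from Lemma~\ref{lemma:lower_dimensional_set_subset}.
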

\begin{proof}
At a critical $\mu$, every sender-preferred action in $A^*(\theta_0,\mu)$ is removed by Definition~\ref{def:pseudo_type}, while
$\emptyset\neq A^*(\sigma(\theta_0),\mu)\subseteq A^*(\theta_0,\mu)$. Every remaining action therefore gives strictly less than $V(\theta_0,\mu)$, so $V(\sigma(\theta_0),\mu)<V(\theta_0,\mu)$.
\end{proof}

\begin{definition}
\label{def:extreme_points}
For any convex set $B\subseteq\Delta(\Omega)$ generated by finitely many linear
inequalities, define $\operatorname{extremeP}(B):=\emptyset$, if $B=\emptyset$; otherwise, 
\begin{equation*}
\operatorname{extremeP}(B):=\text{the set of extreme points of }B~.
\label{eq:extreme_points}
\end{equation*}
The notation applies in particular to $\operatorname{extremeP}(B(\theta,a))$ and $\operatorname{extremeP}(B(\sigma(\theta),a))$.
\end{definition}
\begin{definition}
\label{def:basic_policies}
For a type $\theta$ or a virtual type $\sigma(\theta)$, a signal policy $\pi$ is \emph{basic} if:
\begin{enumerate}
    \item its support is contained in $ \bigcup_{a\in A}\operatorname{extremeP}(B(\theta,a))
    $ or $ \bigcup_{a\in A}\operatorname{extremeP}(B(\sigma(\theta),a))$, respectively;
    \item $\pi$ is the unique signal policy whose support is contained in $\operatorname{supp}(\pi)$; equivalently, the probability weights assigned to the posterior beliefs in $\operatorname{supp}(\pi)$ are uniquely determined by Bayes plausibility.
\end{enumerate}
\end{definition}
\begin{lemma}
    \label{lemma:v_pse_upper}
 For every BP model $G(\theta)$, the indirect utility function  for its virtual type $\mu\mapsto V(\sigma(\theta),\mu)$ is upper semicontinuous.
\end{lemma}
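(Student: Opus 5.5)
The plan is to write $V(\sigma(\theta),\cdot)$ as a finite maximum of functions of the form
\[
f_a(\mu):=\Bigl(\sum_{\omega\in\Omega}v(a,\omega)\mu(\omega)\Bigr)\mathbbm{1}_{\{\mu\in B(\sigma(\theta),a)\}}-\mathbbm{1}_{\{\mu\notin B(\sigma(\theta),a)\}},
\]
or, equivalently, to argue directly with sequences. Fix $\mu$ and a sequence $\mu_n\to\mu$. The goal is to show that $\limsup_n V(\sigma(\theta),\mu_n)\le V(\sigma(\theta),\mu)$.

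The first step is to check that each virtual best-reply region $B(\sigma(\theta),a)$ is closed. Either it is empty, or it is the intersection of the compact set $B(\theta,a)$ with finitely many closed half-spaces $\{\mu:\sum_\omega v(a,\omega)\mu(\omega)\le\sum_\omega v(b,\omega)\mu(\omega)\}$, one for each $b\in R(\theta,a)$. In both cases it is compact.

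The second step uses the fact that $V(\sigma(\theta),\mu)$ is well defined on all of $\Delta(\Omega)$, meaning $A^*(\sigma(\theta),\mu)\neq\emptyset$ for every $\mu$. This follows from the covering property recorded after Definition~\ref{def:pseudo_type}. For a full-dimensional $a$, we have $\bigcup_{b\in R(\theta,a)}B(\sigma(\theta),b)=B(\theta,a)$. This holds because at each $\mu\in B(\theta,a)$ a sender-worst element of the finite class $R(\theta,a)$ exists, and every duplicate $b$ shares the region $B(\theta,b)=B(\theta,a)$, which is full-dimensional. Lemma~\ref{lemma:lower_dimensional_set_subset} states that the full-dimensional regions already cover $\Delta(\Omega)$. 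Combining these gives the claim.

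The third step is the main argument. Because $A$ is finite, we may pass to a subsequence that attains the $\limsup$ and along which $V(\sigma(\theta),\mu_n)$ is attained by one fixed action $a$ with $\mu_n\in B(\sigma(\theta),a)$ for all $n$. Closedness of $B(\sigma(\theta),a)$ then gives $\mu\in B(\sigma(\theta),a)$, so $a\in A^*(\sigma(\theta),\mu)$. Continuity of $\mu\mapsto\sum_\omega v(a,\omega)\mu(\omega)$ yields
\[
\lim_n V(\sigma(\theta),\mu_n)=\sum_\omega v(a,\omega)\mu(\omega)\le V(\sigma(\theta),\mu).
\]
This is the same pattern as the proof of Lemma~\ref{lemma:V_upper_semi}, with the type held fixed and closed regions in place of strict inequalities.

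The only delicate point is the closedness in the first step, together with nonemptiness of $A^*(\sigma(\theta),\cdot)$ in the second. Once those are secured, the subsequence argument is routine. I expect no real obstacle beyond confirming that removing lower-dimensional regions does not leave any belief uncovered, and Lemma~\ref{lemma:lower_dimensional_set_subset} handles exactly that.
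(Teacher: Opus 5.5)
Your proposal is correct and follows essentially the same route as the paper. Each virtual region $B(\sigma(\theta),a)$ is compact, and $A$ is finite, so along $\mu_n\to\mu$ the virtual best replies are eventually best replies at $\mu$ (in your version, along a subsequence with one fixed action); continuity of the linear sender payoff then finishes the argument exactly as in Lemma~\ref{lemma:V_upper_semi}. Your explicit check that $A^*(\sigma(\theta),\mu)\neq\emptyset$ via Lemma~\ref{lemma:lower_dimensional_set_subset} is a welcome detail that the paper only records in a footnote to Definition~\ref{def:pseudo_type}.
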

\begin{proof}
Each $B(\sigma(\theta),a)$ is compact. If $\mu\notin B(\sigma(\theta),a)$, a neighborhood of $\mu$ is disjoint from that region. Finiteness of $A$ implies that $\mu_n\to\mu$ entails $A^*(\sigma(\theta),\mu_n)\subseteq A^*(\sigma(\theta),\mu)$ eventually. The argument of Lemma~\ref{lemma:V_upper_semi} completes the proof.
\end{proof}

\begin{lemma}
\label{lemma:representation_basic}
For a type $\theta$ or a virtual type $\sigma(\theta)$, there exists an optimal policy that is basic.
\end{lemma}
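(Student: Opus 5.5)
The plan is to treat both cases (a real type $\theta$ and a virtual type $\sigma(\theta)$) at once. In each case the indirect utility $\mu\mapsto V(\cdot,\mu)$ is upper semicontinuous (Lemma~\ref{lemma:V_upper_semi}, respectively Lemma~\ref{lemma:v_pse_upper}). It is also the pointwise maximum over actions $a$ of the functions equal to $\sum_\omega v(a,\omega)\mu(\omega)$ on the polytope $B_a$ (either $B(\theta,a)$ or $B(\sigma(\theta),a)$), and these polytopes cover $\Delta(\Omega)$ (Lemma~\ref{lemma:lower_dimensional_set_subset} and the footnote to Definition~\ref{def:pseudo_type}). Because of upper semicontinuity on the compact set $\Pi$ (Lemma~\ref{lemma:Pi_compact}), an optimal policy $\pi$ exists. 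I will then replace $\pi$, in two steps, by a basic policy with at least the same value.

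\textbf{Step 1 (reduction to finite support on extreme points).} Select measurably an action $a(\mu)$ with $\mu\in B_{a(\mu)}$ attaining $V(\cdot,\mu)$. Such a selector exists because there are finitely many actions and the regions are closed; for instance, take the smallest index among maximizers, which gives a Borel partition. Each $B_a$ is a polytope, so each $\mu\in B_{a(\mu)}$ can be written as a convex combination of $\operatorname{extremeP}(B_{a(\mu)})$ with weights chosen measurably (for example via a fixed triangulation of each polytope, or a measurable Carathéodory selection). Replace each posterior $\mu$ by this lottery over extreme points and call the result $\pi'$. Bayes plausibility is preserved by the law of total expectation. On $B_a$ the payoff of action $a$ is linear. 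At each extreme point $e$ of $B_a$, $a$ is still an admissible best reply, so $V(\cdot,e)\geq \sum_\omega v(a,\omega)e(\omega)$. Hence $W(\cdot,\pi')\geq W(\cdot,\pi)$, and $\pi'$ is optimal with support in the finite set $E:=\bigcup_a \operatorname{extremeP}(B_a)$.

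\textbf{Step 2 (reduction to uniquely determined weights).} Now $\pi'$ is a point of the polytope $P:=\{p\in\Delta(E):\sum_{e\in E}p_e e=\mu_0\}$, and the objective $p\mapsto\sum_e p_e V(\cdot,e)$ is linear on $P$. A linear function on a nonempty polytope attains its maximum at a vertex, so some vertex $p^*$ of $P$ is optimal. A vertex of $P$ is exactly a feasible point whose support vectors $\{e\}_{p^*_e>0}$ are affinely independent. For such a point, any signal policy supported within $\operatorname{supp}(p^*)$ must have the same weights, by uniqueness of barycentric coordinates. This gives condition~2 of Definition~\ref{def:basic_policies}; condition~1 holds because the support lies in $E$.

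I expect the main obstacle to be the measurability bookkeeping in Step 1. To avoid it, I would first argue that the value is unchanged if we restrict attention to finitely supported policies. Alternatively, one can bypass the selector by noting that the concave envelope of an upper semicontinuous function which is piecewise linear over finitely many polytopes is attained by splittings onto finitely many points. Each such point can then be pushed to extreme points as above, so Step 1 only has to be carried out for finitely supported policies.
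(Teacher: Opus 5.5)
Your proof is correct, but your Step~2 takes a different route from the paper.

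\textbf{The paper's route.} It first obtains a finitely supported optimal policy by citing Kamenica--Gentzkow, a step that relies on upper semicontinuity. It then pushes each posterior onto the extreme points of its region, exactly as in your Step~1. Uniqueness of weights comes from a Carath\'eodory-style elimination. Suppose another Bayes-plausible $\pi_2$ lives on $\operatorname{supp}(\pi_1)$. The paper sets $\rho=\max_\mu \pi_2(\{\mu\})/\pi_1(\{\mu\})>1$ and writes $\pi_1=\tfrac{1}{\rho}\pi_2+\tfrac{\rho-1}{\rho}\pi_3$. Linearity shows that $\pi_3$ is still optimal, and $\pi_3$ has strictly smaller support, so the procedure iterates until the weights are unique.

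\textbf{Your route.} You observe that the Bayes-plausible weights on the finite set $E$ form a polytope $P$, and the objective is linear on $P$. Hence an optimum is attained at a vertex, that is, at a point whose support is affinely independent. Uniqueness of barycentric coordinates then pins down its weights.

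\textbf{Comparison.} The two arguments use the same mechanism: the paper's iteration is the standard walk from a feasible point to a basic feasible solution without losing value. Yours is shorter. It also identifies basic policies with vertices of $P$, which makes immediate the finiteness of $\Pi_{\mathrm{basic}}$ used in Lemma~\ref{lemma:nece_pseudo_gap}.

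Your measurable decomposition (smallest-index selector plus barycentric coordinates on a vertex-only triangulation) applies to an arbitrary policy. It therefore already gives $\sup_{\pi\in\Pi}W=\max_{p\in P}\sum_e p_e V(\cdot,e)$, so your appeal to upper semicontinuity and compactness for existence is not needed.

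The paper's version avoids the measurability bookkeeping altogether by working only with finitely supported policies from the outset. That is the alternative you sketch in your last paragraph.
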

\begin{proof}
For each $\tau\in\{\theta,\sigma(\theta)\}$, there exists an optimal policy supported on at most $|\Omega|$ posterior beliefs because of the upper semicontinuity of $\mu\mapsto V(\tau,\mu)$ \citep*{Kamenica2011BayesianPersuasion}. Note that each $B(\theta,a)$ and $B(\sigma(\theta),a)$ is convex and the sender's expected utility from a fixed action is linear in the posterior belief. Decompose each supported posterior into extreme points of the relevant best-reply region. Linearity preserves the payoff from the associated action, and sender-preferred tie-breaking can only improve it. Thus, there is an optimal policy $\pi_1$ supported on a finite set of these extreme points.

If Bayes plausibility does not uniquely determine $\pi_1$'s weights, choose another policy $\pi_2\neq\pi_1$ supported on $\operatorname{supp}(\pi_1)$.
Define $\rho:= \max_{\mu\in \operatorname{supp}(\pi_1)} \tfrac{\pi_2(\{\mu\})}{\pi_1(\{\mu\})}$.
Since $\pi_1$ and $\pi_2$ are distinct, $\rho>1$.
Define $\pi_3$ by for every $\mu\in \Delta(\Omega)$, $\pi_3(\{\mu\}):= \tfrac{\rho}{\rho-1} \left[\pi_1 (\{\mu\})- \tfrac{1}{\rho} \pi_2 (\{\mu\})\right]$.
By the definition of $\rho$, for every $\mu\in \operatorname{supp}(\pi_3)\subseteq\operatorname{supp}(\pi_1)$,  $\pi_3(\{\mu\})\geq 0$; the total mass of $\pi_3$ is $1$. The barycenter of $\pi_3$ is $\mu_0$ because $\pi_1$ and $\pi_2$ have the same barycenter $\mu_0$. $\pi_3$ is a signal policy. For $\tau\in\{\theta,\sigma(\theta)\}$
\begin{equation*}
    \int_{\operatorname{supp}(\pi_1)} V(\tau,\mu) \pi_1(d\mu)= \tfrac{1}{\rho}  \int_{ \operatorname{supp}(\pi_1)} V(\tau,\mu) \pi_2(d\mu)+ \tfrac{\rho-1}{\rho} \int_{ \operatorname{supp}(\pi_1)} V(\tau,\mu) \pi_3(d\mu)~.
\end{equation*}
Since $\pi_1$ is optimal, $\pi_2$ and $\pi_3$ must be optimal. At any posterior
attaining the maximum in the definition of $\rho$, $\pi_3$ assigns zero probability. Thus, $\pi_3$ has strictly smaller support than $\pi_1$.
Finite repetition yields an optimal policy whose weights are uniquely determined by its support, hence a basic policy.

\end{proof}

\begin{proof}[Proof of Lemma~\ref{lemma:nece_pseudo_gap}]
Pointwise, $V(\sigma(\theta_0),\mu)\leq V(\theta_0,\mu)$, hence $W(\sigma(\theta_0),\pi)\leq W(\theta_0,\pi)$ for every $\pi$. 
The collection of extreme points of the regions $\{B(\sigma(\theta_0),a)\}_{a\in A}$ is finite; denote it by
$\{\mu_j^{\operatorname{ex}}\}_{j\leq J}$. Since the weights of a basic policy are uniquely determined by its support, there are finitely many basic policies at $\sigma(\theta_0)$ because there are finitely many subsets of extreme points (potential supports). Denote their collection by $\Pi_{\mathrm{basic}}$.
By Lemma~\ref{lemma:representation_basic}, there exists an optimal policy at $\sigma(\theta_0)$ in $\Pi_{\mathrm{basic}}$ and so $\Pi_{\mathrm{basic}}\neq \emptyset$. Let $\Pi_{\theta_0}$ denote the collection of optimal policies at $\theta_0$. 

If $\Pi_{\mathrm{basic}}\setminus\Pi_{\theta_0}=\emptyset$, define $C_1:=\infty$; otherwise, define
\begin{equation*}
    C_1:=\min_{\pi \in \Pi_{\mathrm{basic}} \setminus \Pi_{\theta_0}} \left[  \max_{\pi'\in \Pi} W(\theta_0,\pi') - W(\theta_0,\pi) \right].
\end{equation*}
Since $\Pi_{\mathrm{basic}}\setminus\Pi_{\theta_0}$ is finite and contains no optimal policy at $\theta_0$, $C_1>0$. 
\begin{equation*}
\text{For any $\pi\in \Pi_{\mathrm{basic}} \setminus \Pi_{\theta_0}$:}\quad
W(\sigma(\theta_0),\pi)\leq W(\theta_0,\pi)\leq  \max_{\pi'\in \Pi} W(\theta_0,\pi')-C_1~.
\end{equation*}
We now consider the policies in $\Pi_{\mathrm{basic}}\cap\Pi_{\theta_0}$. 
Note that failure of Proposition~\ref{prop:suffi_all}'s conditions places a critical posterior in every optimal policy's support at $\theta_0$. 
Let $Y:=\left\{\mu_j^{\operatorname{ex}}:\mu_j^{\operatorname{ex}}\text{ is critical at }\theta_0\right\}$.
If $\Pi_{\mathrm{basic}}\cap\Pi_{\theta_0}\neq\emptyset$, then $Y\neq\emptyset$. Define
\begin{equation*}
   q_1:= \min_{\mu\in Y}\left[ V(\theta_0,\mu) - V(\sigma(\theta_0), \mu) \right],\quad \text{and}\quad q_2:=\min_{\pi\in \Pi_{\mathrm{basic}}} \left[\min_{\mu\in \operatorname{supp}(\pi)} \pi(\{\mu\})\right]~.
\end{equation*}
By Lemma~\ref{lemma:inf_choice_2}, the payoff difference in the definition of $q_1$ is strictly positive at every $\mu\in Y$. Since $Y$ is finite, $q_1>0$. Moreover, recalling the definition of basic policies, since $\Pi_{\mathrm{basic}}$ is finite and every posterior in the support of a policy has strictly positive probability, $q_2>0$.
For any $\pi\in \Pi_{\mathrm{basic}}\cap \Pi_{\theta_0}$, choose a critical posterior $\mu_\pi\in\operatorname{supp}(\pi)\cap Y$. Then, 
\begin{equation*}
W(\theta_0,\pi)-W(\sigma(\theta_0),\pi)\geq\pi(\{\mu_\pi\})\left[V(\theta_0,\mu_\pi)-V(\sigma(\theta_0),\mu_\pi)\right]\geq q_1q_2~.
\end{equation*}
If $\Pi_{\mathrm{basic}}\cap \Pi_{\theta_0}=\emptyset$, let $C_2:=\infty$; otherwise, let $C_2:=q_1q_2$.
\begin{equation*}
\text{For any $\pi\in \Pi_{\mathrm{basic}} \cap \Pi_{\theta_0}$:}\quad
W(\sigma(\theta_0),\pi)\leq W(\theta_0,\pi)-C_2\leq  \max_{\pi'\in \Pi} W(\theta_0,\pi')-C_2~.
\end{equation*}
Choosing $C^*= \min\{C_1,C_2\}$ completes the proof.
\end{proof}

\begin{lemma}
\label{lemma:nece_remove_clear}
    Consider the sequence defined in Lemma~\ref{lemma:nece_sequence}. If $B(\theta_0,a)=\emptyset$ or $\dim(B(\theta_0,a))<|\Omega|-1$, then for every $\mu\in \Delta(\Omega)$ and for every $n\in \mathbb{N}^+$, $ a\not\in A^*(\theta_n,\mu)$.
\end{lemma}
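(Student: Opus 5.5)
The plan is to show that $a$ is strictly dominated at every belief under $\theta_n$. Fix $n$ and $\mu\in\Delta(\Omega)$, and let $a$ be an action with $B(\theta_0,a)=\emptyset$ or $\dim(B(\theta_0,a))<|\Omega|-1$. The aim is to exhibit an action $b$ with $\dim(B(\theta_0,b))=|\Omega|-1$ and
\[
\sum_{\omega\in\Omega}u_{\theta_n}(b,\omega)\mu(\omega)>\sum_{\omega\in\Omega}u_{\theta_n}(a,\omega)\mu(\omega),
\]
which immediately gives $a\notin A^*(\theta_n,\mu)$.

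First I choose $b$ using the unperturbed type. By Lemma~\ref{lemma:lower_dimensional_set_subset}, the full-dimensional best-reply regions at $\theta_0$ cover $\Delta(\Omega)$. So there is some $b$ with $\dim(B(\theta_0,b))=|\Omega|-1$ and $\mu\in B(\theta_0,b)$. In particular $b\in A^*(\theta_0,\mu)$, hence
\[
\sum_{\omega}u_{\theta_0}(b,\omega)\mu(\omega)\ge\sum_{\omega}u_{\theta_0}(a,\omega)\mu(\omega).
\]
Note that $b\neq a$, since $a$ is not full-dimensional. (If $B(\theta_0,a)=\emptyset$, this inequality is even strict, but we do not need that.)

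Next I compute the effect of the perturbation in Lemma~\ref{lemma:nece_sequence}. Since $b$ falls in the first case and $a$ in the second,
\[
\sum_{\omega}u_{\theta_n}(b,\omega)\mu(\omega)-\sum_{\omega}u_{\theta_n}(a,\omega)\mu(\omega)
=\Bigl[\sum_{\omega}(u_{\theta_0}(b,\omega)-u_{\theta_0}(a,\omega))\mu(\omega)\Bigr]+\tfrac{2}{n}-\tfrac{1}{n}\sum_{\omega}v(b,\omega)\mu(\omega)+\tfrac{1}{n}.
\]
The bracketed term is nonnegative. Because $v\in[0,1]$, the remaining terms are at least $\tfrac{3}{n}-\tfrac{1}{n}=\tfrac{2}{n}>0$. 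So $b$ strictly beats $a$ at $\mu$ under $\theta_n$, which completes the argument.

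No step is a real obstacle. The only substantive input is the covering statement in Lemma~\ref{lemma:lower_dimensional_set_subset}, which guarantees that a full-dimensional best reply exists at every belief. The rest is a direct use of the normalization $v\in[0,1]$.
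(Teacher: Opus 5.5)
Your proposal is correct and follows the same route as the paper: you use Lemma~\ref{lemma:lower_dimensional_set_subset} to obtain a full-dimensional best reply $b$ at $\mu$ under $\theta_0$, and then observe that the perturbation raises $b$'s utility while lowering $a$'s. Your explicit lower bound of $\tfrac{2}{n}$ on the utility gap, which uses $v\in[0,1]$, simply makes precise the paper's informal remark that full-dimensional actions gain utility while $a$ loses utility.
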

\begin{proof}
By Lemma~\ref{lemma:lower_dimensional_set_subset},  $\bigcup_{\{b\in A:\dim(B(\theta_0,b))=|\Omega|-1\}}B(\theta_0,b)=\bigcup_{b\in A}B(\theta_0,b)=\Delta(\Omega)$.
So for every $\mu\in \Delta(\Omega)$, there exists another action $b$ such that $\dim(B(\theta_0,b))=|\Omega|-1$ and $\sum_{\omega\in \Omega} u_{\theta_0}(b,\omega)\mu(\omega)\geq \max_{c\in A} \sum_{\omega\in \Omega} u_{\theta_0}(c,\omega)\mu(\omega)\geq \sum_{\omega\in \Omega} u_{\theta_0}(a,\omega)\mu(\omega)$.
Throughout the sequence in Lemma~\ref{lemma:nece_sequence}, the utility of every action with a full-dimensional best-reply region increases, while the utility of $a$ decreases. So for every $n$, $\sum_{\omega\in\Omega}u_{\theta_n}(b,\omega)\mu(\omega)>\sum_{\omega\in\Omega}u_{\theta_n}(a,\omega)\mu(\omega)$, and therefore $a\notin A^*(\theta_n,\mu)$.
\end{proof}

\begin{proof}[Proof of Lemma~\ref{lemma:nece_sequence}]
Berge's maximum theorem does not apply directly to the sequence-specific comparison.
For any $x\in \mathbb{R}\setminus\{0\}$, define the type $\theta(x)$ by
\begin{equation*}
      u_{\theta(x)}(a,\omega):= \begin{cases}
         u_{\theta_0}(a,\omega) + 2|x| -|x|v(a,\omega) & \text{if } B(\theta_0,a)\neq\emptyset\,,\, \dim(B(\theta_0,a))=|\Omega|-1
         \\
         u_{\theta_0}(a,\omega) - |x| & \text{otherwise} 
     \end{cases}~.
\end{equation*}
 Define function $f:\mathbb{R}\times \Delta(\Omega)\rightarrow \mathbb{R}$ as follows. For any $x$ and $\mu$, if $x=0$, $f(x,\mu):=V(\sigma(\theta_0),\mu)$; otherwise, $f(x,\mu):=V(\theta(x),\mu)$.
For $x\neq0$, the upper semicontinuity of $f$ follows from the upper semicontinuity of $V$ and the continuity of $x\mapsto\theta(x)$. We now verify upper semicontinuity at $x=0$.

As in Lemma~\ref{lemma:nece_remove_clear}, if $B(\theta_0,a)=\emptyset$ or $\dim(B(\theta_0,a))<|\Omega|-1$, then for any $\mu\in \Delta(\Omega)$ and for any $x\neq 0$,  $ a\not\in A^*(\theta(x),\mu)$ because another action strictly dominates it.
Let $A_{\operatorname{res}}:=\{a\in A:  B(\theta_0,a)\neq\emptyset\,,\, \dim(B(\theta_0,a))=|\Omega|-1\}$.
For any $\mu\in\Delta(\Omega)$, recalling that for every $b\in R(\theta_0,a)$, $u_{\theta_0}(a,\omega)=u_{\theta_0}(b,\omega)$ for every $\omega$:
\begin{equation*}
\begin{split}
     A^*(\theta(x),\mu)&=\{a\in A_{\operatorname{res}}: \sum_{\omega\in\Omega}u_{\theta(x)}(a,\omega)\mu(\omega)\geq  \sum_{\omega\in\Omega}u_{\theta(x)}(b,\omega)\mu(\omega),\forall  b\in A_{\operatorname{res}} \}  ~,
     \\
     {}&= \left\{a\in A_{\operatorname{res}}:
\begin{aligned}
&\sum_{\omega\in\Omega} u_{\theta(x)}(a,\omega)\mu(\omega)\geq \sum_{\omega\in\Omega} u_{\theta(x)}(b,\omega)\mu(\omega),&&\forall b \in  A_{\operatorname{res}}\setminus  R(\theta_0,a) 
  \\
&\sum_{\omega\in\Omega} v(a,\omega)\mu(\omega)\leq \sum_{\omega\in\Omega} v(b,\omega)\mu(\omega),&&\forall b\in R(\theta_0,a)
\end{aligned}
\right\}~.   
\end{split}
\end{equation*}  
For any $b\in A\setminus A_{\operatorname{res}}$,  $B(\sigma(\theta_0),b)=\emptyset$. Hence,
\begin{equation*}
\begin{split}
    A^*(\sigma(\theta_0),\mu)&=\{a\in A: \mu\in B(\sigma(\theta_0),a)\}=\{a\in A_{\operatorname{res}}: \mu\in B(\sigma(\theta_0),a)\}~,
    \\
    {}&= \left\{a\in A_{\operatorname{res}}:
\begin{aligned}
&\sum_{\omega\in\Omega} u_{\theta_0}(a,\omega)\mu(\omega)\geq \sum_{\omega\in\Omega} u_{\theta_0}(b,\omega)\mu(\omega),&&\forall b \in  A_{\operatorname{res}}\setminus  R(\theta_0,a) 
  \\
&\sum_{\omega\in\Omega} v(a,\omega)\mu(\omega)\leq \sum_{\omega\in\Omega} v(b,\omega)\mu(\omega), &&\forall b\in R(\theta_0,a) 
\end{aligned}
\right\}~.   
\end{split}
\end{equation*}
Thus, by the same argument as in Lemma~\ref{lemma:V_upper_semi},  for every posterior belief $\mu$, for every sequence $x_n\rightarrow 0$ and $\mu_n\rightarrow\mu$, $A^*(\theta(x_n),\mu_n)\subseteq A^*(\sigma(\theta_0),\mu)$ eventually, and so $f$ is upper semicontinuous. 

For any $x$ and $\pi$, define $F:\mathbb{R}\times \Pi \rightarrow \mathbb{R}$ by $ F(x,\pi)=\int_{\Delta(\Omega)} f(x,\mu)\pi(d\mu)$.
By the same proof as for Lemma~\ref{lemma:joint_upper_W} (replacing $\theta$ with $x$), $F$ is upper semicontinuous on $\mathbb{R}\times \Pi$. 
Thus, applying the upper semicontinuity part of Berge's maximum theorem (same as the proof of Proposition~\ref{lemma:set_upper_continuous}, replacing $W$ with $F$ and $\theta$ with $x$), $\max_{\pi\in \Pi} F(x,\pi)$ is upper semicontinuous with respect to $x$. 
Consider the sequence $\{\tfrac{1}{n}\}_{n\in \mathbb{N}^+}$.
Recalling the construction, for every $n\in \mathbb{N}^+$, $f(\tfrac{1}{n},\mu)=V(\theta_n,\mu)$ and $F(\tfrac{1}{n}, \pi)=W(\theta_n,\pi)$; $f(0,\mu)=V(\sigma(\theta_0),\mu)$ and $F(0, \pi)=W(\sigma(\theta_0),\pi)$. 
Thus, 
\begin{equation*}
    \limsup_{n\rightarrow \infty} \left[ \max_{\pi\in \Pi} W(\theta_n,\pi)\right] =\limsup_{n\rightarrow \infty} \left[ \max_{\pi\in \Pi} F(\tfrac{1}{n},\pi)\right] \leq \max_{\pi\in \Pi} F(0,\pi)= \max_{\pi\in \Pi} W(\sigma(\theta_0),\pi)~.
\end{equation*}
\end{proof}

\renewcommand{\theequation}{C.\arabic{equation}}
\renewcommand{\theHequation}{C.\arabic{equation}}
\setcounter{equation}{0}

\section{Proofs for Section \ref{sec:tie_eq}}
\label{app:tie_eq}

\begin{lemma}
\label{lemma:mapping_tie_1}
Let $G(\theta_0)$ be a BP model. For any $\gamma>0$, there exists a Borel measurable mapping $T:\Delta(\Omega)\to\Delta(\Omega)$ such that, if $\mu$ is not critical at $\theta_0$, then $||\mu-T(\mu)||_2\leq \gamma$ and $\min_{a\in A^*(\theta_0,T(\mu))}  \left(\sum_{\omega\in \Omega} T(\mu)(\omega) v(a,\omega) \right)  \geq V(\theta_0,\mu)-\sqrt{|\Omega|}\gamma$.
\end{lemma}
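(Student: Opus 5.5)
The natural route is to reuse the construction in the proof of Lemma~\ref{lemma:intersect_small_set}, specialised to the single type $\theta_0$. The only change is that the target set must force \emph{every} best reply at $T(\mu)$ to lie in the relevant duplicate class. For each action $a$ with $\dim(B(\theta_0,a))=|\Omega|-1$, work with the set $\bigcup_{b\in R(\theta_0,a)}B(\theta_a,b)$, where $\theta_a$ is the perturbed type from that proof: actions in $R(\theta_0,a)$ are lowered by $\delta$ and all other actions are raised by $\delta$. Fix a point $\mu$ in this set. Then for every $c\notin R(\theta_0,a)$,
\[
\sum_\omega u_{\theta_0}(a,\omega)\mu(\omega)-\sum_\omega u_{\theta_0}(c,\omega)\mu(\omega)\geq 2\delta>0 .
\]
So at $\theta_0$ the receiver's best replies at $\mu$ are exactly $R(\theta_0,a)$; that is, $A^*(\theta_0,\mu)\subseteq R(\theta_0,a)$, and every $b\in R(\theta_0,a)$ is a best reply.

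The key steps, in order, are as follows.

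\emph{Step 1: choose $\delta$.} Given $\gamma>0$, I would pick $\delta>0$ exactly as in the proof of Lemma~\ref{lemma:intersect_small_set}. That proof uses the Slater point $\tilde\mu$ and the lower hemicontinuity along $\theta_a$ (Theorem 1.6 of \citet{daniilidis2013lower}) to obtain
\[
\max_{x\in\bigcup_{b\in R(\theta_0,a)}B(\theta_0,b)}\ \min_{y\in\bigcup_{b\in R(\theta_0,a)}B(\theta_a,b)}\|x-y\|_2\leq\gamma
\]
simultaneously for all full-dimensional $a$, using finiteness of $A$. If $A\setminus R(\theta_0,a)=\emptyset$, the target set is the whole simplex and there is nothing to prove.

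\emph{Step 2: define $T$.} For a non-critical $\mu$, choose by a fixed ordering an induced action $a$ satisfying the two conditions of Definition~\ref{def:critical_mu}, and set $T(\mu):=\operatorname{proj}\bigl(\mu,\bigcup_{b\in R(\theta_0,a)}B(\theta_a,b)\bigr)$. For a critical $\mu$, set $T(\mu):=\mu$. Since $\mu\in B(\theta_0,a)$, Step 1 gives $\|\mu-T(\mu)\|_2\leq\gamma$. Borel measurability follows as before: the simplex splits into finitely many Borel pieces, cut out by linear inequalities that determine criticality and the selected action, and on each piece $T$ is either a continuous projection or the identity.

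\emph{Step 3: the payoff bound.} By the observation above,
\[
\min_{c\in A^*(\theta_0,T(\mu))}\sum_\omega T(\mu)(\omega)v(c,\omega)\geq \min_{b\in R(\theta_0,a)}\sum_\omega T(\mu)(\omega)v(b,\omega).
\]
Since $v\in[0,1]$, the right-hand side is at least
\[
\min_{b\in R(\theta_0,a)}\sum_\omega\mu(\omega)v(b,\omega)-\|\mu-T(\mu)\|_1 .
\]
Condition 2 of non-criticality makes this minimum over $R(\theta_0,a)$ equal to $\sum_\omega v(a,\omega)\mu(\omega)=V(\theta_0,\mu)$, because $a$ is induced. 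Finally, $\|\cdot\|_1\leq\sqrt{|\Omega|}\,\|\cdot\|_2$, which yields the stated bound $V(\theta_0,\mu)-\sqrt{|\Omega|}\,\gamma$.

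The only genuinely new point relative to Lemma~\ref{lemma:intersect_small_set} is that $A^*(\theta_0,T(\mu))$ contains no action outside $R(\theta_0,a)$. This is exactly why we project onto the region of the strictly perturbed type $\theta_a$ rather than onto $\bigcup_b B(\theta_0,b)$: the $2\delta$ margin gives strict inequalities, so sender-worst tie-breaking cannot select an action outside the class. I expect the main obstacle to be Step 1, namely the approximation of the full region by the shrunken region. I will take it directly from the proof of Lemma~\ref{lemma:intersect_small_set}, which already establishes that lower hemicontinuity.
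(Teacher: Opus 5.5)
Your proof is correct, but its key step differs from the paper's. The paper does not reuse the perturbed type $\theta_a$ here. For each full-dimensional $a$, it fixes a point $\mu_a$ in the relative interior of $B(\theta_0,a)$ and moves a non-critical $\mu$ a distance $\gamma$ along the segment toward $\mu_a$, or onto $\mu_a$ if it is closer than $\gamma$. By the line-segment principle, the image lies in the relative interior. There every comparison with $c\notin R(\theta_0,a)$ is strict, so $A^*(\theta_0,T(\mu))=R(\theta_0,a)$, and the payoff bound follows as in your Step 3.

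You obtain the same strictness differently: from the uniform $2\delta$ margin of $\bigcup_{b\in R(\theta_0,a)}B(\theta_a,b)$, importing the Slater-point and lower-hemicontinuity step of Lemma~\ref{lemma:intersect_small_set}.

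Each route has its advantage. The paper's version is self-contained and elementary: it needs only full-dimensionality and convexity, with no auxiliary $\delta$ and no external theorem. Yours shows that one and the same map $T$ serves both Lemma~\ref{lemma:intersect_small_set} and this lemma. Landing in the shrunken region simultaneously secures an action of $R(\theta_0,a)$ for every nearby type and rules out every non-duplicate best reply at $\theta_0$ itself. This makes the link between robustness to preferences and robustness to tie-breaking explicit.

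The two routes are in fact close. Your Step 1 can be obtained without the lower-hemicontinuity theorem by moving $\mu$ toward the Slater point $\tilde\mu$ with weight $t=\min\{1,\gamma/\sqrt{2}\}$; the simplex has $\ell_2$-diameter $\sqrt{2}$, so the move has length at most $\gamma$. This yields a margin of at least $t\eta$ at $\theta_0$, where $\eta:=\min_{c\notin R(\theta_0,a)}\sum_{\omega}\bigl(u_{\theta_0}(a,\omega)-u_{\theta_0}(c,\omega)\bigr)\tilde\mu(\omega)>0$. Hence any $\delta\leq t\eta/2$ works, which is essentially the paper's radial construction.
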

\begin{proof}
For each action $a$ with $\dim(B(\theta_0,a))=|\Omega|-1$, let $\mu_a$ be a posterior in its relative interior. 
Define the mapping $T$ as follows. For every $\mu\in \Delta(\Omega)$
\begin{enumerate}
    \item if $\mu$ is critical at $\theta_0$, let $T(\mu):=\mu$.
     \item otherwise, choose by a fixed ordering a sender-optimal $a\in A^*(\theta_0,\mu)$ satisfying the two conditions in Definition~\ref{def:critical_mu}, then if $\mu$ belongs to the relative interior of $B(\theta_0,a)$, let $ T(\mu):=  \mu$; otherwise let 
\begin{equation*}
    T(\mu):= \left(\tfrac{\gamma}{||\mu-\mu_a||_2} \mu_a+\tfrac{||\mu-\mu_a||_2-\gamma}{||\mu-\mu_a||_2} \mu \right)\mathbbm{1}_{||\mu_a-\mu||_2\geq \gamma}+ \mu_a(1-\mathbbm{1}_{||\mu_a-\mu||_2\geq \gamma})~.
\end{equation*}
    Thus, $||T(\mu)-\mu||_2\leq \gamma$, and $A^*(\theta_0,T(\mu))=R(\theta_0,a)$.
\end{enumerate}
By an argument similar to that in the proof of Lemma~\ref{lemma:intersect_small_set}, this mapping is Borel measurable and satisfies $\min_{b\in R(\theta_0,a)} \left[ \sum_{\omega\in\Omega} v(b,\omega)T(\mu)(\omega)  \right]\geq V(\theta_0,\mu) -\sqrt{|\Omega|}\gamma$ for every non-critical $\mu$ at $\theta_0$.
\end{proof}
\begin{lemma}
\label{lemma:sufficient_tie}
\label{prop:suffi_tie}
Let $G(\theta_0)$ be a BP model.  
If there exists an optimal signal policy $\pi$ such that all the posterior beliefs in its support are not critical at $\theta_0$, then $G(\theta_0)$ is robust to tie-breaking rules.
\end{lemma}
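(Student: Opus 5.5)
Since sender-worst tie-breaking gives a pointwise lower bound, $\min_{a\in A^*(\theta_0,\mu)}\sum_\omega\mu(\omega)v(a,\omega)\le V(\theta_0,\mu)$, the supremum in Definition~\ref{defi:robust_tie_breaking} is at most $\max_{\pi\in\Pi}W(\theta_0,\pi)$. It therefore suffices to show that for every $\epsilon>0$ there is a policy whose sender-worst expected utility is at least $\max_{\pi\in\Pi}W(\theta_0,\pi)-\epsilon$. The plan mirrors the proof of Proposition~\ref{prop:suffi_all}, with Lemma~\ref{lemma:mapping_tie_1} in place of Lemma~\ref{lemma:intersect_small_set}.

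Fix $\epsilon>0$ and set $\gamma:=\epsilon/(\sqrt{|\Omega|}+1/L_{\mu_0})$. Let $\pi$ be an optimal policy at $\theta_0$ whose support consists of non-critical posteriors. Apply Lemma~\ref{lemma:mapping_tie_1} with this $\gamma$ to obtain a Borel map $T$. For every $\mu\in\operatorname{supp}(\pi)$ we then have $\|T(\mu)-\mu\|_2\le\gamma$ and
\begin{equation*}
V_{\min}(\theta_0,T(\mu))\geq V(\theta_0,\mu)-\sqrt{|\Omega|}\gamma,
\end{equation*}
where $V_{\min}(\theta_0,\nu):=\min_{a\in A^*(\theta_0,\nu)}\sum_\omega\nu(\omega)v(a,\omega)$ denotes the sender-worst indirect utility.

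Next, push $\pi$ forward through $T$ and restore Bayes plausibility. For this I invoke the technical device of Appendix~\ref{app:alt_proofs_geo}, Corollary~\ref{coro:coro_adjust}: it rescales the mass on the perturbed posteriors and adds a counterweight posterior, producing $\pi_T\in\Pi$. The payoff loss relative to $\int g(T(\mu))\pi(d\mu)$ is at most $\gamma/L_{\mu_0}$ for any $[0,1]$-valued payoff function $g$, since the counterweight mass is $O(\gamma/L_{\mu_0})$ and payoffs are nonnegative.

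The one point needing care is that the device is stated for $V(\theta,\cdot)$, not for $V_{\min}(\theta_0,\cdot)$. I expect it only uses boundedness in $[0,1]$, nonnegativity, and measurability of the payoff function. Measurability of $\mu\mapsto V_{\min}(\theta_0,\mu)$ holds because it is piecewise linear on finitely many polyhedral cells. So I would check that the corollary's argument goes through verbatim with $V_{\min}(\theta_0,\cdot)$; this is the main obstacle, though I expect it to be routine. If it does go through, then
\begin{equation*}
\int V_{\min}(\theta_0,\mu)\,\pi_T(d\mu)\ge W(\theta_0,\pi)-\gamma\left(\sqrt{|\Omega|}+1/L_{\mu_0}\right)=\max_{\pi'\in\Pi}W(\theta_0,\pi')-\epsilon.
\end{equation*}
Letting $\epsilon\to0$ gives equality of the supremum with $\max_{\pi\in\Pi}W(\theta_0,\pi)$, which is robustness to tie-breaking.
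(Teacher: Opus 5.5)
Your proposal is correct and follows essentially the same route as the paper's proof. Both use Lemma~\ref{lemma:mapping_tie_1} to perturb the non-critical support, then the adjustment device of Lemma~\ref{lemma:adjustment_policy_defi}, re-deriving the Corollary~\ref{coro:coro_adjust} bound for the sender-worst payoff (which, as you expect, needs only measurability and values in $[0,1]$), plus the trivial reverse inequality; the paper just takes $\gamma_n=1/n$ and a liminf instead of fixing $\epsilon$ directly.
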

\begin{proof}
Consider a sequence $\{\gamma_n=\tfrac{1}{n}\}_{n\in\mathbb{N}^+}$. Let $\pi$ be the optimal signal policy that satisfies the above conditions. For each $n$, applying Lemma~\ref{lemma:mapping_tie_1} and the technical device Lemma~\ref{lemma:adjustment_policy_defi} in Appendix~\ref{app:alt_proofs_geo}, the signal policy $\pi_{T_n}$ satisfies:
    \begin{equation*}
\begin{split}
\int_{\Delta(\Omega)}  \min_{a\in A^*(\theta_0,\mu)}  \left(\sum_{\omega\in \Omega} \mu(\omega) v(a,\omega) \right)  \pi_{T_n}(d\mu) & \geq  W(
\theta_0,\pi) -\sqrt{|\Omega|}\gamma_n-\frac{\gamma_n}{L_{\mu_0}}~.
\end{split}
\end{equation*}
The detailed derivation is similar to that in Corollary~\ref{coro:coro_adjust}. Since $\pi$ is optimal at $\theta_0$,  $ \liminf_{n\rightarrow \infty} \left[\int_{\Delta(\Omega)}  \min_{a\in A^*(\theta_0,\mu)}  \left(\sum_{\omega\in \Omega} \mu(\omega) v(a,\omega) \right)  \pi_{T_n}(d\mu)\right]\geq \max_{\pi'\in \Pi} W(\theta_0,\pi')$.
Thus,  $ \sup_{\pi'\in\Pi}\int_{\Delta(\Omega)}\min_{a\in A^*(\theta_0,\mu)}\left(\sum_{\omega\in\Omega}\mu(\omega)v(a,\omega)\right)\pi'(d\mu)\geq \max_{\pi'\in \Pi} W(\theta_0,\pi')$.
The reverse inequality follows immediately because sender-optimal tie-breaking weakly dominates sender-worst tie-breaking for every signal policy.
\end{proof}

\begin{proof}[Proof of Proposition~\ref{prop:nece_suf_tie_breaking}]
The conditions in Proposition~\ref{prop:suffi_all} are necessary and sufficient for robustness to receiver preferences. Lemma~\ref{prop:suffi_tie} shows that these conditions are sufficient for robustness to tie-breaking rules, whereas Lemmas~\ref{lemma:nece_pseudo_gap} and~\ref{lemma:eq_tie_pseudo} show that they are necessary.
\end{proof}

\begin{proof}[Proof of Lemma~\ref{lemma:tight_tie_breaking}]
    For any $\mu$, the action $a$ selected under $\Sigma$ satisfies  $\dim(B(\theta,a))=|\Omega|-1$ and $\min_{b\in R(\theta,a)} \sum_{\omega\in\Omega}v(b,\omega)\mu(\omega)= \sum_{\omega\in\Omega}v(a,\omega)\mu(\omega)$. 
   As in the proof of  Lemma~\ref{lemma:mapping_tie_1}, choose $T(\mu)$ in the relative interior of $B(\theta,a)$ such that $\|\mu-T(\mu)\|_2\leq\gamma$ and $A^*(\theta,T(\mu))=R(\theta,a)$.
 We then claim that, by an argument similar to that in the proof of Lemma~\ref{lemma:intersect_small_set}, for any $\gamma>0$, there exists a Borel measurable mapping $T:\Delta(\Omega)\rightarrow\Delta(\Omega)$ such that for any $\mu\in \Delta(\Omega)$, $||T(\mu)-\mu||_2\leq \gamma$ and $\min_{a\in A^*(\theta,T(\mu))} \sum_{\omega\in \Omega} v(a,\omega)T(\mu)(\omega) \geq V_{\Sigma}(\theta,\mu)-\sqrt{|\Omega|}\gamma$.
 
    By the same argument applied in the proof of Lemma~\ref{lemma:intersect_small_set}, $ \min_{b\in R(\theta,a)} \left[ \sum_{\omega\in\Omega} v(b,\omega)T(\mu)(\omega)  \right]\geq V_{\Sigma}(\theta,\mu) -\sqrt{|\Omega|}\gamma$.
   The remainder of the proof is analogous to that of Lemma~\ref{prop:suffi_tie}: applying Lemma~\ref{lemma:adjustment_policy_defi}  shows that the sender-worst tie-breaking value can approximate
$\max_{\pi\in\Pi}\int V_{\Sigma}(\theta,\mu)\pi(d\mu)$ arbitrarily closely with the error $\sqrt{|\Omega|}\gamma+\tfrac{\gamma}{L_{\mu_0}}\rightarrow 0$ as $\gamma\rightarrow0$.
The reverse inequality follows immediately because tie-breaking $\Sigma$ weakly dominates sender-worst tie-breaking for every signal policy.
\end{proof}

\begin{example}
\label{example:appendix_noUniform}
Let $\theta_0$ index the BP model in Example~\ref{sec:warm_examples}. Consider a sequence of robust BP models indexed by $\theta_n$, where $n\in\mathbb{N}^+$ and $n\geq 20$, convergent to $\theta_0$ in receiver utilities. The receiver's utilities are shown below. For each $n$, the expected utility under the corresponding optimal policy exceeds $0.5$. Fix $\epsilon=0.01$. For each $n$, let $\delta_n$ be a radius such that for every $\theta\in\operatorname{Ball}_{\delta_n}(\theta_n)$, $ | \max_{\pi\in \Pi} W(\theta,\pi)- \max_{\pi\in \Pi} W(\theta_n,\pi)|<\epsilon$.
\begin{figure}[h!]
  \centering
 \begin{subfigure}[c]{0.45\textwidth}
      \begin{tabular}{c|c|c|c}
\hline
$u_{\theta_0}(a,\omega)$ & \textbf{$a_1$} & { \textbf{$a_2$}}& \textbf{$a_3$}\\ \hline
$\omega=0$ & 0.84 & $ 0.48 $& 0.05   \\ \hline
${\omega}=1$ & 0.72 & 0.84 & $0.84$   \\ \hline
\end{tabular}
    \caption{$\theta_0$-type receiver utility }
    \label{table:sequence_discu2}
    \end{subfigure}
     \begin{subfigure}[c]{0.45\linewidth}
   \begin{tabular}{c|c|c|c}
\hline
$u_{\theta_n}(a,\omega)$ & \textbf{$a_1$} & { \textbf{$a_2$}}& \textbf{$a_3$}\\ \hline
$\omega=0$ & 0.84 & $ 0.48 $& 0.05   \\ \hline
${\omega}=1$ & 0.72 & 0.84 & $0.84+  \tfrac{1}{n}$   \\ \hline
\end{tabular}
    \caption{$\theta_n$-type receiver utility }
    \label{table:sequence_discu1}
  \end{subfigure}\hfill
\end{figure}

If $\delta_n>\tfrac{1}{n}$, then $\theta_0$ lies in the interior of $\operatorname{Ball}_{\delta_n}(\theta_n)$. Hence, there exists an instance $\theta'\in \operatorname{Ball}_{\delta_n}(\theta_n)$ such that $u_{\theta'}(a_3,1)<0.84$ and $ u_{\theta'}(a,\omega)=  u_{\theta_0}(a,\omega) $ for every   $(a,\omega)\neq (a_3,1)$.
By the analysis in Example~\ref{sec:warm_examples}, $ \max_{\pi\in \Pi} W(\theta',\pi)<0.1$.
Since $\max_{\pi\in\Pi}W(\theta_n,\pi)>0.5$, it follows that $|\max_{\pi\in\Pi}W(\theta',\pi)-\max_{\pi\in\Pi}W(\theta_n,\pi)|>0.4>\epsilon$, a contradiction.
Therefore, for every $n$, $\delta_n\leq \tfrac{1}{n} \rightarrow 0$ as $n\rightarrow \infty$. 
\end{example}

\renewcommand{\theequation}{D.\arabic{equation}}
\renewcommand{\theHequation}{D.\arabic{equation}}
\setcounter{equation}{0}
\section{Technical Device and Alternative Proofs}
\label{app:alt_proofs_geo}
Similar constructions appear in, for example,  \citet[proof of Proposition 6]{Kamenica2011BayesianPersuasion}, \citet[proof of Lemma 5]{arieli2023optimal}, \citet[proof of Theorem 1]{lipnowski2024perfect}, and \citet[proof of Lemma C.1]{lin2025generalized}. The latter two use relevant constructions to address robustness questions. The technical difference from them is that we avoid relying on a strictly positive inducibility gap and avoid purely analytical limit arguments.
\begin{lemma}
\label{lemma:far_point_around_u0}
Let $L_{\mu_0}=\min_{\omega\in\Omega} \mu_0(\omega)$ and let $x\in \mathbb{R}^{|\Omega|}$. If $||x-\mu_0||_2\leq L_{\mu_0}$ and if  $\sum_{\omega\in \Omega} x(\omega)=1$, then $x\in\Delta(\Omega)$.
\end{lemma}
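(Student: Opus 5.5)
The plan is to show that every coordinate of $x$ is nonnegative. Combined with $\sum_{\omega\in\Omega}x(\omega)=1$, this gives $x\in\Delta(\Omega)$ directly.

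Fix any state $\omega\in\Omega$. A single coordinate is bounded by the Euclidean norm of the whole vector, so
\begin{equation*}
|x(\omega)-\mu_0(\omega)|\leq \|x-\mu_0\|_2\leq L_{\mu_0}~.
\end{equation*}
Since $L_{\mu_0}\leq \mu_0(\omega)$ by definition of $L_{\mu_0}$ as the minimum, we get
\begin{equation*}
x(\omega)\geq \mu_0(\omega)-L_{\mu_0}\geq 0~.
\end{equation*}
This holds for every $\omega$, so $x$ is a nonnegative vector summing to one, hence a probability distribution on $\Omega$.

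No step here is a real obstacle; the only thing to check is the comparison between a single coordinate and the $\ell_2$ norm. The full-support assumption on $\mu_0$ gives $L_{\mu_0}>0$. That is not needed for this lemma, but it is what makes the lemma useful later: it guarantees a genuine ball around $\mu_0$ of radius $L_{\mu_0}$, within the hyperplane $\sum_\omega x(\omega)=1$, that lies inside the simplex. The counterweight construction in the technical device uses this ball.
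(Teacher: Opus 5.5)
Your proposal is correct and follows essentially the same route as the paper, which also bounds each coordinate as $x(\omega)\geq \mu_0(\omega)-L_{\mu_0}\geq 0$ and then uses the sum constraint. You additionally spell out the bound $|x(\omega)-\mu_0(\omega)|\leq \|x-\mu_0\|_2$, which the paper leaves implicit.
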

\begin{proof}
 $x(\omega)\geq \mu_0(\omega)-L_{\mu_0}\geq 0$  for every $\omega$. So $x\in \Delta(\Omega)$.
\end{proof}

\begin{lemma}
   \label{lemma:adjustment_policy_defi} 
 Let $\pi$ be a signal policy and let $T:\Delta(\Omega)\rightarrow\Delta(\Omega)$ be a Borel measurable map.
Define the {\em adjustment} of $\pi$ under $T$, ${\pi}_{T}$,  by
\begin{equation*}
   {\pi}_{T} (d\mu) =\frac{L_{\mu_0}}{L_{\mu_0}+||r||_2} \int_{x
\in\Delta(\Omega)} \operatorname{Dirac}_{T(x)}(d\mu)\pi(dx) +\frac{||r||_2}{L_{\mu_0}+||r||_2} \operatorname{Dirac}_{\mu_{\operatorname{corr}}}(d\mu)~,
\end{equation*}
where $r:=\int_{\Delta(\Omega)} T(\mu)\pi(d\mu)-\mu_0$; if $r\neq 0$, $\mu_{\operatorname{corr}}:=\mu_0-\frac{L_{\mu_0}}{||r||_2} r$, otherwise, $\mu_{\operatorname{corr}}:=\mu_0$.
Then $\pi_T$ is a signal policy.
\end{lemma}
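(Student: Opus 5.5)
We need to show that $\pi_T$ is a probability measure on $\Delta(\Omega)$ with barycenter $\mu_0$. The plan is to check three things in order: that $\pi_T$ is a well-defined probability measure, that $\mu_{\operatorname{corr}}$ lies in $\Delta(\Omega)$, and that Bayes plausibility holds.

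\emph{Well-definedness.} Since $T$ is Borel measurable, the pushforward $T_\#\pi(\cdot)=\int \operatorname{Dirac}_{T(x)}(\cdot)\,\pi(dx)$ is a Borel probability measure on $\Delta(\Omega)$. The quantity $\int T(\mu)\pi(d\mu)$ is a well-defined element of $\Delta(\Omega)$, because $T$ is bounded and measurable and $\Delta(\Omega)$ is convex and compact. Hence $r$ is a well-defined vector whose coordinates sum to zero. The mixing weights $\tfrac{L_{\mu_0}}{L_{\mu_0}+\|r\|_2}$ and $\tfrac{\|r\|_2}{L_{\mu_0}+\|r\|_2}$ are nonnegative and sum to one, since $L_{\mu_0}>0$ by the full-support prior. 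So $\pi_T$ is a convex combination of two probability measures on $\Delta(\Omega)$, provided the Dirac point $\mu_{\operatorname{corr}}$ lies in $\Delta(\Omega)$.

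\emph{Feasibility of $\mu_{\operatorname{corr}}$.} If $r=0$, then $\mu_{\operatorname{corr}}=\mu_0\in\Delta(\Omega)$. Otherwise:
\begin{itemize}
\item $\sum_\omega \mu_{\operatorname{corr}}(\omega)=1-\tfrac{L_{\mu_0}}{\|r\|_2}\sum_\omega r(\omega)=1$, because the coordinates of $r$ sum to zero.
\item $\|\mu_{\operatorname{corr}}-\mu_0\|_2=L_{\mu_0}$.
\end{itemize}
Lemma~\ref{lemma:far_point_around_u0} then gives $\mu_{\operatorname{corr}}\in\Delta(\Omega)$.

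\emph{Bayes plausibility.} By the change-of-variables formula for pushforwards, $\int\mu\,T_\#\pi(d\mu)=\int T(x)\pi(dx)=\mu_0+r$. The barycenter of $\pi_T$ is therefore
\begin{equation*}
\frac{L_{\mu_0}(\mu_0+r)+\|r\|_2\bigl(\mu_0-\tfrac{L_{\mu_0}}{\|r\|_2}r\bigr)}{L_{\mu_0}+\|r\|_2}=\frac{(L_{\mu_0}+\|r\|_2)\mu_0}{L_{\mu_0}+\|r\|_2}=\mu_0 .
\end{equation*}
In the case $r=0$ the formula reduces to $T_\#\pi$ itself, whose barycenter is $\mu_0$. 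So $\pi_T\in\Pi$.

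The only step needing care is confirming that $\mu_{\operatorname{corr}}$ stays in the simplex. The scaling by $L_{\mu_0}/\|r\|_2$ is exactly what makes Lemma~\ref{lemma:far_point_around_u0} apply, so this step poses no real obstacle. The measure-theoretic points (measurability of the pushforward and the change of variables) are routine.
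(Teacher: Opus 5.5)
Your proof is correct and follows essentially the same route as the paper: you show $\mu_{\operatorname{corr}}\in\Delta(\Omega)$ from its unit coordinate sum and its distance $L_{\mu_0}$ to $\mu_0$ (Lemma~\ref{lemma:far_point_around_u0}), check that the total mass is one, and compute the barycenter directly. Your explicit treatment of the pushforward and of why the coordinates of $r$ sum to zero is slightly more careful than the paper's version, but it changes nothing substantive.
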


\begin{proof}
If $r=0$, it immediately follows that $\pi_T$ is a signal policy.
If $r\neq 0$, we first show that $\mu_{\mathrm{corr}}\in\Delta(\Omega)$. Specifically, $||\mu_{\operatorname{corr}}-\mu_0||_2=L_{\mu_0}$, and
\begin{equation*}
\begin{split}
     \sum_{\omega\in \Omega} \mu_{\operatorname{corr}}(\omega)&= \sum_{\omega\in \Omega} \mu_{0}(\omega)-  \tfrac{L_{\mu_0}}{||r||_2} \sum_{\omega\in \Omega} r(\omega)=1- \tfrac{L_{\mu_0}}{||r||_2} \sum_{\omega\in\Omega}\left(\int_{\Delta(\Omega)} T(\mu)(\omega)\pi(d\mu)-\mu_0(\omega)\right)
     \\
      {}  &=1-\tfrac{L_{\mu_0}}{||r||_2} \left[\int_{\Delta(\Omega)} \left( \sum_{\omega\in\Omega}T(\mu)(\omega)\right)\pi(d\mu)-\left( \sum_{\omega\in\Omega}\mu_0(\omega)\right)\right]=1~.
\end{split}
\end{equation*}
We show that $\pi_T$ is a probability measure and verify Bayes plausibility: 
\begin{equation*}
\pi_T(\Delta(\Omega))=\frac{L_{\mu_0}}{L_{\mu_0}+||r||_2}\int_{x\in\Delta(\Omega)}\pi(dx)+\frac{||r||_2}{L_{\mu_0}+||r||_2}=1 ~.    
\end{equation*}
\begin{equation*}
\begin{split}
\int_{\Delta(\Omega)} \mu \pi_T(d\mu)
        &=
        \frac{L_{\mu_0}}{L_{\mu_0}+||r||_2}
        \int_{\Delta(\Omega)}
        \int_{\Delta(\Omega)}
        \mu \mathrm{Dirac}_{T(x)}(d\mu)\pi(dx)
        +
        \frac{||r||_2}{L_{\mu_0}+||r||_2}
        \mu_{\mathrm{corr}} 
        \\
        {}&= \frac{L_{\mu_0}}{L_{\mu_0}+||r||_2}
        \int_{\Delta(\Omega)}
         T(x)\pi(dx)
        +
        \frac{||r||_2}{L_{\mu_0}+||r||_2}
        \mu_{\mathrm{corr}} 
        \\
        {}&= \frac{L_{\mu_0}}{L_{\mu_0}+||r||_2}
        (\mu_0+r)
        +
        \frac{||r||_2}{L_{\mu_0}+||r||_2}
        \left(
        \mu_0-\frac{L_{\mu_0}}{||r||_2}r
        \right)=\mu_0~.
\end{split}
\end{equation*}
\end{proof}
\begin{corollary}
\label{coro:coro_adjust}
Fix two types, $\theta_1$ and $\theta_2$, some $\gamma>0$ and a signal policy $\pi$. 
If for every $\mu\in \operatorname{supp}(\pi)$, the mapping $T:\Delta(\Omega)\rightarrow\Delta(\Omega)$ satisfies $ ||\mu-T(\mu)||_2\leq \gamma$ and $V(\theta_2,T(\mu))\geq V(\theta_1,\mu)-\sqrt{|\Omega|}\gamma$,
then $\pi_T$ defined in Lemma~\ref{lemma:adjustment_policy_defi} satisfies $ W(\theta_2,\pi_T)\geq W(\theta_1,\pi)  -\sqrt{|\Omega|}\gamma-\frac{\gamma}{L_{\mu_0}}$.
\end{corollary}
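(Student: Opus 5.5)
The plan is a direct computation that bounds the two pieces of $\pi_T$ separately. By Lemma~\ref{lemma:adjustment_policy_defi}, $\pi_T$ is a signal policy, so $W(\theta_2,\pi_T)$ is well defined. Write $\lambda:=L_{\mu_0}/(L_{\mu_0}+\|r\|_2)$. Since $V\geq 0$, we can drop the counterweight term, which gives
\begin{equation*}
W(\theta_2,\pi_T)\geq \lambda\int_{\Delta(\Omega)} V(\theta_2,T(x))\,\pi(dx)~.
\end{equation*}

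Next, apply the hypothesis pointwise on $\operatorname{supp}(\pi)$. The support has full $\pi$-measure, and $x\mapsto V(\theta_2,T(x))$ is measurable because $T$ is Borel and $V$ is upper semicontinuous. The hypothesis then gives
\begin{equation*}
\int V(\theta_2,T(x))\,\pi(dx)\geq W(\theta_1,\pi)-\sqrt{|\Omega|}\gamma~.
\end{equation*}

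It remains to control the rescaling factor $\lambda$. Jensen's inequality for the norm gives
\begin{equation*}
\|r\|_2=\Bigl\|\int (T(x)-x)\,\pi(dx)\Bigr\|_2\leq\int\|T(x)-x\|_2\,\pi(dx)\leq\gamma~,
\end{equation*}
using Bayes plausibility of $\pi$ to rewrite $r$ this way. Hence $1-\lambda=\|r\|_2/(L_{\mu_0}+\|r\|_2)\leq \gamma/L_{\mu_0}$. Write $Z:=W(\theta_1,\pi)-\sqrt{|\Omega|}\gamma$. If $Z\geq 0$, then since $Z\leq 1$ (because $v\in[0,1]$),
\begin{equation*}
\lambda Z\geq Z-(1-\lambda)\cdot 1\geq Z-\gamma/L_{\mu_0}~.
\end{equation*}
If $Z<0$, the claimed bound is negative and holds trivially, since $W\geq 0$.

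Combining these steps yields $W(\theta_2,\pi_T)\geq W(\theta_1,\pi)-\sqrt{|\Omega|}\gamma-\gamma/L_{\mu_0}$. The only delicate point is the bound on $\|r\|_2$ together with the use of the normalization $v\leq 1$ to absorb the loss from $\lambda$. Measurability is already handled by the Borel property of $T$.
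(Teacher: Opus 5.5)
Your proof is correct and follows essentially the same route as the paper. You drop the counterweight using $V\geq 0$, apply the hypothesis on $\operatorname{supp}(\pi)$, bound $\|r\|_2\leq\gamma$ by the norm (Jensen) inequality, and absorb the loss $1-\lambda\leq\gamma/L_{\mu_0}$ using the normalization $v\leq 1$. Your case split on the sign of $Z$ is harmless but unnecessary: $\lambda Z\geq Z-(1-\lambda)$ holds for every $Z\leq 1$, and the paper avoids the split by applying the bound directly to $W(\theta_1,\pi)\leq 1$.
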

\begin{proof} 
Note that $||r||_2\leq \int_{\Delta(\Omega)} ||\mu-T(\mu)||_2\pi(d\mu)\leq \gamma$, $0\leq V(\theta,\mu)\leq 1$ for every $(\theta,\mu)$, and $0\leq W(\theta,\pi)\leq1$ for every $(\theta,\pi)$.
So $\pi_T$ satisfies
\begin{equation*}
\begin{split}
      W(\theta_2,\pi_T)&\geq \frac{L_{\mu_0}}{L_{\mu_0}+||r||_2} \int_{x\in\operatorname{supp}(\pi)} V(\theta_2,T(x))\pi(dx)
\\
{}&\geq \frac{L_{\mu_0}}{L_{\mu_0}+||r||_2} \int_{x\in\operatorname{supp}(\pi)}\left( V(\theta_1,x)- \sqrt{|\Omega|}\gamma\right) \pi(dx)
\\
{}&\geq W(\theta_1,\pi) -\frac{||r||_2}{L_{\mu_0}+||r||_2}- \frac{L_{\mu_0}}{L_{\mu_0}+||r||_2} \sqrt{|\Omega|}\gamma
\geq W(\theta_1,\pi) -\sqrt{|\Omega|}\gamma-\frac{\gamma}{L_{\mu_0}}~.
\end{split}
\end{equation*}
\end{proof}

\subsection{Alternative Proofs for Proposition~\ref{lemma:set_upper_continuous} and Lemma~\ref{lemma:nece_sequence}}
\label{app:alter_detailed_proofs}
\begin{lemma}
\label{lemma:upper_sets_appendix}
Let $G(\theta_0)$ be a BP model. For any $\gamma>0$, there exists $\delta>0$ such that for every $\theta\in \operatorname{Ball}_{\delta}(\theta_0)$ and for every action $a$, (1) $B(\theta,a)\neq \emptyset$ only if $B(\theta_0,a)\neq \emptyset$; (2) if $B(\theta,a)\neq \emptyset$, then $\max_{x\in B(\theta,a)} \min_{y\in B(\theta_0,a)} ||x-y||_2\leq \gamma$.
\end{lemma}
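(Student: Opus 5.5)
The statement is a uniform upper hemicontinuity of $\theta\mapsto B(\theta,a)$ at $\theta_0$, over the finitely many actions. Since $A$ is finite, it suffices to prove both claims for a single fixed action $a$ with some $\delta_a>0$, and then take $\delta:=\min_a\delta_a$. I will argue by contradiction and use compactness of $\Delta(\Omega)$, together with continuity of the maps $(\theta,\mu)\mapsto\sum_{\omega}u_\theta(a,\omega)\mu(\omega)$. This is the same mechanism as in the proof of Lemma~\ref{lemma:V_upper_semi}.

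For claim (1), suppose that for every $n$ there is $\theta_n\in\operatorname{Ball}_{1/n}(\theta_0)$ with $B(\theta_n,a)\neq\emptyset$, while $B(\theta_0,a)=\emptyset$. Pick $\mu_n\in B(\theta_n,a)$. By compactness of $\Delta(\Omega)$, pass to a subsequence with $\mu_n\to\mu$. For each $b\in A$ we have $\sum_\omega u_{\theta_n}(a,\omega)\mu_n(\omega)\ge\sum_\omega u_{\theta_n}(b,\omega)\mu_n(\omega)$. These inequalities are non-strict and both sides are jointly continuous in $(\theta,\mu)$, so they survive in the limit: $\sum_\omega u_{\theta_0}(a,\omega)\mu(\omega)\ge\sum_\omega u_{\theta_0}(b,\omega)\mu(\omega)$ for all $b$. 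Hence $\mu\in B(\theta_0,a)$, a contradiction.

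For claim (2), the argument is the same. Suppose there are $\gamma>0$, $\theta_n\to\theta_0$ and $x_n\in B(\theta_n,a)$ with $\min_{y\in B(\theta_0,a)}\|x_n-y\|_2>\gamma$. This minimum is attained because $B(\theta_0,a)$ is compact, and it is nonempty by claim (1) once $n$ is large. Extract $x_n\to x$. The limit argument above gives $x\in B(\theta_0,a)$. The distance function to a fixed nonempty compact set is continuous, so $\operatorname{dist}(x_n,B(\theta_0,a))\to 0$, contradicting the bound $>\gamma$. The maximum over $x\in B(\theta,a)$ in the statement is attained because $B(\theta,a)$ is compact and the distance is continuous.

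The only subtle point is the bookkeeping of the quantifiers. Claim (2) needs $B(\theta_0,a)\neq\emptyset$ for the distance to be finite, so I apply claim (1) first to get a $\delta$ that works for both. I also have to make the choice of $\delta$ uniform over actions, which finiteness of $A$ handles. An alternative route would be to cite the Painlev\'e--Kuratowski characterization in Definition~\ref{def:set_convergence}: a closed-valued correspondence defined by finitely many non-strict continuous inequalities is outer semicontinuous, and the characterization then translates this into exactly (1) and (2). I prefer the direct sequential argument because it is self-contained.
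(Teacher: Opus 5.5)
Your argument is correct and matches the paper's for claim (2): the paper invokes closed graph $\Rightarrow$ upper hemicontinuity (Rockafellar--Wets, Theorem 5.7) together with the $\epsilon$-characterization recorded in Definition~\ref{def:set_convergence}, which is exactly what your sequential-compactness contradiction proves by hand. For claim (1), the paper instead uses continuity in $\theta$ of the margin $f(\theta,a)=\max_{\mu\in\Delta(\Omega)}\min_{b\neq a}\sum_{\omega}(u_\theta(a,\omega)-u_\theta(b,\omega))\mu(\omega)$, noting that $B(\theta,a)=\emptyset$ if and only if $f(\theta,a)<0$, whereas your limit-point argument handles (1) by the same closed-graph mechanism as (2); both routes are valid.
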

\begin{proof}
    Consider the function $f(\theta,a)=\max_{\mu\in\Delta(\Omega)} \min_{b\in A, b\ne a} \left[ \sum_{\omega\in \Omega} (u_{\theta}(a,\omega)-u_{\theta}(b,\omega) )\mu(\omega)  \right]$ for each $a$. $f$ is continuous in $\theta$. For every $\theta$, $B(\theta,a)=\emptyset$ if and only if $f(\theta,a)<0$. 
   Suppose that $B(\theta_0,a)=\emptyset$. Then, by continuity, there exists $\delta_a>0$ such that $f(\theta,a)<0$ whenever $\theta\in\operatorname{Ball}_{\delta_a}(\theta_0)$. Because $A$ is finite, we choose the smallest required $\delta$, which completes the proof of the first statement.

Each $B(\theta,a)$ is defined by finitely many weak linear inequalities with coefficients continuous in $\theta$. Thus, for every $\theta_n\to\theta$ and $\mu_n\to\mu$ with $\mu_n\in B(\theta_n,a)$, $\mu\in B(\theta,a)$: the correspondence has a closed graph, hence is upper hemicontinuous \citep[Chapter 5, Theorem 5.7]{rockafellar1998variational}. Definition~\ref{def:set_convergence} and the finiteness of $A$ complete the proof of the second statement.
\end{proof}

The following direct corollary of the previous lemma---parallel to Lemma~\ref{lemma:intersect_small_set}, selecting a sender-optimal induced action by the fixed ordering and projecting onto its baseline best-reply region---and Corollary~\ref{coro:coro_adjust} complete the proof of Proposition~\ref{lemma:set_upper_continuous}.

\begin{corollary}
Let $G(\theta_0)$ be an arbitrary BP model. For any $\gamma>0$, there exists $\delta>0$ such that for any 
$\theta\in \operatorname{Ball}_{\delta}(\theta_0)$, there exists a Borel measurable mapping $T:\Delta(\Omega)\to\Delta(\Omega)$ satisfying for every $\mu\in\Delta(\Omega)$, $||\mu-T(\mu)||_2\leq \gamma$ and $V(\theta_0,T(\mu))\geq V(\theta,\mu)-\sqrt{|\Omega|}\gamma$.
\end{corollary}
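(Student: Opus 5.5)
Fix $\gamma>0$ and take the $\delta>0$ given by Lemma~\ref{lemma:upper_sets_appendix}. For any $\theta\in\operatorname{Ball}_{\delta}(\theta_0)$ and every action $a$, a nonempty $B(\theta,a)$ forces $B(\theta_0,a)\neq\emptyset$. Moreover, every point of $B(\theta,a)$ lies within distance $\gamma$ of $B(\theta_0,a)$. The mapping $T$ will be built from this, in parallel with the construction in the proof of Lemma~\ref{lemma:intersect_small_set}. The difference is that here we project onto baseline regions rather than perturbed ones, and the dependence of $T$ on the fixed $\theta$ is allowed.

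\textbf{Construction of $T$.} Fix an ordering of $A$. For each $\mu\in\Delta(\Omega)$, select by this ordering an induced action $a_\mu$ at $(\theta,\mu)$. Thus $a_\mu\in A^*(\theta,\mu)$ and $\sum_{\omega}v(a_\mu,\omega)\mu(\omega)=V(\theta,\mu)$. Define
\[
T(\mu):=\operatorname{proj}\bigl(\mu,B(\theta_0,a_\mu)\bigr).
\]
This projection is well defined and unique, because $B(\theta_0,a_\mu)$ is nonempty (by part (1) of Lemma~\ref{lemma:upper_sets_appendix}), compact and convex. Part (2) then gives $\|\mu-T(\mu)\|_2\le\gamma$.

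\textbf{Payoff bound.} Since $T(\mu)\in B(\theta_0,a_\mu)$, we have $a_\mu\in A^*(\theta_0,T(\mu))$, and hence
\[
V(\theta_0,T(\mu))\ge\sum_{\omega}v(a_\mu,\omega)T(\mu)(\omega).
\]
Because $v\in[0,1]$, the right-hand side is at least
\[
V(\theta,\mu)-\sum_\omega|\mu(\omega)-T(\mu)(\omega)|\ge V(\theta,\mu)-\|\mu-T(\mu)\|_1 .
\]
By Cauchy--Schwarz, $\|\cdot\|_1\le\sqrt{|\Omega|}\,\|\cdot\|_2$, so this is at least $V(\theta,\mu)-\sqrt{|\Omega|}\gamma$. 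This is exactly the required inequality.

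\textbf{Measurability.} This is the only step needing care, and I expect it to be the main obstacle, though a mild one. For fixed $\theta$, the map $\mu\mapsto a_\mu$ is determined by finitely many weak and strict linear inequalities. These are the receiver comparisons defining $A^*(\theta,\mu)$, the sender comparisons selecting the induced action, and the ordering tie-break. So $\Delta(\Omega)$ splits into finitely many Borel pieces, each a finite Boolean combination of half-spaces, on which $a_\mu$ is constant. On each piece $T$ is the projection onto a fixed compact convex set, which is $1$-Lipschitz and hence continuous. Therefore $T$ is Borel measurable, as in the proof of Lemma~\ref{lemma:intersect_small_set}.

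Note that $\delta$ depends only on $\gamma$ and $\theta_0$, through Lemma~\ref{lemma:upper_sets_appendix}, and not on $\theta$. This matches the quantifier order of the statement.
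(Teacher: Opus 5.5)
Your proposal is correct and follows essentially the same route the paper indicates. You fix $\delta$ via Lemma~\ref{lemma:upper_sets_appendix}, select a sender-optimal induced action at $(\theta,\mu)$ by a fixed ordering, and project $\mu$ onto that action's baseline region $B(\theta_0,a_\mu)$; the payoff bound and the measurability argument then mirror the proof of Lemma~\ref{lemma:intersect_small_set}.
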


To obtain a similar argument that completes the proof of Lemma~\ref{lemma:nece_sequence}, note that
\begin{lemma}
    \label{lemma:removal_lower_empty}
    Let $G(\theta)$ be an arbitrary BP model. Define $ A_{\operatorname{remain}}(\theta):=\{a\in A: B(\theta,a)\neq \emptyset \text{ and } \dim(B(\theta,a))=|\Omega|-1\}$.
Then for every $a\in  A$ and for any $A'$ such that $ A_{\operatorname{remain}}(\theta)\subseteq A'\subseteq A$,  $ B(\theta,a)=\{\mu\in \Delta(\Omega): \sum_{\omega\in\Omega} u_{\theta}(a,\omega)\mu(\omega)\geq \sum_{\omega\in\Omega} u_{\theta}(b,\omega)\mu(\omega), \forall b\in  A' \}$.
\end{lemma}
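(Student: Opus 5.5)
We prove the two inclusions separately. Fix $a$ and $A'$ with $A_{\operatorname{remain}}(\theta)\subseteq A'\subseteq A$. Let $B'(a)$ denote the right-hand side, i.e.\ the set of $\mu$ at which $a$ is weakly better than every $b\in A'$.

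The inclusion $B(\theta,a)\subseteq B'(a)$ is immediate. Membership in $B(\theta,a)$ means that $a$ is weakly better than every $b\in A$, and hence in particular than every $b\in A'\subseteq A$.

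For the reverse inclusion, take $\mu\in B'(a)$ and any $c\in A$; we must show that $a$ is weakly better than $c$ at $\mu$. We invoke Lemma~\ref{lemma:lower_dimensional_set_subset}, whose second statement says that $\bigcup_{b\in A_{\operatorname{remain}}(\theta)}B(\theta,b)=\Delta(\Omega)$. So there is some $b^*\in A_{\operatorname{remain}}(\theta)$ with $\mu\in B(\theta,b^*)$, meaning $b^*$ is a best reply at $\mu$ among all of $A$. Since $b^*\in A_{\operatorname{remain}}(\theta)\subseteq A'$ and $\mu\in B'(a)$, we obtain
\begin{equation*}
\sum_{\omega\in\Omega}u_\theta(a,\omega)\mu(\omega)\geq \sum_{\omega\in\Omega}u_\theta(b^*,\omega)\mu(\omega)\geq \sum_{\omega\in\Omega}u_\theta(c,\omega)\mu(\omega)
\end{equation*}
for every $c\in A$. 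Hence $a\in A^*(\theta,\mu)$, that is, $\mu\in B(\theta,a)$.

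This argument works uniformly whether or not $a$ itself lies in $A'$ or in $A_{\operatorname{remain}}(\theta)$, and whether or not $B(\theta,a)$ is empty. If $B(\theta,a)=\emptyset$, the reverse inclusion forces $B'(a)=\emptyset$ as well, so the identity still holds.

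The only real input is the covering property from Lemma~\ref{lemma:lower_dimensional_set_subset}: the full-dimensional regions alone cover the simplex. That lemma is already proved in the paper, so there is no genuine obstacle here. The one point to check is that the witness $b^*$ belongs to $A'$, and this is exactly what the hypothesis $A_{\operatorname{remain}}(\theta)\subseteq A'$ guarantees.
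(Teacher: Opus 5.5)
Your proof is correct and is essentially the paper's argument. The only difference is direction: the paper argues contrapositively, noting that a strictly better best reply $c$ at $\mu$ is either full-dimensional or, by Lemma~\ref{lemma:lower_dimensional_set_subset}, lies inside a full-dimensional region. You instead use that lemma's covering consequence directly to obtain a full-dimensional best reply $b^*\in A'$ at $\mu$.
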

\begin{proof}
The left-hand side is contained in the displayed right-hand side because $A'\subseteq A$. Conversely, if $\mu\notin B(\theta,a)$, some best reply $c$ at $\mu$ strictly dominates $a$. Either $B(\theta,c)$ is full-dimensional or, by Lemma~\ref{lemma:lower_dimensional_set_subset}, it is contained in a full-dimensional $B(\theta,b)$. In either case, some $b\in A_{\operatorname{remain}}(\theta)\subseteq A'$ strictly dominates $a$ at $\mu$, so $\mu$ also lies outside the right-hand side.
\end{proof}
For the virtual type (Definition~\ref{def:pseudo_type}) and the sequence (Lemma~\ref{lemma:nece_sequence}) in Section~\ref{sec:nece_example}:
\begin{lemma}
\label{lemma:nece_corre_alter}
   Let $G(\theta_0)$ be a BP model. For any $\gamma>0$, there exists $N$ such that for every $n\geq N$ and for every action $a$, (1) $B(\theta_n,a)\neq \emptyset$ only if $B(\sigma(\theta_0),a)\neq \emptyset$; (2) if $B(\theta_n,a)\neq \emptyset$, then $\max_{x\in B(\theta_n,a)} \min_{y\in B(\sigma(\theta_0),a)} ||x-y||_2\leq \gamma$.
\end{lemma}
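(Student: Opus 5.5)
The plan is to argue by contradiction with compactness, mirroring the closed-graph argument in Lemma~\ref{lemma:upper_sets_appendix}. The key intermediate claim is a graph-closedness property along the specific sequence of Lemma~\ref{lemma:nece_sequence}:
\begin{quote}
if $\mu_n\in B(\theta_n,a)$ along a subsequence and $\mu_n\to\mu$, then $\mu\in B(\sigma(\theta_0),a)$.
\end{quote}
This is essentially already established inside the proof of Lemma~\ref{lemma:nece_sequence}. There it was shown that for every $\mu$ and every $x_n\to 0$, $\mu_n\to\mu$, we eventually have $A^*(\theta(x_n),\mu_n)\subseteq A^*(\sigma(\theta_0),\mu)$. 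Since $a\in A^*(\theta_n,\mu_n)$ means exactly $\mu_n\in B(\theta_n,a)$, and $\theta_n=\theta(1/n)$, we get $a\in A^*(\sigma(\theta_0),\mu)$, that is, $\mu\in B(\sigma(\theta_0),a)$.

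I would re-derive this step explicitly rather than cite it.
\begin{itemize}
\item \textbf{Non-full-dimensional actions.} If $a\notin A_{\operatorname{res}}$, Lemma~\ref{lemma:nece_remove_clear} gives $B(\theta_n,a)=\emptyset$ for every $n$, so both (1) and (2) hold trivially for such $a$.
\item \textbf{Full-dimensional actions.} For $a\in A_{\operatorname{res}}$, I would use Lemma~\ref{lemma:removal_lower_empty} to restrict comparisons to $A_{\operatorname{res}}$. I expect this to apply with $A'=A_{\operatorname{res}}$ at $\theta_n$, because actions outside $A_{\operatorname{res}}$ are strictly dominated at $\theta_n$.
\item \textbf{Comparisons with other duplicate classes.} For $b\in A_{\operatorname{res}}\setminus R(\theta_0,a)$, the inequality defining $B(\theta_n,a)$ has coefficients converging to $u_{\theta_0}(a,\cdot)-u_{\theta_0}(b,\cdot)$. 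The perturbation terms $2/n-v/n$ vanish, so the weak inequality passes to the limit.
\item \textbf{Comparisons within the duplicate class.} For $b\in R(\theta_0,a)$, the $u_{\theta_0}$ terms cancel exactly. The inequality $\sum_\omega u_{\theta_n}(a,\omega)\mu(\omega)\ge \sum_\omega u_{\theta_n}(b,\omega)\mu(\omega)$ is then equivalent, after multiplying by $n$, to $\sum_\omega v(a,\omega)\mu(\omega)\le\sum_\omega v(b,\omega)\mu(\omega)$. This condition is independent of $n$, so it is closed and survives the limit.
\end{itemize}
Together these show that $\mu$ satisfies every inequality defining $B(\sigma(\theta_0),a)$, using the description of $A^*(\sigma(\theta_0),\mu)$ derived in the proof of Lemma~\ref{lemma:nece_sequence}.

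Given the claim, both statements follow by contradiction, with the finiteness of $A$ letting us fix a single action along a subsequence.
\begin{itemize}
\item \textbf{Statement (1).} Suppose $B(\sigma(\theta_0),a)=\emptyset$ but $B(\theta_n,a)\ne\emptyset$ for infinitely many $n$. Pick $\mu_n\in B(\theta_n,a)$. Compactness of $\Delta(\Omega)$ gives a convergent subsequence, and its limit lies in $B(\sigma(\theta_0),a)$ by the claim, which is a contradiction.
\item \textbf{Statement (2).} Suppose that for some $\gamma>0$ there are infinitely many $n$ and points $x_n\in B(\theta_n,a)$ with $\operatorname{dist}(x_n,B(\sigma(\theta_0),a))>\gamma$. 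Pass to a convergent subsequence $x_n\to x$. The claim gives $x\in B(\sigma(\theta_0),a)$, but continuity of the distance function gives $\operatorname{dist}(x,B(\sigma(\theta_0),a))\ge\gamma$, which is a contradiction.
\end{itemize}
The maxima and minima in (2) are attained because $B(\theta_n,a)$ and $B(\sigma(\theta_0),a)$ are compact, being finite intersections of closed half-spaces with the simplex. Finally, take $N$ as the maximum of the finitely many per-action thresholds.

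The main obstacle is the duplicate-class step, which needs the exact cancellation of the $u_{\theta_0}$ terms. It fails if the relevant duplicate class is taken with respect to $\theta_n$ rather than $\theta_0$. I therefore have to be careful to index $R(\cdot,a)$ at $\theta_0$ throughout, and to check that the reduction to $A'=A_{\operatorname{res}}$ is legitimate at $\theta_n$. For that check I would cite Lemma~\ref{lemma:nece_remove_clear} directly, since it asserts strict domination of every action outside $A_{\operatorname{res}}$ at every belief.
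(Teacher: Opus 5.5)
Your proposal is correct and follows the paper's route. You discard the non-full-dimensional actions via Lemma~\ref{lemma:nece_remove_clear} and use Lemma~\ref{lemma:removal_lower_empty} to write $B(\theta_n,a)$ as cross-class $u_{\theta_n}$-inequalities, which converge to their $u_{\theta_0}$ counterparts, plus $n$-independent $v$-inequalities within $R(\theta_0,a)$; you then conclude by the closed-graph/compactness argument of Lemma~\ref{lemma:upper_sets_appendix}. The only difference is that you spell out the subsequence-contradiction step that the paper leaves to ``the same reasoning.''
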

\begin{proof}
    For any $a\in A\setminus  A_{\operatorname{remain}}(\theta_0)$, $B(\sigma(\theta_0),a)=\emptyset$.
    Recall from the proof of Lemma~\ref{lemma:nece_sequence} that, for any $n$ and for any $a\in A\setminus  A_{\operatorname{remain}}(\theta_0)$, $B(\theta_n,a)=\emptyset$, so $A_{\operatorname{remain}}(\theta_n) \subseteq  A_{\operatorname{remain}}(\theta_0)$.

    For every $a\in  A_{\operatorname{remain}}(\theta_0)$ and for every $b\in R(\theta_0,a)$, $u_{\theta_0}(a,\omega)=u_{\theta_0}(b,\omega)$ for every $\omega$. Thus, the corresponding inequality is simply $0\geq0$ and does not contribute to the definition of the set.
    By Definition~\ref{def:pseudo_type} and applying the previous lemma:
    \begin{equation*}
\begin{split}
    B(\sigma(\theta_0), a)&=\{\mu\in B(\theta_0,a): \sum_{\omega\in\Omega} v(a,\omega)\mu(\omega)\leq \sum_{\omega\in\Omega} v(b,\omega)\mu(\omega), \quad \forall b \in   R(\theta_0,a)\}~,
    \\
    {}&= \left\{\mu\in \Delta(\Omega):
\begin{aligned}
&\sum_{\omega\in\Omega} u_{\theta_0}(a,\omega)\mu(\omega)\geq \sum_{\omega\in\Omega} u_{\theta_0}(b,\omega)\mu(\omega),&&\forall b \in  A_{\operatorname{remain}}(\theta_0)\setminus  R(\theta_0,a) 
  \\
&\sum_{\omega\in\Omega} v(a,\omega)\mu(\omega)\leq \sum_{\omega\in\Omega} v(b,\omega)\mu(\omega), &&\forall b\in R(\theta_0,a) 
\end{aligned}
\right\}.   
\end{split}
\end{equation*}
Along the sequence, because $A_{\operatorname{remain}}(\theta_n)\subseteq A_{\operatorname{remain}}(\theta_0)$,
\begin{equation*}
\begin{split}
    B(\theta_n, a)&=\{\mu\in \Delta(\Omega): \sum_{\omega\in\Omega} u_{\theta_n}(a,\omega)\mu(\omega)\geq \sum_{\omega\in\Omega} u_{\theta_n}(b,\omega)\mu(\omega), \quad \forall b \in A\}~,
    \\
    &=\{\mu\in \Delta(\Omega): \sum_{\omega\in\Omega} u_{\theta_n}(a,\omega)\mu(\omega)\geq \sum_{\omega\in\Omega} u_{\theta_n}(b,\omega)\mu(\omega), \quad \forall b \in A_{\operatorname{remain}}(\theta_0)\}~,
    \\
    {}&= \left\{\mu\in \Delta(\Omega):
\begin{aligned}
&\sum_{\omega\in\Omega} u_{\theta_n}(a,\omega)\mu(\omega)\geq \sum_{\omega\in\Omega} u_{\theta_n}(b,\omega)\mu(\omega),&&\forall b \in  A_{\operatorname{remain}}(\theta_0)\setminus  R(\theta_0,a) 
  \\
&\sum_{\omega\in\Omega} v(a,\omega)\mu(\omega)\leq \sum_{\omega\in\Omega} v(b,\omega)\mu(\omega), &&\forall b\in R(\theta_0,a) 
\end{aligned}
\right\}.   
\end{split}
\end{equation*}
The remainder of the argument follows the same steps as the alternative proof of Proposition~\ref{lemma:set_upper_continuous}. First, the set convergence result $\limsup_{n\to\infty} B(\theta_n,a)\subseteq B(\sigma(\theta_0),a)$ follows by the same reasoning as in the proof of Lemma~\ref{lemma:upper_sets_appendix}. Then, we can construct the measurable mapping. Corollary~\ref{coro:coro_adjust} then completes the proof.
\end{proof}

\bibliographystyle{ims}
\bibliography{reference}

\end{document}